\documentclass[article,aps,pra,11pt,tightenlines]{revtex4-2}
\usepackage{graphicx,epsfig}
\usepackage{amsmath}
\usepackage{amsfonts}
\usepackage{braket}
\usepackage{bbm}
\usepackage{dsfont}

\usepackage{physics}
\usepackage[dvipsnames]{xcolor}
\usepackage{algorithm2e} 
\usepackage{hyperref}
\RestyleAlgo{ruled}
\usepackage{amsthm}
\usepackage{overpic}

\makeatletter
\renewcommand{\p@subsection}{}
\renewcommand{\p@subsubsection}{}
\makeatother

\newcommand{\CPd}{\mathbb{CP}^{d-1}}
\DeclareMathOperator*{\argmin}{\arg\!\min}

\DeclareMathOperator{\EX}{\mathbb{E}}

\newcommand{\hil}{\mathcal{H}} 

\def\<{\langle}
\def\>{\rangle}

\newtheorem{theorem}{Theorem}
\newtheorem{lemma}[theorem]{Lemma}
\newtheorem{proposition}[theorem]{Proposition}

\newcommand{\REGRET}{\textnormal{Regret}}

\begin{document}


\title{\large Learning Pure Quantum States in Any Dimension (Almost) Without Regret}

\author{Josep Lumbreras$^{1,2}$} 
\email{josep.lz@ntu.edu.sg}
\author{Marco Tomamichel$^{1,3}$}

\affiliation{$^1$Centre for Quantum  Technologies, National University of Singapore, Singapore}
\affiliation{$^2$Nanyang Quantum Hub, School of Physical and Mathematical Sciences, Nanyang Technological University, Singapore}
\affiliation{$^3$Department of Electrical and Computer Engineering, National University of Singapore, Singapore}

\begin{abstract}
We extend quantum state tomography with minimal cumulative disturbance, first investigated in [arXiv:2406.18370], to arbitrary finite-dimensional pure states. A learner sequentially receives fresh copies of an unknown pure state, chooses a rank-one projector for each copy using the previous outcomes, and performs the corresponding two-outcome projective measurement. The goal is to learn the state while keeping the chosen projectors close to the unknown state in order to minimize disturbance. 
The qubit solution relies on the special geometry of the Bloch sphere and does not extend directly to qudits, where pure states form a curved manifold. We show that this obstruction can be overcome by working locally on the pure-state manifold. The algorithm proceeds in epochs. In each epoch, it fixes a current estimate, measures pairs of nearby rank-one projectors obtained by moving in opposite tangent directions, and takes differences of the corresponding outcomes. This gives an exact linear observation of the tangent component of the error.
The resulting local linear models are combined with a robust variance-adaptive estimator and a hot-start regularization that transfers precision across epochs. For every unknown pure state in dimension \(d\), after \(T\) measured copies, our protocol achieves cumulative regret \(\mathcal{O}(d^3\log^2 T)\), and at each intermediate time \(t\leq T\) its current estimate has online infidelity \(\mathcal{O}(d^3\log(T)/t)\). Hence, pure-state tomography with essentially no cumulative disturbance is not a peculiarity of qubits but a geometric phenomenon that persists for qudits.
\end{abstract}

\maketitle

\section{Introduction}

Quantum state tomography is usually formulated as the task of producing an accurate
classical description of an unknown quantum state from many independently prepared
copies. In the pure-state multi-armed quantum bandit problem (PSMAQB), introduced
in~\cite{lumbreras22bandit,lumbreras2024learning}, tomography is studied in a sequential setting
where the measurements themselves are part of the objective. At each round, the learner receives a fresh copy of an unknown pure state
$\rho=|\psi\rangle\!\langle\psi|$, chooses a rank-one projector
$A_t=|\phi_t\rangle\!\langle\phi_t|$, and performs the two-outcome
measurement $\{A_t,\mathbb I_d-A_t\}$. The expected reward is
$\operatorname{Tr}(\rho A_t)$, and the cumulative regret is
\begin{align}
    \REGRET(T)
    :=
    \sum_{t=1}^T
    \left(
        1-\operatorname{Tr}(\rho A_t)
    \right).
\end{align}
Equivalently, since both $\rho$ and $A_t$ are pure states,
\begin{align}
    1-\operatorname{Tr}(\rho A_t)
    =
    \frac{1}{2}\|\rho-A_t\|_F^2,
\end{align}
where $\|\cdot\|_F$ denotes the Frobenius norm.
Thus the regret measures how far, on average and over time, the chosen
measurements are from the unknown state. For rank-one projectors,
this quantity also controls the expected post-measurement disturbance of the
consumed copies. Indeed, if $q_t=\operatorname{Tr}(\rho A_t)$, the expected post-measurement
infidelity at round $t$ is $2q_t(1-q_t)$, while the regret term is
$1-q_t$; hence the cumulative expected disturbance is at most
$2\REGRET(T)$. The same additive objective also appears in state-agnostic
work extraction, where it has the thermodynamic interpretation of cumulative
dissipated energy caused by suboptimal control directions~\cite{lumbreras25dissipation}.

For qubits, the pure-state manifold is the Bloch sphere, and the local geometry can be
handled with two-dimensional Euclidean coordinates. For qudits, the pure-state manifold is
$\mathbb{CP}^{d-1}$, of real dimension $2(d-1)$, while the ambient space of trace-one
Hermitian matrices has dimension $d^2-1$. Thus the qudit problem is not obtained by
replacing the Bloch vector with a generalized Bloch vector. A naive linear-bandit
construction~\cite{lin1,lin2,lin3,pmlr-v247-lumbreras24a} in the ambient space introduces normal directions. These directions
are harmless in the qubit argument, but become a genuine obstruction in higher dimension.
Our main contribution is a new adaptive algorithm that works intrinsically with the geometry
of the pure-state manifold.

The main result of this work is that this geometric obstruction can be overcome.

\textbf{Main result (informal theorem).}
For every finite dimension $d\ge 2$, there is an adaptive protocol which learns any unknown $d$-dimensional qudit pure state
$\rho=|\psi\rangle \! \bra{\psi}$ using fresh copies of $\rho$ and two-outcome measurements defined by rank-one projectors. For any number of copies 
$T$, its expected cumulative regret satisfies
\begin{equation}
    \mathbb E\!\left[\mathrm{Regret}(T)\right]
    =
    \mathcal O\!\left(d^3\log^2 T\right).
\end{equation}
 Moreover, the algorithm outputs an online estimate at each round $t\in[T]$ that has infidelity
$\widetilde{\mathcal O}(d^3/t)$. Thus, for fixed dimension, online pure-state tomography can be
performed at the optimal $1/t$ infidelity while the cumulative disturbance grows only polylogarithmically in
the number of samples.

\textbf{Protocol overview.}
The algorithm proceeds in epochs. It first runs a short warm-up tomography routine to obtain
a constant-accuracy pure-state estimate $C_1$. At the beginning of each epoch $m$, the
algorithm fixes the current base state $C_m$ and works in the tangent space
$T_{C_m}\mathcal M$ of the pure-state manifold. Along an orthonormal tangent basis, it
constructs pairs of rank-one projectors by retracting in the positive and
negative tangent directions. Taking the difference of the two binary outcomes gives an
exact local linear observation of the tangent component of $\rho-C_m$. This removes the
curvature bias that would appear from using only one-sided measurements.

The tangent observations are combined through a variance-adaptive least-squares estimator.
The weights put more emphasis on directions with smaller variance, which is what allows the
regret analysis to exploit the decreasing statistical noise as we measure close to the unknown state. Since the observations are generated sequentially and the weights depend on past estimates, the estimator is made robust by a Median-of-Means construction. At the end
of the epoch, the estimated tangent displacement is retracted back to the pure-state
manifold, producing the next base state $C_{m+1}$. The statistical precision accumulated in
one epoch is transferred to the next through a hot-start regularization, rather than by
reusing tangent vectors from past tangent spaces.

Technically, the proof relies on three ingredients. First, symmetric retractions give an exact
linear model in the current tangent space. Second, hot-start regularization transfers
precision across changing tangent spaces. Third, a variance-adaptive Median-of-Means
estimator controls the sequential noise uniformly across epochs.

\textbf{Organization.}
The rest of the work is organized as follows. Section~\ref{sec:qudit_pre}
introduces the geometric ingredients needed in dimension $d>2$: the pure-state
manifold, its tangent spaces, the tangent projection, and the local retraction map.
Section~\ref{sec:linear_tangent_model} derives the exact tangent-space linear model
obtained from symmetric retracted measurements. Section~\ref{sec:action_selection_qudit} defines the measurement-selection rule used by the algorithm and analyzes the resulting design superoperator. The measurements are chosen along tangent directions determined by the current design superoperator, and the symmetric choice of directions keeps the design isotropic, so that its single eigenvalue grows at a controlled rate within each epoch. Section~\ref{sec:mom_qudit}
proves the variance-adaptive Median-of-Means estimate used inside each epoch.
Section~\ref{sec:qudit_algorithm} presents the full epoch-based algorithm, including the
warm-up step and the hot-start transfer of scalar precision across epochs.
The final part of Section~\ref{sec:qudit_algorithm} combines all the previous tools and results to prove 
the main regret and online infidelity guarantees. Finally, Section~\ref{sec:discussion} discusses open problem and potential applications of the framework.

\section{Qudit Generalization: Preliminaries}\label{sec:qudit_pre}

\subsection{Notation and conventions}

Let $[t]=\{1,\ldots,t\}$ for $t\in\mathbb{N}$. For real vectors
$x,y\in\mathbb{R}^m$ we write
$\langle x,y\rangle=\sum_{i=1}^m x_i y_i$ and denote the Euclidean norm by
$\|x\|_2$. If $A\in\mathbb{R}^{m\times m}$ is positive semidefinite, we write
$\|x\|_A^2=\langle x,Ax\rangle$. For a real symmetric matrix $A$, its
eigenvalues are ordered as
$\lambda_1(A)\leq \cdots \leq \lambda_m(A)$, with
$\lambda_{\min}(A)=\lambda_1(A)$ and
$\lambda_{\max}(A)=\lambda_m(A)$. For a random variable $X$, we denote its
expectation and variance by $\mathbb{E}[X]$ and $\operatorname{Var}(X)$,
respectively.

Throughout, $d$ denotes the Hilbert-space dimension and
$\hil=\mathbb{C}^d$. Let $\mathbb{H}_d$ be the real vector space of
$d\times d$ Hermitian matrices, and let
$\mathcal{S}_d=\{\rho\in\mathbb{H}_d:\rho\geq 0,\operatorname{Tr}(\rho)=1\}$
be the set of quantum states. We denote by
$\mathcal{S}_d^*=\{\rho\in\mathcal{S}_d:\rho^2=\rho\}$ the set of pure states,
or rank-one projectors. The identity matrix is denoted by $\mathbb{I}_d$, or
simply $\mathbb{I}$ when the dimension is clear. We equip $\mathbb H_d$ with the Frobenius, equivalently Hilbert--Schmidt,
inner product $
    \langle A,B\rangle := \operatorname{Tr}(AB)$ for $
     A,B\in\mathbb H_d$,
and write $\|A\|_F^2:=\langle A,A\rangle$, with associated Frobenius norm
$\|A\|_F=\sqrt{\operatorname{Tr}(A^2)}$. For two states
$\rho,\sigma\in\mathcal{S}_d$, the fidelity is
$F(\rho,\sigma)=\big(\operatorname{Tr}\sqrt{\sqrt{\sigma}\rho\sqrt{\sigma}}\big)^2$.
When convenient, we use Dirac notation and write a pure state as
$\rho=\ket{\psi}\!\bra{\psi}$.

\subsection{Problem setting and objective}\label{sec:problem_setting}

We consider the pure-state multi-armed quantum bandit (PSMAQB) problem in dimension $d$ introduced in~\cite{lumbreras22bandit,lumbreras2024learning}. An unknown environment state $\rho\in\mathcal S_d^*$
is fixed throughout the experiment. At each round $t\in[T_{\mathrm{total}}]$,
the learner receives a fresh copy of $\rho$, chooses adaptively a rank-one projector
$A_t\in\mathcal S_d^*$ based on previous outcomes and measurements (or actions), and performs the two-outcome measurement
$\{A_t,\mathbb I_d-A_t\}$. The observed outcome $X_t\in\{0,1\}$ satisfies
\begin{align}
    \mathbb E[X_t\mid A_t]=\operatorname{Tr}(\rho A_t).
\end{align}
We will write $\mathbb E_\rho$ for expectation with respect to the probability law
of the full adaptive algorithm when the unknown
environment state is $\rho$. This includes the randomness of the measurement
outcomes and any internal randomness of the algorithm.

The action $A=\rho$ has expected reward one (which is the highest one). Therefore the
instantaneous regret of playing $A_t$ is $
    1-\operatorname{Tr}(\rho A_t)$, and the cumulative regret is
\begin{align}
    \operatorname{Regret}(T_{\mathrm{total}})
    :=
    \sum_{t=1}^{T_{\mathrm{total}}}
    \left(1-\operatorname{Tr}(\rho A_t)\right).
\end{align}
Since both $\rho$ and $A_t$ are rank-one projectors,
\begin{align}
    1-\operatorname{Tr}(\rho A_t)
    =
    \frac12\|\rho-A_t\|_F^2.
\end{align}
Thus minimizing regret is equivalent to keeping the chosen measurements close to
the unknown state in squared Frobenius distance, or equivalently in pure-state
infidelity. 

\subsection{Dimensional mismatch and algorithmic indexing}

Having fixed the model and objective, we now explain why the qubit
linear-bandit reduction does not directly extend to qudits. In the qubit
case, this model admits a reduction to a classical linear bandit on the Bloch
sphere: writing
$\rho=(\mathbb{I}+\theta\cdot\sigma)/2$ and
$A=(\mathbb{I}+a\cdot\sigma)/2$, with $\theta,a\in\mathbb{S}^2$, the expected
reward is $(1+\langle\theta,a\rangle)/2$. The corresponding qubit algorithm
and regret analysis were developed in~\cite{pmlr-v247-lumbreras24a,lumbreras2024learning}. The goal
of this section is to explain why this Bloch-sphere argument does not
extend directly to qudits, and how it can be replaced by a tangent-space
construction on the pure-state manifold.

For arbitrary qudits ($d>2$), however, the qubit geometry no longer generalizes directly.  The pure states form the real $(2d-2)$-dimensional manifold
\begin{align}
    \mathcal M=\mathcal S_d^*\cong \mathbb{CP}^{d-1} ,
\end{align}
inside the affine space of trace-one Hermitian matrices, whose real dimension is $d^2-1$.  Thus, at a pure state $C\in\mathcal M$, the Frobenius-orthogonal complement of the tangent space $T_C\mathcal M$ inside this affine space has dimension $(d^2-1)-(2d-2)=(d-1)^2$.
Equivalently, since all pure states also have fixed Frobenius norm, one may remove the single radial direction $C-\mathbb{I}_d/d$.  Inside the corresponding generalized Bloch sphere, the pure-state manifold has $d^2-2d$ additional normal directions.  These directions are absent for qubits, because $\mathbb{CP}^1$ is the full Bloch sphere.

Consequently, a classical linear bandit method applied directly in the ambient $d^2-1$-dimensional parameter space would try to estimate directions that are normal to the pure-state manifold.  But near the optimal action, admissible rank-one projectors can move only along $T_C\mathcal M$ to first order; their components in the extra normal directions appear only through the curvature of the manifold, hence at second order.  The leading-order information relevant for reducing regret is therefore tangent information, not information in the $(d-1)^2$ ambient-normal directions, or equivalently the $d^2-2d$ non-radial normal directions inside the Bloch sphere.

Another approach would be to formulate the problem directly with unit complex vectors, since the action can be parameterized as $A = \ket{\psi_a} \! \bra{\psi_a} \in \mathcal{S}^*_d$. However, the expected outcome of the measurements would then scale as $|\braket{\psi}{\psi_a}|^2$, which is quadratic with respect to the measurement direction $\ket{\psi_a}$. This means we would lose the desirable analytical properties of the linear least-squares estimator, which possesses a convenient closed form when inverted and provides clean concentration bounds for adaptive measurements. 

We resolve this dimensional mismatch by formulating the algorithm intrinsically on
$\mathbb{CP}^{d-1}$, rather than in the ambient Hermitian space. By operating directly on the manifold, we restrict the learning process to explore only the directions allowed by the pure-state constraint. In this section, we provide the formal geometric framework for an epoch-based algorithm that, given an online estimate, performs measurements exclusively within the local tangent space of that estimate. In this way, we will see that the estimated quantity remains linear and matches the degrees of freedom of the pure state manifold.  We will use standard differential-geometric notions on this manifold, such as tangent spaces and local retractions; see, e.g.,~\cite{bengtsson2017geometry,absil2008optimization}


The main difference from the qubit setting of~\cite{lumbreras2024learning}
is the treatment of curvature. A direct transcription of the qubit algorithm
to $\mathbb{CP}^{d-1}$, based on a naive local Euclidean approximation, would
introduce residual terms coming from the normal directions of the manifold.
These residuals would accumulate into a linear contribution to the regret. To
avoid this, we introduce an \textit{epoch-based} algorithm that temporarily
freezes the local coordinate system determined by the current estimator.

The execution is structured by the following hierarchy of indices. We denote the real tangent dimension by
\begin{equation}\label{eq:dtan_def_qudit}
    d_{\mathrm{tan}}
    :=
    \dim_{\mathbb{R}} T_{C_m}\mathcal{M}
    =
    2(d-1),
\end{equation}
which is independent of the epoch $m$. The global physical measurement budget is denoted by $T_{\mathrm{total}}$. This symbol should not be confused with $T_m$: the quantity $T_m$ is the number of tangent-basis update steps inside epoch $m$, while $T_{\mathrm{total}}$ counts the total number of measurements used by the full algorithm, including a warm-up cost $T_0$ that we will define later. We explain below the different parts of the algorithms and the indices we will use to label them. The exact constructions of each part will be presented later, but for clarity we already state the hierarchy of indices that we will use.

\begin{itemize}
\item The adaptive phase is divided into $M$ \emph{total epochs}, indexed by $m \in [M]$. At the start of each epoch $m$, we fix a rank-1 \emph{base state} $C_m \in \mathcal{S}_d^*$, which serves as our current linear estimate of the unknown state $\ket{\psi}$. By freezing $C_m$, we define a stable local coordinate system in which to compute our measurement directions.

\item Each epoch consists of $T_m$ \emph{steps}, indexed by $s\in [T_m]$. At each step $s$, we define measurement directions across all $d_{\mathrm{tan}}$ tangent directions specified by the local coordinate system of the base state $C_m$. The running index for the tangent directions is $i \in [d_{\mathrm{tan}}]$.

\item As in the qubit construction of~\cite{lumbreras2024learning}, we use repeated
measurements and a Median-of-Means (MoM) aggregation to obtain robust
confidence bounds under bounded-variance noise. Every measurement is sampled
$N = 2 \lceil 12 \log(T_{\mathrm{total}}/\delta) \rceil$ times, and the
repetition index is $j \in [N]$.

\item We also use the symmetric-pair idea from~\cite{pmlr-v247-lumbreras24a,lumbreras2024learning}, but now the two measurement directions are generated by
moving in opposite tangent directions and retracting back to
$\mathcal{S}_d^*$. Specifically, for every epoch $m$, step $s$, and tangent
direction $i$, we select two symmetric rank-one projectors
$A_{m,s,i}^+ \in \mathcal{S}_d^*$ and
$A_{m,s,i}^- \in \mathcal{S}_d^*$. This symmetric construction
cancels the nonlinear normal curvature components of the manifold, yielding a
linear model. 
\end{itemize}

Thus, given our unknown state
$\rho = \ket{\psi}\!\bra{\psi}$, the regret for the qudit algorithm can be
decomposed as
\begin{equation} \label{eq:regret_def}
  \mathrm{Regret}(T_{\mathrm{total}}) = \frac{1}{2}  \sum_{m=1}^{M} \sum_{s=1}^{T_m} \sum_{j=1}^N \sum_{i=1}^{d_{\mathrm{tan}}} \Big( \|\rho - A_{m,s,i}^+\|_F^2 + \|\rho - A_{m,s,i}^-\|_F^2 \Big),
\end{equation}
where we used $\operatorname{Tr}(\rho^2) = \operatorname{Tr}(A^2) = 1$ with
\begin{equation}
    \|\rho - A\|_F^2 = \operatorname{Tr}\big((\rho - A)^2\big) = \operatorname{Tr}(\rho^2) + \operatorname{Tr}(A^2) - 2\operatorname{Tr}(\rho A) = 2\big(1 - \operatorname{Tr}(\rho A)\big).
\end{equation}
Therefore, minimizing the infidelity is mathematically identical to minimizing half the squared Frobenius distance: $1 - \operatorname{Tr}(\rho A) = \frac{1}{2}\|\rho - A\|_F^2$. Then, for a run containing $M$ complete epochs,
\begin{align}
    T_{\mathrm{total}}
    =
    T_0
    +
    2N d_{\mathrm{tan}}
    \sum_{m=1}^{M} T_m,
\end{align}
up to truncating the final epoch if the budget ends in the middle of an epoch. The factor $2$ comes from the symmetric pair of actions $A^+_{m,s,i},A^-_{m,s,i}$, the factor $d_{\mathrm{tan}}$ from the tangent directions, and the factor $N$ from the MoM repetitions.

The full algorithm is stated in Section~\ref{sec:qudit_algorithm}. Before
that, we introduce the geometric and statistical ingredients needed to define
the tangent-space measurements, the local linear estimator, and the
epoch-to-epoch hot start.

\subsection{Geometry of the Pure State Manifold}

In this section, we now describe the tangent spaces and retractions used by the algorithm. Having established the need to avoid the ambient dimensions, our approach relies on constructing local linear models explicitly on the curved manifold.

We achieve this by fixing a \emph{base state} $C_m \in \mathcal{S}_d^*$ at the onset of each epoch $m$. The base state $C_m=|\psi_{c_m}\rangle\langle\psi_{c_m}|$ is the current
pure-state estimate of $\rho$. During epoch $m$, all local coordinates are
taken in the local tangent space $T_{C_m}\mathcal M$, defined by
\begin{equation} \label{eq:qudit_tangent}
    T_{C_m}\mathcal{M} := \left\{ V = \frac{1}{\sqrt{2}}\big( \dyad{\phi}{\psi_{c_m}} + \dyad{\psi_{c_m}}{\phi} \big) \;\bigg|\;
\ket{\phi} \in \mathbb{C}^d, \, \braket{\phi}{\psi_{c_m}} = 0 \right\}.
\end{equation}
To formalize the structure of the local tangent space introduced above, we must establish that it indeed forms a valid real vector space of Hermitian matrices. 

\begin{lemma}
Every element $V \in T_{C_m}\mathcal{M}$ is a Hermitian matrix and $T_{C_m}\mathcal{M}$ is a real vector space.
\end{lemma}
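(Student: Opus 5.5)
The plan is to verify the two claims directly from the defining formula \eqref{eq:qudit_tangent}, writing $\ket{c}:=\ket{\psi_{c_m}}$ for brevity. The key observation is that the map
\begin{equation*}
L:\ \{\ket{\phi}\in\mathbb{C}^d:\braket{\phi}{c}=0\}\to \mathbb{H}_d,\qquad L(\ket{\phi})=\tfrac{1}{\sqrt2}\big(\dyad{\phi}{c}+\dyad{c}{\phi}\big),
\end{equation*}
is real-linear. Once we know that $L$ lands in $\mathbb{H}_d$, the set $T_{C_m}\mathcal M$ is exactly its image. It is therefore a real linear subspace of $\mathbb{H}_d$, because the image of a real-linear map on a real vector space is a real vector space.

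\emph{Step 1 (Hermiticity).} For any $V\in T_{C_m}\mathcal M$, take the adjoint term by term. Using $(\dyad{a}{b})^\dagger=\dyad{b}{a}$, we get
\begin{equation*}
V^\dagger=\tfrac{1}{\sqrt2}\big(\dyad{c}{\phi}+\dyad{\phi}{c}\big)=V .
\end{equation*}
So $V\in\mathbb{H}_d$.

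\emph{Step 2 (real vector space).} First, the domain $W:=\{\ket\phi:\braket{\phi}{c}=0\}$ is a complex subspace of $\mathbb{C}^d$, hence in particular a real subspace. Second, $L$ is additive: $L(\ket{\phi_1}+\ket{\phi_2})=L(\ket{\phi_1})+L(\ket{\phi_2})$. Third, for real $\alpha$ we have $L(\alpha\ket\phi)=\alpha L(\ket\phi)$, since $\bra{\alpha\phi}=\bar\alpha\bra{\phi}=\alpha\bra\phi$. The zero matrix is $L(0)$.

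Combining these, if $V_1=L(\phi_1)$ and $V_2=L(\phi_2)$ with $\alpha,\beta\in\mathbb{R}$, then
\begin{equation*}
\alpha V_1+\beta V_2=L(\alpha\phi_1+\beta\phi_2)\in T_{C_m}\mathcal M,
\end{equation*}
because $\alpha\phi_1+\beta\phi_2\in W$. So $T_{C_m}\mathcal M$ is closed under real linear combinations and contains $0$. It is thus a real subspace of the real vector space $\mathbb{H}_d$, and inherits the vector space axioms from it.

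The only subtle point, more a remark than an obstacle, is why the space is real and not complex. For complex $\alpha$, $L(\alpha\phi)=\tfrac1{\sqrt2}(\alpha\dyad{\phi}{c}+\bar\alpha\dyad{c}{\phi})$, which in general is not $\alpha L(\phi)$. Moreover, $\alpha L(\phi)$ is non-Hermitian whenever $\alpha\notin\mathbb{R}$ and $L(\phi)\neq0$. I would note this briefly to explain the word ``real'' in the statement. I would also note that $L$ is injective, since applying $L(\phi)$ to $\ket c$ gives $\tfrac1{\sqrt2}\ket\phi$. This yields $\dim_{\mathbb R}T_{C_m}\mathcal M=\dim_{\mathbb R}W=2(d-1)$, consistent with \eqref{eq:dtan_def_qudit}.
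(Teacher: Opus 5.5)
Your proof is correct and follows essentially the same route as the paper. You verify Hermiticity by taking the adjoint term by term, and the real-vector-space property by checking closure under addition and real scalar multiplication, packaged as real-linearity of your map $L$ on $\{|\phi\rangle : \langle \phi | c\rangle = 0\}$; your explicit remarks that $0=L(0)$ lies in the set, that the axioms are inherited from $\mathbb{H}_d$, and that injectivity of $L$ gives $\dim_{\mathbb{R}} T_{C_m}\mathcal{M}=2(d-1)$ are correct details the paper leaves implicit.
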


\begin{proof}
The first statement can be verified by inspection.
Let $V_1, V_2 \in T_{C_m}\mathcal{M}$ and $c \in \mathbb{R}$. By definition, there exist vectors $|\phi_1\rangle, |\phi_2\rangle \in \mathbb{C}^d$ such that $\langle\phi_1|\psi_{c_m}\rangle = \langle\phi_2|\psi_{c_m}\rangle = 0$. The sum of these operators is given by
\begin{equation}
V_1 + V_2 = \frac{1}{\sqrt{2}} \left( (|\phi_1\rangle + |\phi_2\rangle)\langle\psi_{c_m}| + |\psi_{c_m}\rangle(\langle\phi_1| + \langle\phi_2|) \right).
\end{equation}
Defining $|\phi'\rangle = |\phi_1\rangle + |\phi_2\rangle$, we see that $\langle\phi'|\psi_{c_m}\rangle = 0$, meaning $V_1 + V_2 \in T_{C_m}\mathcal{M}$. Furthermore, scalar multiplication by $c$ yields
\begin{equation}
c V_1 = \frac{1}{\sqrt{2}} \left( (|c \phi_1\rangle)\langle\psi_{c_m}| + |\psi_{c_m}\rangle\langle c\phi_1| \right),
\end{equation}
where we have used the fact that $c$ is real. Letting $|\phi''\rangle = |c\phi_1\rangle$, we find $\langle\phi''|\psi_{c_m}\rangle = 0$. Since $T_{C_m}\mathcal{M}$ is closed under matrix addition and real scalar multiplication, it forms a real vector space. 
\end{proof}

With the geometric structure of the pure state manifold well-posed, we will be able later to generate measurements by selecting tangent generators $V \in T_{C_m}\mathcal{M}$.

To construct our exact local linear model, we must ensure the learning process relies exclusively on the degrees of freedom of the pure state manifold. We need a mechanism to filter out the normal dimensions of the ambient space and isolate these valid tangent variations. We formalize this mechanism via the following lemma, which establishes the precise orthogonal projector onto the local tangent space.

\begin{lemma}\label{lem:projector_properties}
For any base state $C_m = |\psi_{c_m}\rangle\langle\psi_{c_m}| \in \mathcal{S}_d^*$, define the orthogonal projection map $\mathcal{P}_{T_{C_m}} : \mathbb{H}_d \rightarrow \mathbb{H}_d$ as
\begin{equation}
\mathcal{P}_{T_{C_m}}(X) := C_m X (\mathbb{I}-C_m) + (\mathbb{I}-C_m) X C_m. \label{eq:projector_def}
\end{equation}
Then, $\mathcal{P}_{T_{C_m}}$ is an orthogonal projector whose image is the tangent space $T_{C_m}\mathcal{M}$.
\end{lemma}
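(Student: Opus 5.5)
To prove Lemma~\ref{lem:projector_properties}, I will check four things in turn: that $\mathcal{P}_{T_{C_m}}$ maps $\mathbb H_d$ into $\mathbb H_d$, that it is idempotent, that it is self-adjoint with respect to $\langle A,B\rangle=\operatorname{Tr}(AB)$, and that its image is exactly $T_{C_m}\mathcal M$ as defined in \eqref{eq:qudit_tangent}. Throughout I write $C=C_m$ and $Q=\mathbb I-C$, and I use only $C^2=C$, $Q^2=Q$, $CQ=QC=0$, and $C^\dagger=C$, $Q^\dagger=Q$.

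\emph{Hermiticity and idempotence.} For $X\in\mathbb H_d$ we have $(CXQ)^\dagger=QXC$, so $\mathcal P(X)=CXQ+(CXQ)^\dagger$ is Hermitian. Next I expand $\mathcal P(\mathcal P(X))=C\,\mathcal P(X)\,Q+Q\,\mathcal P(X)\,C$. In the first term, $C(CXQ+QXC)Q=CXQ$, because $CQ=0$ kills the cross term. The second term similarly gives $QXC$. Hence $\mathcal P^2=\mathcal P$.

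\emph{Self-adjointness.} For $X,Y\in\mathbb H_d$, cyclicity of the trace gives $\operatorname{Tr}(CXQ\,Y)=\operatorname{Tr}(X\,QYC)$ and $\operatorname{Tr}(QXC\,Y)=\operatorname{Tr}(X\,CYQ)$. Adding these yields $\langle\mathcal P(X),Y\rangle=\langle X,\mathcal P(Y)\rangle$. An idempotent self-adjoint map is an orthogonal projector onto its image, so it remains only to identify the image.

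\emph{Image equals $T_{C}\mathcal M$.} First take $X\in\mathbb H_d$ and set $\ket{\phi}:=\sqrt2\,QX\ket{\psi_{c_m}}$. Then $\braket{\psi_{c_m}}{\phi}=0$ since $C Q=0$. Because $C=\dyad{\psi_{c_m}}{\psi_{c_m}}$, we get $QXC=\tfrac{1}{\sqrt2}\dyad{\phi}{\psi_{c_m}}$, and by Hermiticity $CXQ=\tfrac1{\sqrt2}\dyad{\psi_{c_m}}{\phi}$. So $\mathcal P(X)\in T_C\mathcal M$. Conversely, for $V=\tfrac1{\sqrt2}(\dyad{\phi}{\psi_{c_m}}+\dyad{\psi_{c_m}}{\phi})$ with $\ket\phi\perp\ket{\psi_{c_m}}$, we have $Q\ket\phi=\ket\phi$ and $C\ket\phi=0$. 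A direct check then gives $CVQ=\tfrac1{\sqrt2}\dyad{\psi_{c_m}}{\phi}$ and $QVC=\tfrac1{\sqrt2}\dyad{\phi}{\psi_{c_m}}$, so $\mathcal P(V)=V$. Thus the image is all of $T_C\mathcal M$, and $\mathcal P$ restricts to the identity there.

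No step is a real obstacle. The one point needing care is the use of the rank-one structure $C=\dyad{\psi_{c_m}}{\psi_{c_m}}$ to rewrite $QXC$ as an outer product with $\bra{\psi_{c_m}}$, which is what matches the image to the parametrization in \eqref{eq:qudit_tangent}. The rest is bookkeeping with $CQ=0$.
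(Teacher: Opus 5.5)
Your proposal is correct and follows essentially the same route as the paper: show the image lies in $T_{C_m}\mathcal M$ via $\ket{\phi}=\sqrt2\,(\mathbb I-C_m)X\ket{\psi_{c_m}}$, check idempotence using $C_m(\mathbb I-C_m)=0$, verify $\mathcal P_{T_{C_m}}(V)=V$ on tangent vectors, and prove self-adjointness by cyclicity of the trace. Your explicit check that $\mathcal P_{T_{C_m}}(X)$ is Hermitian is a small addition that the paper leaves implicit.
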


\begin{proof}
To prove that $\mathcal{P}_{T_{C_m}}$ is a projector onto $T_{C_m}\mathcal{M}$, we must show that its image lies in $T_{C_m}\mathcal{M}$, that it is idempotent ($\mathcal{P}_{T_{C_m}}^2 = \mathcal{P}_{T_{C_m}}$), and that it acts as the identity on any element already in $T_{C_m}\mathcal{M}$.

First, let $X \in \mathbb{H}_d$ be an arbitrary Hermitian matrix. We can define an unnormalized vector $|\tilde{\phi}\rangle = \sqrt{2}(\mathbb{I}-C_m)X|\psi_{c_m}\rangle$. Because $C_m = |\psi_{c_m}\rangle\langle\psi_{c_m}|$, it follows that $(\mathbb{I}-C_m)|\psi_{c_m}\rangle = 0$, which ensures the orthogonality condition $\langle \psi_{c_m} | \tilde{\phi} \rangle = 0$. We can rewrite the action of the superoperator as
\begin{align}
\mathcal{P}_{T_{C_m}}(X) &= |\psi_{c_m}\rangle\langle\psi_{c_m}| X (\mathbb{I}-C_m) + (\mathbb{I}-C_m) X |\psi_{c_m}\rangle\langle\psi_{c_m}| \nonumber \\
&= \frac{1}{\sqrt{2}} \left( |\psi_{c_m}\rangle \langle \tilde{\phi}| + |\tilde{\phi}\rangle \langle \psi_{c_m} | \right).
\end{align}
This matches the definition of the tangent space~\eqref{eq:qudit_tangent} (with $|\phi\rangle = |\tilde{\phi}\rangle$). Thus, the image of $\mathcal{P}_{T_{C_m}}$ is contained in $T_{C_m}\mathcal{M}$.

Second, we check idempotence. Using the projective properties $C_m^2 = C_m$, $(\mathbb{I}-C_m)^2 = \mathbb{I}-C_m$, and the orthogonality $C_m(\mathbb{I}-C_m) = 0$, we apply the superoperator twice
\begin{align}
\mathcal{P}_{T_{C_m}}(\mathcal{P}_{T_{C_m}}(X)) &= C_m \left[ C_m X (\mathbb{I}-C_m) + (\mathbb{I}-C_m) X C_m \right] (\mathbb{I}-C_m) \nonumber \\
&\quad + (\mathbb{I}-C_m) \left[ C_m X (\mathbb{I}-C_m) + (\mathbb{I}-C_m) X C_m \right] C_m \nonumber \\
&= C_m X (\mathbb{I}-C_m) + (\mathbb{I}-C_m) X C_m \nonumber \\
&= \mathcal{P}_{T_{C_m}}(X).
\end{align}
Thus, $\mathcal{P}_{T_{C_m}}$ is idempotent. 

Finally, consider an arbitrary element $V \in T_{C_m}\mathcal{M}$. By definition, $V = \frac{1}{\sqrt{2}}(|\phi\rangle\langle\psi_{c_m}| + |\psi_{c_m}\rangle\langle\phi|)$ with $\langle\phi|\psi_{c_m}\rangle = 0$. Since $C_m|\psi_{c_m}\rangle = |\psi_{c_m}\rangle$ and $C_m|\phi\rangle = 0$, a direct substitution yields $\mathcal{P}_{T_{C_m}}(V) = V$. Therefore, the superoperator is surjective onto $T_{C_m}\mathcal{M}$, confirming it is the exact orthogonal projector onto the local tangent space.

It remains to check that the projection is orthogonal with respect to the Frobenius
inner product. Let $Q:=I-C_m$. Since $C_m=C_m^\dagger=C_m^2$ and $Q=Q^\dagger=Q^2$, for
Hermitian matrices $X,Y$ we have
\begin{align}
\begin{aligned}
\left\langle Y,\mathcal P_{T_{C_m}}(X)\right\rangle
&=
\operatorname{Tr}\!\left[
Y\left(C_mXQ+QXC_m\right)
\right]  \\
&=
\operatorname{Tr}(YC_mXQ)+\operatorname{Tr}(YQXC_m) \\
&=
\operatorname{Tr}(QYC_mX)+\operatorname{Tr}(C_mYQX) \\
&=
\operatorname{Tr}\!\left[
\left(C_mYQ+QYC_m\right)X
\right] \\
&=
\left\langle \mathcal P_{T_{C_m}}(Y),X\right\rangle .
\end{aligned}
\end{align}
Thus $\mathcal P_{T_{C_m}}$ is self-adjoint. Since it is also idempotent and has image
$T_{C_m}\mathcal M$, it is the Frobenius-orthogonal projector onto $T_{C_m}\mathcal M$.

\end{proof}

Having established that $\mathcal{P}_{T_{C_m}}$ isolates the valid degrees of freedom, how do we map these directions back to the curved pure state manifold $\mathcal{M}$? 

For qubits, the Bloch-sphere geometry used in~\cite{lumbreras2024learning}
allows one to return to the action set by a simple vector renormalization,
$a/\|a\|_2$. For qudits, we use the geodesic in the two-dimensional subspace generated by
$|\psi_{c_m}\rangle$ and the tangent direction.

The geometry of the pure state space guarantees that any geodesic connecting the base state $\ket{\psi_{c_m}}$ and an orthogonal direction $\ket{\phi}$ is entirely confined to the 2D subspace spanned by them. We can construct a geodesic between two orthogonal pure states by just connecting them with a ``circle" as
\begin{equation}
    \ket{\gamma_V(\pm\tau)} = \cos\left(\frac{\tau}{\sqrt{2}}\right)\ket{\psi_{c_m}} \pm \sin\left(\frac{\tau}{\sqrt{2}}\right)\ket{\phi},
\end{equation}
where the $\sqrt{2}$ factors arise from the Frobenius normalization $\operatorname{Tr}(V^2) = 1$. To translate this state-vector evolution into the density matrix formalism and ensure our actions remain valid rank-1 projectors, we formalize the mapping via the following lemma.

\begin{lemma}
For any base state $C_m = \ket{\psi_{c_m}} \! \bra{\psi_{c_m}} \in \mathcal{S}_d^*$ and a normalized tangent vector $V \in T_{C_m}\mathcal{M}$, $\| V \|_F = 1$ characterized by the orthogonal state $\ket{\phi}$, $V = \frac{1}{\sqrt{2}}(\ket{\phi}\!\bra{\psi_{c_m}} + \ket{\psi_{c_m}}\!\bra{\phi} )$, the exact mapping back to the pure state manifold along a step size $\pm\tau \in \mathbb{R}$ is given by the Retraction operator defined as
\begin{equation} \label{eq:qudit_retract}
    \operatorname{Retract}_{C_m}(\pm \tau V) := \cos^2\left(\frac{\tau}{\sqrt{2}}\right) C_m + \sin^2\left(\frac{\tau}{\sqrt{2}}\right) \dyad{\phi} \pm \frac{1}{\sqrt{2}}\sin(\sqrt{2}\tau) V.
\end{equation}
\end{lemma}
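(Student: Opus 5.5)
The lemma asserts that the right-hand side of \eqref{eq:qudit_retract} is the rank-one projector onto the geodesic state $\ket{\gamma_V(\pm\tau)}$. It therefore belongs to $\mathcal S_d^*$ and is the correct density-matrix version of the curve. The plan is a direct computation of the outer product $\ket{\gamma_V(\pm\tau)}\!\bra{\gamma_V(\pm\tau)}$, followed by a check that the result is a pure state.

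\textbf{Normalization of $\ket{\phi}$.} Since $\braket{\phi}{\psi_{c_m}}=0$ and $\|\psi_{c_m}\|=1$, expanding gives
\begin{equation}
\operatorname{Tr}(V^2)=\tfrac12\operatorname{Tr}\big(\dyad{\phi}{\psi_{c_m}}\dyad{\psi_{c_m}}{\phi}+\dyad{\psi_{c_m}}{\phi}\dyad{\phi}{\psi_{c_m}}\big)=\braket{\phi}{\phi}.
\end{equation}
The cross terms vanish by orthogonality. Hence $\|V\|_F=1$ forces $\braket{\phi}{\phi}=1$, so $\ket{\phi}$ is a unit vector orthogonal to $\ket{\psi_{c_m}}$. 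This is the point where the $\sqrt2$ convention matters, and it is the only step needing care.

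\textbf{Expanding the outer product.} Write $c=\cos(\tau/\sqrt2)$ and $s=\sin(\tau/\sqrt2)$. Expanding
\begin{equation}
(c\ket{\psi_{c_m}}\pm s\ket{\phi})(c\bra{\psi_{c_m}}\pm s\bra{\phi})
\end{equation}
gives $c^2C_m+s^2\dyad{\phi}\pm cs\,(\dyad{\phi}{\psi_{c_m}}+\dyad{\psi_{c_m}}{\phi})$. The bracket in the last term equals $\sqrt2\,V$. The double-angle identity gives $cs=\tfrac12\sin(\sqrt2\tau)$, so the cross term is $\pm\tfrac{1}{\sqrt2}\sin(\sqrt2\tau)V$. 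This reproduces \eqref{eq:qudit_retract} exactly.

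\textbf{Membership in $\mathcal S_d^*$.} The vector $\ket{\gamma_V(\pm\tau)}$ has squared norm $c^2+s^2=1$, using orthonormality of $\ket{\psi_{c_m}}$ and $\ket{\phi}$. Its outer product is therefore Hermitian, positive semidefinite, of unit trace, and idempotent. So $\operatorname{Retract}_{C_m}(\pm\tau V)\in\mathcal S_d^*$ for every real $\tau$.

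\textbf{Retraction properties (optional).} To justify the name, I would note two further facts. At $\tau=0$ the map returns $C_m$. Its derivative in $\tau$ at $0$ is $\pm\tfrac{1}{\sqrt2}\cdot\sqrt2\,V=\pm V$, because the $c^2$ and $s^2$ terms have zero derivative at $0$. The curve therefore starts at $C_m$ with velocity $\pm V\in T_{C_m}\mathcal M$, which are the defining properties of a retraction.

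I expect no real obstacle: the argument is routine algebra plus the unit-norm observation for $\ket{\phi}$.
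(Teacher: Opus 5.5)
Your proposal is correct and takes the same approach as the paper: the paper's proof also expands $|\gamma_V(\pm\tau)\rangle\langle\gamma_V(\pm\tau)|$ using $\langle\psi_{c_m}|\phi\rangle=0$ and concludes that the result is a rank-one projector. Your explicit check that $\|V\|_F=1$ forces $\langle\phi|\phi\rangle=1$, which is what makes $\cos^2(\tau/\sqrt{2})+\sin^2(\tau/\sqrt{2})=1$ the correct norm, fills in a normalization step that the paper leaves implicit.
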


\begin{proof}
The exact update is obtained by expanding the projector $\ket{\gamma_V(\pm\tau)} \! \bra{\gamma_V(\pm\tau)}$ using the state-vector geodesic defined above, and recalling that $\bra{\psi_m}\phi \rangle = 0$. Since~\eqref{eq:qudit_retract} is equal to $\ket{\gamma_V(\pm\tau)} \! \bra{\gamma_V(\pm\tau)}$ then it is a valid rank-1 projector or pure state.
\end{proof}

\section{Linear Model in the Tangent Space}\label{sec:linear_tangent_model}

Having established the necessary tools to project states into local tangent spaces and retract them back to the manifold, we are now ready to define the linear estimation model for our algorithm. We begin by constructing a measurement scheme such that the expected outcomes remain linear with respect to both the unknown parameter and the measurement directions.

We first outline how the algorithm selects measurement directions. Throughout this section, we fix an epoch $m\in [M]$ and its corresponding base state $C_m\in\mathcal{S}_d^*$. At step $s$ of epoch $m$, the algorithm selects a tangent vector $V_s \in T_{C_m}\mathcal{M}$ and determines a geodesic step size $\tau_s$. The corresponding physical measurement direction is generated via the retraction as $A_s = \operatorname{Retract}_{C_m} (\tau_s V_s)$. Because the unknown environment is a pure state $\rho = \ket{\psi} \! \bra{\psi} \in \mathcal{S}^*_d$, the observed binary reward/outcome $X_s$, governed by Born's rule, is given by
\begin{align}
    X_s &= \langle \rho , A_s \rangle + \eta_s \nonumber \\
        &= \cos^2\left(\frac{\tau_s}{\sqrt{2}}\right) \langle \rho, C_m \rangle + \sin^2\left(\frac{\tau_s}{\sqrt{2}}\right) \langle \rho, \dyad{\phi} \rangle + \frac{1}{\sqrt{2}}\sin(\sqrt{2}\tau_s) \langle \rho, V_s \rangle + \eta_s,
\end{align}
where $\eta_s$ represents the statistical noise. 

This expected reward highlights the difficulty of performing pure-state tomography as a linear bandit problem. If we attempt to parameterize the learning using the underlying quantum state vector $\ket{\gamma(\tau_s)}$, the Born rule yields a probability that is quadratic, breaking the linear assumptions. Conversely, lifting the parameterization to the full density matrix space using the action $A_s$ makes the reward linear via the Frobenius inner product $\langle \rho, A_s \rangle$. However, as established in the previous section, the ambient density matrix space possesses $d^2-1$ dimensions, while our physical measurement directions are confined to the $2(d-1)$-dimensional pure state manifold $\CPd$.  

To formulate an exact linear model within the local tangent space, we must filter out the manifold's curvature and isolate the only tangent relevant part. We achieve this by defining our exact target parameter as the projection of the true state $\rho$ onto the tangent space of our current base state $C_m \in \mathcal{S}^*_d$ as
\begin{equation}\label{eq:tangent_parameter}
    \Delta^{(m)}_* := \mathcal{P}_{T_{C_m}}(\rho) = \mathcal{P}_{T_{C_m}}(\rho - C_m) \in T_{C_m}\mathcal{M},
\end{equation}
where the second equality holds trivially since $\mathcal{P}_{T_{C_m}}(C_m) = 0$. 

To extract information about this target parameter, we will select a symmetric pair of measurements/actions $A_s^\pm = \operatorname{Retract}_{C_m}(\pm \tau_s V_s)$ centered around the base state $C_m$. Expanding the retraction operator using~\eqref{eq:qudit_retract}, we can group the terms into even and odd parities with respect to the step size as
\begin{align}
    A_s^\pm = M_s^{\text{even}} \pm O_s,
\end{align}
where the even term
\begin{align}
    M_s^{\text{even}} := \cos^2\left(\frac{\tau_s}{\sqrt{2}}\right) C_m + \sin^2\left(\frac{\tau_s}{\sqrt{2}}\right) \dyad{\phi},
\end{align}
contains the base state and the curvature components, while the odd term
\begin{align}\label{eq:odd_term}
    O_s := \frac{1}{\sqrt{2}} \sin(\sqrt{2}\tau_s) V_s ,
\end{align}
contains the pure tangent variation. We observe independent binary rewards $X_s^\pm = \langle \rho , A_s^\pm \rangle + \eta_s^\pm$. By defining the difference-reward as $Y_s = \frac{1}{2}(X_s^+ - X_s^-)$, the normal curvature term $M_s^{\text{even}}$ cancels out, yielding
\begin{equation}
    Y_s = \langle \rho , O_s \rangle + \epsilon_s, \quad \text{where } \epsilon_s := \frac{\eta_s^+ - \eta_s^-}{2}.
\end{equation}
We can express this expectation in terms of our tangent target parameter $\Delta^{(m)}_*$ by injecting the projection superoperator $\mathcal{P}_{T_{C_m}}$. Using the algebraic identity $O_s = \mathcal{P}_{T_{C_m}}(O_s) + \big(O_s - \mathcal{P}_{T_{C_m}}(O_s)\big)$, we obtain
\begin{equation} \label{eq:projected_reward}
    \langle \rho , O_s \rangle = \langle \mathcal{P}_{T_{C_m}}(\rho), \mathcal{P}_{T_{C_m}}(O_s) \rangle + \langle \rho , O_s - \mathcal{P}_{T_{C_m}}(O_s) \rangle.
\end{equation}
This decomposition reveals the necessity of our epoch-based algorithmic design. If the base state $C_m$ were updated continuously at every step $s$, the tangent generators $O_s$ constructed at past steps would not align with the current tangent space. Applying a new projection $\mathcal{P}_{T_{C_t}}$ to past measurements would generate a non-zero residual term $r_s = \langle \rho , O_s - \mathcal{P}_{T_{C_t}}(O_s) \rangle \neq 0$. This residual would accumulate an $\mathcal{O}(T)$ bias into the estimator.

By fixing the base state $C_m$ for an entire epoch $m$, the vectors $V_s$ (and consequently $O_s$) are constructed \textit{strictly within} $T_{C_m}\mathcal{M}$. Because projecting an element that already resides in the target subspace acts as the identity map, we have $\mathcal{P}_{T_{C_m}}(O_s) = O_s$. Therefore, the residual drift is zero throughout the epoch. 

This ``no residual drift'' claim remains valid  because old measurement outcomes and old tangent vectors are discarded at the start of each new epoch. As we will explain later, to prevent the mixing of tangent vectors defined for different base states $C_m$, we must introduce a ``hot-start mechanism'' to carry forward the statistical estimation gathered at the end of each epoch.

Furthermore, we observe that
\begin{align}
    \langle \Delta^{(m)}_*, O_s \rangle = \langle \rho - C_m, O_s \rangle = \langle \rho , O_s \rangle - \langle C_m, O_s \rangle.
\end{align}
Because $O_s \in T_{C_m}\mathcal{M}$ resides purely in the tangent space of $C_m$, it is orthogonal to the base state, yielding $\langle C_m, O_s \rangle = 0$. Substituting this into~\eqref{eq:projected_reward} yields our \emph{linear model within the local tangent space}
\begin{equation} \label{eq:exact_linear}
    Y_s = \langle \Delta^{(m)}_*, O_s \rangle + \epsilon_s.
\end{equation}
Thus, the primary quantity of interest to estimate is $\Delta^{(m)}_*$, as it is linear with respect to the engineered observation vector $O_s$~\eqref{eq:odd_term}. 

However, when transitioning to a new epoch $m+1$ with a freshly updated base
state $C_{m+1}$, we cannot naively reuse the vectors gathered in
$T_{C_m}\mathcal{M}$. The new target parameter $\Delta_*^{(m+1)}$ generally
differs from $\Delta_*^{(m)}$, because these are projections of the unknown
state $\rho$ onto tangent spaces based at different points. This is one of the
main differences from the qubit algorithm of~\cite{lumbreras2024learning},
where the Bloch-sphere parametrization provides a fixed ambient coordinate
system. In the qudit setting, we therefore need a mechanism for transferring
statistical precision across epochs without transporting old tangent vectors.

\subsection{Tangent-Space Least-Squares Estimator and Design Superoperator}

With the exact linear model established in the previous section, our next goal is to estimate the target parameter $\Delta^{(m)}_*$ from a history of observed rewards (measurement outcomes) and actions (measurement directions). 

Following the weighted least-squares structure used in the qubit analysis
of~\cite{lumbreras2024learning}, we use a collection of linear observations
$(Y_l,O_l)_{l=1}^s$ gathered during epoch $m\in[M]$, together with
variance-adaptive weights $\{\omega_l\}_{l=1}^s$ that reflect the
vanishing-variance structure of the measurement noise. Since the unknown local
quantity is the tangent vector $\Delta\in T_{C_m}\mathcal{M}$, we define the
regularized least-squares loss
\begin{align}
    \mathcal{L}(\Delta) = \frac{\mu_{m-1}}{2} \langle \Delta , \Delta \rangle + \frac{1}{2} \sum_{l=1}^s \omega_l \big( Y_l - \langle \Delta, O_l \rangle\big)^2,
\end{align}
where $\mu_{m-1} > 0$ is a regularization constant. Why do we vary this constant at each epoch? As we will formally show later, updating $\mu_{m-1}$ provides a ``hot-start'' mechanism, allowing the estimator to inherit the estimation precision achieved in the previous epoch without mixing incompatible tangent spaces. 

To find the minimum that defines our least-squares estimator, we take the Fr\'echet derivative with respect to $\Delta \in T_{C_m}\mathcal{M}$ and set it to zero as
\begin{equation}
    \mu_{m-1} \Delta - \sum_{l=1}^s \omega_l \big( Y_l - \langle \Delta, O_l \rangle \big) O_l = 0 \implies \mu_{m-1} \Delta + \sum_{l=1}^s \omega_l O_l \langle O_l, \Delta \rangle = \sum_{l=1}^s \omega_l Y_l O_l.
\end{equation}
To isolate $\Delta$, we view $T_{C_m}\mathcal M$ as a real Hilbert space
equipped with the Frobenius inner product. 
The above equation can then be written as
\begin{equation}
     \mathcal V_s^{\mathrm{tan}}(\Delta)
     =
     \sum_{l=1}^s \omega_l Y_l O_l,
\end{equation}
where $\mathcal V_s^{\mathrm{tan}}:T_{C_m}\mathcal M\to T_{C_m}\mathcal M$
is the tangent design superoperator defined by
\begin{equation}\label{eq:tangent_design_matrix}
    \mathcal V_s^{\mathrm{tan}}(X)
    :=
    \mu_{m-1} X
    +
    \sum_{l=1}^s
    \omega_l
    \langle O_l,X\rangle O_l,
    \qquad
    X\in T_{C_m}\mathcal M .
\end{equation}
Since $\mu_{m-1}>0$, the operator $\mathcal V_s^{\mathrm{tan}}$ is strictly
positive on the finite-dimensional real Hilbert space $T_{C_m}\mathcal M$ which means that
for every nonzero $X\in T_{C_m}\mathcal M$,
\begin{equation}
    \left\langle X,\mathcal V_s^{\mathrm{tan}}(X)\right\rangle
    =
    \mu_{m-1}\|X\|^2
    +
    \sum_{l=1}^s
    \omega_l
    \langle O_l,X\rangle^2
    >
    0 .  \label{eq:norm-of-X}
\end{equation}
Thus, $\mathcal V_s^{\mathrm{tan}}$ is invertible as a linear map on $T_{C_m}\mathcal M$. Equivalently, in any Frobenius-orthonormal eigenbasis $\{e_r\}_{r=1}^{d_{\mathrm{tan}}}$ satisfying $\mathcal V_s^{\mathrm{tan}}e_r=\lambda_r e_r$, with $\lambda_r>0$, the inverse acts diagonally as $(\mathcal V_s^{\mathrm{tan}})^{-1}e_r=\lambda_r^{-1}e_r$.
Recall that positivity, eigenvectors and eigenvalues are always understood with
respect to the Frobenius inner product on the tangent space. 
The least-squares estimator is
therefore
\begin{equation} \label{eq:tangent_LSE}
    \widehat{\Delta}_s
    :=
    \left(\mathcal V_s^{\mathrm{tan}}\right)^{-1} \left(
    \sum_{l=1}^s \omega_l Y_l O_l \right) .
\end{equation}
This gives the tangent-space linear estimator for $\Delta_*^{(m)}$.

\subsection{About tangent-space superoperators}
We first state the tangent-space operator conventions that we will use. For a fixed
base state $C_m$, every tangent-space superoperator is a linear map
$\mathcal A:T_{C_m}\mathcal M\to T_{C_m}\mathcal M$ acting on Hermitian tangent
matrices. Self-adjointness, positivity, eigenvectors, and eigenvalues are always
understood with respect to the real Hilbert space
$(T_{C_m}\mathcal M,\langle\cdot,\cdot\rangle)$. Thus, if $\mathcal A$ is
self-adjoint, a Frobenius-orthonormal eigenbasis satisfies
\begin{align}
\mathcal A (v_i)=\lambda_i v_i,
\qquad
\langle v_i,v_j\rangle=\delta_{ij}, \quad v_i \in T_{C_m}\mathcal{M}.
\end{align}
If $\mathcal A$ is positive definite, all $\lambda_i>0$. We write
$\lambda_{\min}(\mathcal A)$ and $\lambda_{\max}(\mathcal A)$ for
the smallest and largest tangent-space eigenvalues.

For a positive superoperator $\mathcal A:T_{C_m}\to T_{C_m}$, define
\begin{align}
\|X\|_{\mathcal A}^2
:=
\langle X,\mathcal A ( X) \rangle,
\qquad X\in T_{C_m}.
\end{align}
For two self-adjoint tangent-space operators $\mathcal A,\mathcal B$,
we write
\begin{align}
\mathcal A\preceq\mathcal B , \quad
\text{if and only if} \quad
\langle H,\mathcal A ( H)\rangle
\le
\langle H,\mathcal B (H )\rangle
\end{align}
for every $H\in T_{C_m}$. 
Finally we denote by
\begin{align}
\mathcal I_{T_{C_m}\mathcal M}:T_{C_m}\mathcal M\to T_{C_m}\mathcal M, \quad \mathcal I_{T_{C_m}\mathcal M}(X)=X \quad \text{ for every } X\in T_{C_m}\mathcal M,
\end{align}
the identity superoperator on the tangent space.

\section{Measurement Selection, hot start and Eigenvalue Control}\label{sec:action_selection_qudit}

In this section we define the measurement selection rule that we will use in our algorithm. To describe this, we start by fixing the epoch $m\in [M]$ with its base state $C_m$.

First note that at step $s$ of epoch $m$, the design superoperator $\mathcal V_{s-1}^{\text{tan}}:T_{C_m}\mathcal{M}\rightarrow T_{C_m}\mathcal{M}$ is a self-adjoint positive 
tangent-space superoperator. By the spectral theorem, it
admits a Frobenius-orthonormal eigenbasis
$\{v_{s,i}\}_{i=1}^{d_{\mathrm{tan}}}$ contained in $T_{C_m}\mathcal M$, so these
vectors are valid inputs for the retraction operator~\eqref{eq:qudit_retract}.
Letting $\lambda_{s-1}:=\lambda_{\min}(\mathcal V_{s-1}^{\mathrm{tan}})$, we
explore the manifold by generating $d_{\mathrm{tan}}$ pairs of symmetric
measurement directions using the adaptive step size
$\tau_s=1/\sqrt{\lambda_{s-1}}$, namely
\begin{equation}
    A_{s,i}^{\pm} = \operatorname{Retract}_{C_m}\left(\pm\frac{1}{\sqrt{\lambda_{s-1}}}v_{s,i}\right). \label{eq:action_from_v}
\end{equation}
The above are valid measurement directions since $A_{s,i}^{\pm}\in\mathcal{S}_d^*$. However, recall from~\eqref{eq:exact_linear} that for the estimation step, the classical data we input into our tangent linear least-squares estimator~\eqref{eq:tangent_LSE} is the projected component $ \mathcal{P}_{T_{C_m}}(A_{s,i}^{\pm})$.

The following lemma establishes how to project these physical actions back onto the tangent space.

\begin{lemma}\label{lem:tangent_annihilation}
    Let $C_m = |\psi_m\rangle \! \langle \psi_m |\in\mathcal{S}^*_d$. Let $v_{s,i} \in T_{C_m}\mathcal{M}$ be any normalized eigenvector of $\mathcal{V}^{\mathrm{tan}}_{s-1}$ such that $v_{s,i} = \frac{1}{\sqrt{2}}( \dyad{\phi}{\psi_{m}} + \dyad{\psi_{m}}{\phi} )$ for some $|\phi\rangle\in\mathbb{C}^d$ such that $\langle \phi |\psi_m \rangle = 0$. Then the orthogonal projection of the corresponding actions $A_{s,i}^{\pm}$ as defined in~\eqref{eq:action_from_v} onto the tangent space yields
    \begin{equation}
        \mathcal{P}_{T_{C_m}}(A_{s,i}^{\pm}) = \pm \frac{1}{\sqrt{2}}\sin \left(\sqrt{\frac{2}{{\lambda_{s-1}}}}\right)v_{s,i}.
    \end{equation}
\end{lemma}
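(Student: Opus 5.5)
The plan is to expand $A_{s,i}^{\pm}$ with the explicit retraction formula~\eqref{eq:qudit_retract}, taking $\tau=1/\sqrt{\lambda_{s-1}}$ and $V=v_{s,i}$. Then we apply the projection of Lemma~\ref{lem:projector_properties} term by term, using that $\mathcal P_{T_{C_m}}$ is linear. With $\tau/\sqrt2 = 1/\sqrt{2\lambda_{s-1}}$ and $\sqrt2\,\tau=\sqrt{2/\lambda_{s-1}}$, the retraction reads
\begin{equation*}
A_{s,i}^{\pm}=\cos^2\!\Big(\tfrac{1}{\sqrt{2\lambda_{s-1}}}\Big)C_m+\sin^2\!\Big(\tfrac{1}{\sqrt{2\lambda_{s-1}}}\Big)\dyad{\phi}\pm\tfrac{1}{\sqrt2}\sin\!\Big(\sqrt{\tfrac{2}{\lambda_{s-1}}}\Big)v_{s,i}.
\end{equation*}
The retraction lemma requires $\|v_{s,i}\|_F=1$ and that $v_{s,i}$ be generated by an orthogonal $\ket{\phi}$. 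This holds because $v_{s,i}$ is a normalized element of $T_{C_m}\mathcal M$, as assumed in the statement. Note that $\|v_{s,i}\|_F^2=\braket{\phi}{\phi}$, so $\ket{\phi}$ is a unit vector and $\dyad{\phi}$ is a genuine rank-one projector.

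We then treat the three terms separately.

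\emph{First term.} $\mathcal P_{T_{C_m}}(C_m)=C_mC_m(\mathbb I-C_m)+(\mathbb I-C_m)C_mC_m=0$, since $C_m(\mathbb I-C_m)=0$.

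\emph{Second term.} Since $\langle\phi|\psi_m\rangle=0$, we have $C_m\ket{\phi}=0$, and therefore $C_m\dyad{\phi}=0=\dyad{\phi}C_m$. Both summands in~\eqref{eq:projector_def} vanish, so $\mathcal P_{T_{C_m}}(\dyad{\phi})=0$. This is the step that removes the curvature (normal) component, and it is the only point where orthogonality of $\ket\phi$ is really used.

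\emph{Third term.} Since $v_{s,i}\in T_{C_m}\mathcal M$, Lemma~\ref{lem:projector_properties} gives $\mathcal P_{T_{C_m}}(v_{s,i})=v_{s,i}$.

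Combining the three terms by linearity leaves only $\pm\frac{1}{\sqrt2}\sin(\sqrt{2/\lambda_{s-1}})\,v_{s,i}$, which is the claim.

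No step is a serious obstacle. The one point needing care is bookkeeping of the argument of the sine. The retraction lemma has $\frac{1}{\sqrt2}\sin(\sqrt2\,\tau)$, coming from $2\sin(\tau/\sqrt2)\cos(\tau/\sqrt2)=\sin(\sqrt2\,\tau)$ together with the $\sqrt2$ normalization of $V$. Substituting $\tau=\lambda_{s-1}^{-1/2}$ gives exactly $\sin(\sqrt{2/\lambda_{s-1}})$. A quick recheck of that double-angle identity in the cross term of $\ket{\gamma_V}\!\bra{\gamma_V}$ would confirm the constant.
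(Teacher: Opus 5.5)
Your proposal is correct and follows the paper's proof: expand $A_{s,i}^{\pm}$ via the retraction formula, then apply $\mathcal P_{T_{C_m}}$ term by term. The projection annihilates $C_m$ and $\dyad{\phi}$ and fixes $v_{s,i}$, and substituting $\tau=\lambda_{s-1}^{-1/2}$ gives the stated coefficient.
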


\begin{proof}
    Recall from~\eqref{eq:qudit_retract} that the generated physical action is given by
    \begin{equation*}
        A_{s,i}^{\pm} = \cos^2\left(\frac{\tau_s}{\sqrt{2}}\right) C_m + \sin^2\left(\frac{\tau_s}{\sqrt{2}}\right) \dyad{\phi} \pm \frac{1}{\sqrt{2}}\sin(\sqrt{2}\tau_s) v_{s,i}.
    \end{equation*}
    By the linearity of the projection superoperator $\mathcal{P}_{T_{C_m}}$, we can evaluate its action on each of these three terms independently. 
    First, applying the projector definition from~\eqref{eq:projector_def} to the base state $C_m$, we obtain $\mathcal{P}_{T_{C_m}}(C_m) = C_m C_m (\mathbb{I}-C_m) + (\mathbb{I}-C_m) C_m C_m = 0$, which vanishes because $C_m(\mathbb{I}-C_m) = 0$. 
    Second, the state $\ket{\phi}$ characterizes the tangent vector $v_{s,i} \in T_{C_m}\mathcal{M}$, meaning it is  orthogonal to the base state, satisfying $C_m \ket{\phi} = 0$. Therefore, evaluating the curvature component yields $\mathcal{P}_{T_{C_m}}(\dyad{\phi}) = C_m \dyad{\phi} (\mathbb{I}-C_m) + (\mathbb{I}-C_m) \dyad{\phi} C_m = 0$. 
    Finally, because the eigenvector $v_{s,i}$ already resides  within the tangent space $T_{C_m}\mathcal{M}$, the orthogonal projector acts upon it as the identity map, yielding $\mathcal{P}_{T_{C_m}}(v_{s,i}) = v_{s,i}$.
    
    Summing these results, the even-parity terms containing the base state and curvature components are annihilated, isolating the odd-parity tangent generator as
    \begin{equation}
        \mathcal{P}_{T_{C_m}}(A_{s,i}^{\pm}) = \pm \frac{1}{\sqrt{2}}\sin(\sqrt{2}\tau_s) v_{s,i}.
    \end{equation}
    Substituting the adaptive geodesic step size $\tau_s = 1/\sqrt{\lambda_{s-1}}$ yields the desired result.
\end{proof}

Consequently, as discussed in the previous section, this is the odd
term~\eqref{eq:odd_term} that we use to estimate the projected component
$\Delta^{(m)}_*$ through the linear reward $Y_s$ in
\eqref{eq:exact_linear}. Therefore, the updates to the tangent design superoperator
are not built from the full measurement directions $A_{s,i}^{\pm}$, but from
their tangent projections
$\mathcal P_{T_{C_m}}(A_{s,i}^{\pm})$.

Using the above lemma, the rank-one update associated with the direction
$v_{s,i}$ is the linear operator on $T_{C_m}\mathcal M$ defined by
\begin{equation}\label{eq:tensor_contribution}
    \mathcal U_{s,i}^{\mathrm{tan}}(X)
    :=
    \left\langle
    \mathcal P_{T_{C_m}}(A_{s,i}^{\pm}),X
    \right\rangle
    \mathcal P_{T_{C_m}}(A_{s,i}^{\pm})
    =
    \frac{1}{2}
    \sin^2\left(\sqrt{\frac{2}{\lambda_{s-1}}}\right)
    \langle v_{s,i},X\rangle v_{s,i},
    \qquad
    X\in T_{C_m}\mathcal M .
\end{equation}
The contribution is independent of the sign $\pm$, since the projected
directions differ only by a sign.

The tangent design superoperator at the end of epoch $m\in[M]$ is therefore
\begin{equation}\label{eq:update_tangent_superoperator}
    \mathcal V^{\mathrm{tan}}_{T_m}(X)
    =
    \mu_{m-1}X
    +
    \sum_{s=1}^{T_m}
    \sum_{i=1}^{d_{\mathrm{tan}}}
    \omega_{s,i}\,
    \mathcal U_{s,i}^{\mathrm{tan}}(X),
    \qquad
    X\in T_{C_m}\mathcal M .
\end{equation}
In the original analysis of linear bandits with vanishing noise
of~\cite{pmlr-v247-lumbreras24a}, controlling the eigenvalues of the design
matrix required a separate matrix-induction argument. In the present
tangent-space construction, this step is simpler because
$\mathcal V_{s-1}^{\mathrm{tan}}$ is a self-adjoint positive operator on the
$d_{\mathrm{tan}}=2(d-1)$-dimensional real Hilbert space
$T_{C_m}\mathcal M$. By the spectral theorem, it admits a
Frobenius-orthonormal eigenbasis
$\{v_{s,i}\}_{i=1}^{d_{\mathrm{tan}}}$ of $T_{C_m}\mathcal M$, such that
\begin{equation}\label{eq:identity_resolution}
    \sum_{i=1}^{d_{\mathrm{tan}}}
    \langle v_{s,i},X\rangle v_{s,i}
    =
    X .
\end{equation}
By fixing a uniform weight within each epoch,
\begin{equation}
    \omega_{s,i}:=\omega_m,
    \qquad s\in[T_m],\quad i\in[d_{\mathrm{tan}}],
\end{equation}
the tangent-space update is isotropic. Indeed, using~\eqref{eq:tensor_contribution}
and the resolution of the identity~\eqref{eq:identity_resolution}, for every
$X\in T_{C_m}\mathcal M$ we have
\begin{equation}
    \sum_{i=1}^{d_{\mathrm{tan}}}
    \omega_m\mathcal U_{s,i}^{\mathrm{tan}}(X)
    =
    \frac{\omega_m}{2}
    \sin^2\left(\sqrt{\frac{2}{\lambda_{s-1}}}\right)
    \sum_{i=1}^{d_{\mathrm{tan}}}
    \langle v_{s,i},X\rangle_F \, v_{s,i}
    =
    \frac{\omega_m}{2}
    \sin^2\left(\sqrt{\frac{2}{\lambda_{s-1}}}\right)
    X .
\end{equation}
Equivalently, the total update at round $s$ is a scalar multiple of the
identity superoperator on $T_{C_m}\mathcal M$
\begin{equation}\label{eq:istropic_contribution}
    \sum_{i=1}^{d_{\mathrm{tan}}}
    \omega_m\mathcal U_{s,i}^{\mathrm{tan}}
    =
    \frac{\omega_m}{2}
    \sin^2\left(\sqrt{\frac{2}{\lambda_{s-1}}}\right)
    \mathcal I_{T_{C_m}\mathcal M}.
\end{equation}
Since the hot-start design satisfies
$\mathcal V_0^{\mathrm{tan}}=\mu_{m-1}\mathcal I_{T_{C_m}\mathcal M}$,
it follows by induction that $\mathcal V_s^{\mathrm{tan}}$ remains a scalar
multiple of the identity throughout the epoch. Thus all its eigenvalues are
equal. We denote their common value by $\lambda_s$, so that
\begin{equation}
    \mathcal V_s^{\mathrm{tan}}
    =
    \lambda_s\mathcal I_{T_{C_m}\mathcal M},
    \qquad
    \lambda_s
    =
    \lambda_{\min}\!\left(\mathcal V_s^{\mathrm{tan}}\right)
    =
    \lambda_{\max}\!\left(\mathcal V_s^{\mathrm{tan}}\right).
\end{equation}
Moreover, the common eigenvalue evolves according to
\begin{equation}
    \lambda_s
    =
    \lambda_{s-1}
    +
    \frac{\omega_m}{2}
    \sin^2\left(\sqrt{\frac{2}{\lambda_{s-1}}}\right),
    \qquad
    \lambda_0=\mu_{m-1}.
\end{equation}
In particular, once the hot-start design is isotropic, any Frobenius-orthonormal
basis of $T_{C_m}\mathcal M$ is an eigenbasis at every step. In Algorithm~\ref{alg:qudit_psmaqb}
we therefore fix one tangent basis of $T_{C_m} \mathcal{M}$ at the beginning of
the epoch and reuse it for all $s$.

This scalar recursion is the only eigenvalue growth calculation needed below.

 Before analyzing the eigenvalue growth, we must specify how the design superoperator is initialized at the start of each epoch. When the algorithm transitions to a new epoch $m$ and updates the base state $C_m$, the tangent space changes. To operate within this new coordinate system, the intrinsic design superoperator must be re-initialized. However, simply resetting it to a fixed regularization $\lambda_0$ would discard the statistical confidence accumulated during the previous epoch. 

To prevent this, we introduce a \emph{hot start}. We transfer only a scalar
precision, not tangent vectors themselves: if
$\mu_{m-1}:=\lambda_{\min}(\mathcal V_{m-1,T_{m-1}}^{\mathrm{tan}})$ is the
final precision from epoch $m-1$, then at the beginning of epoch $m$ we set
\begin{equation}
    \mathcal V_{m,0}^{\mathrm{tan}}
    :=
    \mu_{m-1}\mathcal I_{T_{C_m}\mathcal M}.
\end{equation}
This initialization is an operator on the new tangent space $T_{C_m}\mathcal M$
and does not identify different tangent spaces; it only carries forward the
scalar lower bound on precision.

Now we can establish the eigenvalue growth of the tangent design superoperator by fixing the weights as we will fix them in our algorithm later.

\begin{theorem}\label{thm:isotropic_growth}
Fix an epoch $m\in [M]$. Suppose the tangent design superoperator is hot-started as $
        \mathcal V^{\mathrm{tan}}_0
        =
        \mu_{m-1}\mathcal I_{T_{C_m}\mathcal M}$,
    with inherited precision $\mu_{m-1}\ge 2$. Suppose also that the tangent superoperator $\mathcal{V}^{\mathrm{tan}}_s$ is updated as in~\eqref{eq:update_tangent_superoperator}, and
    the weights are fixed within epoch $m$, namely
    $\omega_{s,i}=\omega_m$ for all $s\in[T_m]$ and
    $i\in[d_{\mathrm{tan}}]$. Then $\mathcal V_s^{\mathrm{tan}}$ remains a
    scalar multiple of the identity on $T_{C_m}\mathcal M$. We write its
    common eigenvalue as $
        \lambda_s
        =
        \lambda_{\min}\!\left(\mathcal V_s^{\mathrm{tan}}\right)
        =
        \lambda_{\max}\!\left(\mathcal V_s^{\mathrm{tan}}\right)$.
    Moreover, let $c_0^2=\sin^2(1)$, and choose
    $\omega_m=\mu_{m-1}/\beta_{\mathrm{var}}$ for some
    $\beta_{\mathrm{var}}\ge 1$. Then
    \begin{equation} \label{eq:isotropic_growth}
        \mu_{m-1}^2 + 2 c_0^2\omega_m s
        \le
        \lambda_s^2
        \le
        \mu_{m-1}^2 + 3\omega_m s .
    \end{equation}
\end{theorem}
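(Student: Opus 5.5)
The plan is to reduce everything to the scalar recursion already derived before the statement, and then control its squared increments one step at a time.

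\emph{Isotropy.} Proceed by induction on $s$. The base case is $\mathcal V_0^{\mathrm{tan}}=\mu_{m-1}\mathcal I_{T_{C_m}\mathcal M}$. If $\mathcal V_{s-1}^{\mathrm{tan}}=\lambda_{s-1}\mathcal I_{T_{C_m}\mathcal M}$, then every Frobenius-orthonormal basis of $T_{C_m}\mathcal M$ is an eigenbasis. By Lemma~\ref{lem:tangent_annihilation} and the resolution of the identity~\eqref{eq:identity_resolution}, the total update at step $s$ is the multiple of the identity in~\eqref{eq:istropic_contribution}. Hence $\mathcal V_s^{\mathrm{tan}}=\lambda_s\mathcal I_{T_{C_m}\mathcal M}$ with
\[
\lambda_s=\lambda_{s-1}+\delta_s,\qquad \delta_s:=\frac{\omega_m}{2}\sin^2\!\Big(\sqrt{2/\lambda_{s-1}}\Big)\ge 0 .
\]
In particular $\lambda_s$ is nondecreasing in $s$, so $\lambda_{s-1}\ge\mu_{m-1}\ge 2$ for all $s$.

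\emph{Bounds on the increment.} Set $x=\sqrt{2/\lambda_{s-1}}$. Since $\lambda_{s-1}\ge 2$, we have $x\in(0,1]$. On this interval $\sin x/x$ is decreasing, which gives
\[
\sin^2(1)\,x^2\le \sin^2 x\le x^2 .
\]
Substituting $x^2=2/\lambda_{s-1}$ yields
\[
\frac{c_0^2\,\omega_m}{\lambda_{s-1}}\;\le\;\delta_s\;\le\;\frac{\omega_m}{\lambda_{s-1}} .
\]

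\emph{Squared recursion.} Write $\lambda_s^2=\lambda_{s-1}^2+2\lambda_{s-1}\delta_s+\delta_s^2$.
\begin{itemize}
\item For the lower bound, drop $\delta_s^2\ge 0$ and use $2\lambda_{s-1}\delta_s\ge 2c_0^2\omega_m$.
\item For the upper bound, $2\lambda_{s-1}\delta_s\le 2\omega_m$ and $\delta_s^2\le \omega_m^2/\lambda_{s-1}^2$. The choice $\omega_m=\mu_{m-1}/\beta_{\mathrm{var}}\le\mu_{m-1}\le\lambda_{s-1}$ gives $\omega_m^2/\lambda_{s-1}^2\le \omega_m/\lambda_{s-1}\le\omega_m/2\le\omega_m$.
\end{itemize}
Together these give
\[
\lambda_{s-1}^2+2c_0^2\omega_m\;\le\;\lambda_s^2\;\le\;\lambda_{s-1}^2+3\omega_m .
\]
Telescoping from $\lambda_0=\mu_{m-1}$ yields~\eqref{eq:isotropic_growth}.

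\emph{Main obstacle.} The only delicate points are keeping $x\le 1$, so that the sine comparison holds with the constant $\sin^2(1)$, and absorbing the quadratic term $\delta_s^2$. Both rest on the monotonicity $\lambda_{s-1}\ge\mu_{m-1}\ge 2$ and on $\beta_{\mathrm{var}}\ge 1$. This is why the hypotheses $\mu_{m-1}\ge 2$ and $\omega_m\le\mu_{m-1}$ appear in the statement.
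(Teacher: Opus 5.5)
Your proof is correct and follows the paper's argument essentially step for step. You prove isotropy by induction from the hot start, sandwich the increment between $c_0^2\omega_m/\lambda_{s-1}$ and $\omega_m/\lambda_{s-1}$ using the sine comparison on $x=\sqrt{2/\lambda_{s-1}}\in(0,1]$, and telescope the squared recursion. The only cosmetic difference is how you absorb $\delta_s^2$: you use $\omega_m\le\lambda_{s-1}$ and $\lambda_{s-1}\ge 2$, whereas the paper uses the chain $\omega_m^2/\lambda_{s-1}^2\le 1/\beta_{\mathrm{var}}^2\le 1/\beta_{\mathrm{var}}\le\omega_m$; both are valid.
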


\begin{proof}
By the calculation above~\eqref{eq:istropic_contribution}, the tangent design superoperator remains a scalar
multiple of the identity superoperator on $T_{C_m}\mathcal M$ throughout the
epoch. Hence, for every $s\in[T_m]$,
\begin{equation}
    \mathcal V_s^{\mathrm{tan}}
    =
    \lambda_s \mathcal I_{T_{C_m}\mathcal M},
\end{equation}
where the common eigenvalue satisfies the exact recursion 
\begin{equation}
    \lambda_s-\lambda_{s-1}
    =
    \frac{\omega_m}{2}
    \sin^2\!\left(
        \sqrt{\frac{2}{\lambda_{s-1}}}
    \right),
    \qquad
    \lambda_0=\mu_{m-1}.
\end{equation}
It remains to control this scalar recursion.

We start first by proving the lower bound. Fix first the substitution $x := \sqrt{2/\lambda_{s-1}}$. Because $\lambda_{s-1} \ge \mu_{m-1} \ge 2$, we have that $x$ lies in $(0, 1]$. Using the strict concavity bound $\sin(x) \ge \sin(1)x$ for $x \in (0,1]$, we square both sides to obtain $\frac{1}{2}\sin^2(x) \ge \frac{1}{2}\sin^2(1) \left(\frac{2}{\lambda_{s-1}}\right) = \frac{c_0^2}{\lambda_{s-1}}$. Thus, $\lambda_s - \lambda_{s-1} \ge \frac{c_0^2 \omega_m}{\lambda_{s-1}}$. Multiplying by $2\lambda_{s-1}$ yields $2\lambda_{s-1}(\lambda_s - \lambda_{s-1}) \ge 2 c_0^2 \omega_m$. Because $\lambda_s \ge \lambda_{s-1}$, we have $\lambda_s^2 - \lambda_{s-1}^2 = (\lambda_s + \lambda_{s-1})(\lambda_s - \lambda_{s-1}) \ge 2\lambda_{s-1}(\lambda_s - \lambda_{s-1})$. Thus, we obtain the strict lower bound $\lambda_s^2 - \lambda_{s-1}^2 \ge 2 c_0^2 \omega_m$. Summing this telescoping series from $l=1$ to $s$ we have
\begin{align}
 \lambda^2_s - \lambda^2_0  =   \sum_{l=1}^s (\lambda^2_{l} - \lambda^2_{l-1}) \geq  2s c_0^2 \omega_m .
\end{align}
And the lower bound follows using $\lambda_0 = \mu_{m-1}$.
    
Now we prove the upper bound. Using the standard inequality $\sin(x) \le x$ for $x\in [ 0 , \pi /2 ]$, we have $\frac{1}{2}\sin^2(x) \le \frac{1}{\lambda_{s-1}}$. Thus, the increment is bounded by $\lambda_s - \lambda_{s-1} \le \frac{\omega_m}{\lambda_{s-1}}$. We bound the squared difference $\lambda_s^2 - \lambda_{s-1}^2 = (\lambda_s - \lambda_{s-1})(2\lambda_{s-1} + \lambda_s - \lambda_{s-1})$. Substituting the increment bound gives 
\begin{align}
\lambda_s^2 - \lambda_{s-1}^2 \le \frac{\omega_m}{\lambda_{s-1}} \left( 2\lambda_{s-1} + \frac{\omega_m}{\lambda_{s-1}} \right) = 2\omega_m + \frac{\omega_m^2}{\lambda_{s-1}^2}
\end{align}
Then use that $\omega_m = \mu_{m-1} / \beta_{\mathrm{var}}$. To bound the final term, we leverage the fact that the running eigenvalue is monotonically increasing, so $\lambda_{s-1} \ge \mu_{m-1}$, and our initialization design ensures $\mu_{m-1} \ge 1$. This explicitly implies:
\begin{align}
\frac{\omega_m^2}{\lambda_{s-1}^2} \le \frac{(\mu_{m-1}/\beta_{\mathrm{var}})^2}{\mu_{m-1}^2} = \frac{1}{\beta_{\mathrm{var}}^2} \le \frac{1}{\beta_{\mathrm{var}}} \le \frac{\mu_{m-1}}{\beta_{\mathrm{var}}} = \omega_m.
\end{align}
This bounds the step-wise squared growth by $\lambda_s^2 - \lambda_{s-1}^2 \le 3\omega_m$. Summing as before with a telescoping sum to $s$ yields the upper bound.
\end{proof}

\section{Tangent-Space Median of Means (MoM) Estimator}\label{sec:mom_qudit}

In this section, we construct robust confidence regions for the tangent linear estimator least squares defined in~\eqref{eq:exact_linear}. We note that we introduced the adaptive weights $\omega_m$ in order to take advantage of the vanishing statistical noise when measuring in the direction of the unknown state.

Recall that $d_{\mathrm{tan}}=2(d-1)$. For each epoch $m$, step $s\in[T_m]$, tangent direction
$i\in[ d_{\mathrm{tan}}]$, and repetition $j\in[N]$, let
$X^{\pm}_{m,s,i,j}\in\{0,1\}$ denote the two binary outcomes obtained by measuring the
symmetric rank-one projectors $A^{\pm}_{m,s,i}$ on the unknown state $\rho\in\mathcal{S}^*_d$. We define
\begin{align}\label{eq:reward_tangent}
    Y^{(j)}_{m,s,i}
    :=
    \frac{1}{2}
    \left(
        X^+_{m,s,i,j}
        -
        X^-_{m,s,i,j}
    \right).
\end{align}
Let $\mathcal H_m$ be the $\sigma$-algebra generated by the data, any
algorithmic randomization used before epoch $m$, and all actions and outcomes from
epochs $1,\ldots,m-1$. In particular, the base state $C_m$, the inherited precision
$\mu_{m-1}$, and the epoch weight $\omega_m$ are $\mathcal H_m$-measurable. Let $\mathbf H_m$ denote the complete random history before epoch $m$.
Thus $\mathcal H_m=\sigma(\mathbf H_m)$ is the $\sigma$-algebra generated
by the data available at the start of epoch $m$.

For a step $s$ inside epoch $m$, write
\begin{align}
    \mathbf X_{m,s}
    :=
    \left(
        X^+_{m,s,i,j},
        X^-_{m,s,i,j}
    \right)_{i\in[d_{\mathrm{tan}}],\,j\in[N]}
\end{align}
for the collection of fresh binary outcomes observed at step $s$. For $s\in[T_m]$, we define the pre-measurement $\sigma$-algebra at step
$s$ by
\begin{align}\label{eq:pre_measurement_filtration}
    \mathcal F_{m,s-1}
    :=
    \sigma\!\left(
        \mathbf H_m,
        \mathbf X_{m,1},
        \ldots,
        \mathbf X_{m,s-1}
    \right).
\end{align}
For $s=1$, this means $\mathcal F_{m,0}=\mathcal H_m$.

Thus $\mathcal F_{m,s-1}$ represents everything that is known immediately before the measurements
at step $s$ of epoch $m$. In particular, the actions
$A^\pm_{m,s,i}$ and tangent generators $O_{m,s,i}$ have already been
chosen and are $\mathcal F_{m,s-1}$-measurable, while the fresh outcomes
$X^\pm_{m,s,i,j}$ from step $s$ are not included in
$\mathcal F_{m,s-1}$.

 Conditionally on $\mathcal F_{m,s-1}$, the actions are fixed and the fresh
outcomes satisfy
\begin{align}
    \mathbb E\!\left[
        X^\pm_{m,s,i,j}
        \mid
        \mathcal F_{m,s-1}
    \right]
    =
    \operatorname{Tr}\!\left(\rho A^\pm_{m,s,i}\right).
\end{align}

We define
\begin{align}\label{eq:mom_noise_def}
    \eta^{\pm}_{m,s,i,j}
    :=
    X^{\pm}_{m,s,i,j}
    -
    \operatorname{Tr}
    \left(
        \rho A^{\pm}_{m,s,i}
    \right),
    \qquad
    \varepsilon^{(j)}_{m,s,i}
    :=
    \frac{1}{2}
    \left(
        \eta^+_{m,s,i,j}
        -
        \eta^-_{m,s,i,j}
    \right).
\end{align}
Using the definition~\eqref{eq:reward_tangent}
and the identity from~\eqref{eq:exact_linear}, we obtain
\begin{align}\label{eq:mom_linear_model}
    Y^{(j)}_{m,s,i}
    =
    \left\langle
        \Delta_*^{(m)},O_{m,s,i}
    \right\rangle
    +
    \varepsilon^{(j)}_{m,s,i}.
\end{align}
Since the actions are fixed conditionally on $\mathcal F_{m,s-1}$ and the
binary outcomes have conditional means
$\operatorname{Tr}(\rho A^{\pm}_{m,s,i})$, we have
\begin{align}\label{eq:epsilon_filtration}
    \mathbb E
    \left[
        \varepsilon^{(j)}_{m,s,i}
        \middle|
        \mathcal F_{m,s-1}
    \right]
    =
    0.
\end{align}
And, by conditional independence of the two fresh-copy measurements,
\begin{align}
\operatorname{Var}
\left(
    \varepsilon^{(j)}_{m,s,i}
    \middle|
    \mathcal F_{m,s-1}
\right)
=
\frac{1}{4}
\left[
\operatorname{Var}
\left(
    X^+_{m,s,i,j}
    \middle|
    \mathcal F_{m,s-1}
\right)
+
\operatorname{Var}
\left(
    X^-_{m,s,i,j}
    \middle|
    \mathcal F_{m,s-1}
\right)
\right].
\end{align}
All conditional expectations and variances in the qudit epoch analysis are understood
with respect to this filtration. When the repetition index $j$ is irrelevant, we
suppress it and write $\varepsilon_{m,s,i}$.

The key variance-normalization condition for the MoM argument is
\begin{align}
    \omega_m
    \operatorname{Var}
    \left(
        \varepsilon^{(j)}_{m,s,i}
        \middle|
        \mathcal F_{m,s-1}
    \right)
    \le 1
\end{align}
for every epoch $m$, step $s$, tangent direction $i$, and repetition
$j$. This is the tangent-space analogue of the weighted MoM least-squares
normalization used in the qubit analysis~\cite{lumbreras2024learning}. Once
this condition holds, the random part of each least-squares estimator has a
controlled second moment, which is enough to combine Markov's inequality with
the distance-median MoM selector. The proof below follows the general MoM
strategy of~\cite{heavy_tail_linear_optimal}, with modifications needed
because the estimators live in the tangent space $T_{C_m}\mathcal M$ and the
design operator is a tangent-space superoperator.

In this subsection we analyze a single epoch $m$. All quantities below are understood
to belong to this fixed epoch, and we suppress the epoch subscript $m$ to lighten
notation. Thus we write
\begin{align}
    C=C_m,\qquad
    \mu=\mu_{m-1},\qquad
    \omega=\omega_m,\qquad
    T=T_m,
\end{align}
and similarly
\begin{align}
    A^\pm_{s,i}=A^\pm_{m,s,i},\qquad
    O_{s,i}=O_{m,s,i},\qquad
    Y^{(j)}_{s,i}=Y^{(j)}_{m,s,i},\qquad
    \varepsilon^{(j)}_{s,i}=\varepsilon^{(j)}_{m,s,i}.
\end{align}
We condition on the history before epoch $m$, so that $C_m$, $\mu_{m-1}$,
$\omega_m$, and the rule used to choose the within-epoch actions are fixed. The
filtration $\mathcal F_{s-1}$ (which should be read as
$\mathcal F_{m,s-1}$) denotes the information available just before the
measurements at step $s$ of this fixed epoch: it contains the pre-epoch history, all
actions and outcomes from steps $1,\ldots,s-1$, and the tangent directions and
actions chosen for step $s$, but not the fresh outcomes at step $s$.

Following the distance-median MoM construction used for the qubit estimator
in~\cite{lumbreras2024learning}, we build parallel estimators in each epoch.
Specifically, we repeat each chosen tangent action
$N = 2 \lceil 12 \log(T_{\mathrm{total}}/\delta) \rceil$ times at every step
$s \in [T_m]$. By the symmetric-pair construction above, this yields
$N$ independent difference-rewards
$Y_{s,i}^{(1)}, \dots, Y_{s,i}^{(N)}$ for the same tangent generator
$O_{s,i}$.

Using these independent samples, we construct $N$ parallel least-squares
estimators within the real tangent space $T_{C_m}\mathcal M$, whose dimension
is $d_{\mathrm{tan}}=2(d-1)$. Since the action sequence and the weights are
the same across the $N$ parallel streams, the estimators share the same
tangent design superoperator. This operator is hot-started with the inherited
precision $\mu_{m-1}$ from the previous epoch.
More precisely, let $O_{s,i}\in T_{C_m}\mathcal M$ denote the tangent
measurement direction used at round $s$ and direction $i$. We define
$\mathcal V_{T_m}^{\mathrm{tan}}:T_{C_m}\mathcal M\to T_{C_m}\mathcal M$ by
\begin{equation}\label{eq:final_mom_design}
    \mathcal V_{T_m}^{\mathrm{tan}}(X)
    :=
    \mu_{m-1}X
    +
    \sum_{s=1}^{T_m}
    \sum_{i=1}^{d_{\mathrm{tan}}}
    \omega_m
    \langle O_{s,i},X\rangle O_{s,i},
    \qquad
    X\in T_{C_m}\mathcal M .
\end{equation}

The $j$-th independent estimator for
$\Delta_*^{(m)}=\mathcal P_{T_{C_m}}(\rho-C_m)$ is then
\begin{equation}\label{eq:mom_local_est}
    \widehat{\Delta}^{(j)}
    :=
    \left(\mathcal V_{T_m}^{\mathrm{tan}}\right)^{-1} \left(
    \sum_{s=1}^{T_m}
    \sum_{i=1}^{d_{\mathrm{tan}}}
    \omega_m Y_{s,i}^{(j)} O_{s,i} \right),
    \qquad
    j\in[N].
\end{equation}

We aggregate these estimators using a distance-median rule in the design norm,
following the weighted MoM principle of~\cite{heavy_tail_linear_optimal}.
Because our local estimators are matrices in $T_{C_m}\mathcal{M}$, the weighted
norm is defined through the Frobenius inner product: for any tangent matrix
$X$,
\begin{align}
\|X\|_{\mathcal{V}_{T_m}^{\text{tan}}} = \sqrt{\langle X, \mathcal{V}_{T_m}^{\text{tan}}(X) \rangle}.
\end{align}
We compute then distance from each estimator to all other estimators using this weighted norm, find the median distance for each, and select the estimator that minimizes this median distance. In short, we compute the estimator $\hat{\Delta}_{\text{MoM}}^{(m)}$ using
\begin{equation} \label{eq:mom_selector}
\hat{\Delta}_{\text{MoM}}^{(m)} := \hat{\Delta}^{(j^*)}, \quad j^* = \argmin_{j \in [N]} y_j , \quad    y_j = \operatorname{median}_{l \neq j} \big\| \hat{\Delta}^{(j)} - \hat{\Delta}^{(l)} \big\|_{\mathcal{V}_{T_m}^{\text{tan}}}  .
\end{equation}
The following theorem formalizes the concentration bound of this MoM estimator. The original arguments of the proof can be found in~\cite[Lemma 2 and 3]{heavy_tail_linear_optimal} while here we had to make some adaptations and use properties of our explicit procedure.

\begin{theorem}
\label{thm:mom_bound}
Fix an epoch $m$ and condition on the start-of-epoch history
$\mathcal H_m$. During epoch $m$, the base state $C_m$, the tangent
space $T_{C_m}\mathcal M$, the inherited precision $\mu_{m-1}$, the
epoch weight $\omega_m$, and the epoch length $T_m$ are fixed. Let $
    \Delta_*^{(m)}
    :=
    \mathcal P_{T_{C_m}}(\rho-C_m)
    \in T_{C_m}\mathcal M$
be the local tangent target, as defined in \eqref{eq:tangent_parameter}. Let $N\in\mathbb{N}$ be the number of repetitions we use in the MoM procedure. For
each repetition block $j\in[N]$, let $\widehat{\Delta}^{(j)}$ be the
tangent least-squares estimator defined in \eqref{eq:mom_local_est}, and
let $\hat{\Delta}_{\mathrm{MoM}}^{(m)}$ be the distance-median estimator
defined in \eqref{eq:mom_selector}. Let
$\mathcal V^{\mathrm{tan}}_{T_m}$ denote the final tangent design
superoperator of epoch $m$~\eqref{eq:final_mom_design}. 

Assume that the variance-normalization condition
\begin{align}
    \omega_m
    \operatorname{Var}
    \left(
        \varepsilon^{(j)}_{m,s,i}
        \,\middle|\,
        \mathcal F_{m,s-1}
    \right)
    \le 1
\end{align}
holds almost surely for every $s\in[T_m]$, $i\in[d_{\mathrm{tan}}]$, and
$j\in[N]$, where the noise variables $\varepsilon^{(j)}_{m,s,i}$ are
defined in \eqref{eq:mom_noise_def}. Define the regularization bias and the biased target
\begin{align}
    B_m
    :=
    \mu_{m-1}
    \left(
        \mathcal V^{\mathrm{tan}}_{m,T_m}
    \right)^{-1}
    \Big(\Delta_*^{(m)}\Big) , \quad
    \Delta'_*
    :=
    \Delta_*^{(m)}-B_m .
\end{align}
Then, conditioned on $\mathcal H_m$,
\begin{align}
    \Pr\left(
        \left\|
            \hat{\Delta}_{\mathrm{MoM}}^{(m)}
            -
            \Delta'_*
        \right\|^2_{\mathcal V^{\mathrm{tan}}_{T_m}}
        \le
        72(d-1)
        \,\middle|\,
        \mathcal H_m
    \right)
    \ge
    1-\exp(-N/8).
\end{align}
In particular, for $
    N=2\left\lceil 12\log\left(T_{\mathrm{total}}/\delta\right)\right\rceil$
the conditional failure probability is at most
$\left(\delta/T_{\mathrm{total}}\right)^3$.
\end{theorem}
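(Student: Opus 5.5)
The plan is to reduce each parallel estimator to a noise term, bound its second moment in the design norm, and then run the standard distance-median argument of~\cite{heavy_tail_linear_optimal}.

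\textbf{Step 1: deterministic design.} Throughout we condition on $\mathcal H_m$. By Theorem~\ref{thm:isotropic_growth}, the final design is $\mathcal V^{\mathrm{tan}}_{T_m}=\lambda_{T_m}\mathcal I_{T_{C_m}\mathcal M}$, and the $\lambda_s$ follow a deterministic recursion driven only by $\mu_{m-1}$ and $\omega_m$. The tangent basis is fixed at the start of the epoch. Hence the generators $O_{s,i}$ and $\mathcal V^{\mathrm{tan}}_{T_m}$ are $\mathcal H_m$-measurable and do not depend on any within-epoch outcome. This gives two facts used below. First, the design norm is a fixed norm. Second, the $N$ repetition blocks are driven by disjoint sets of fresh copies, so $\widehat\Delta^{(1)},\dots,\widehat\Delta^{(N)}$ are conditionally independent given $\mathcal H_m$.

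\textbf{Step 2: decomposition.} Substitute the linear model \eqref{eq:mom_linear_model} into \eqref{eq:mom_local_est}. Since
\[
\sum_{s,i}\omega_m\langle\Delta_*^{(m)},O_{s,i}\rangle O_{s,i}=\left(\mathcal V^{\mathrm{tan}}_{T_m}-\mu_{m-1}\mathcal I\right)\left(\Delta_*^{(m)}\right),
\]
we obtain
\[
\widehat\Delta^{(j)}-\Delta'_*=\left(\mathcal V^{\mathrm{tan}}_{T_m}\right)^{-1}\left(S_j\right),\qquad S_j:=\sum_{s,i}\omega_m\,\varepsilon^{(j)}_{s,i}\,O_{s,i}.
\]

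\textbf{Step 3: second moment.} Write $\mathcal V$ for $\mathcal V^{\mathrm{tan}}_{T_m}$. Then $\|\widehat\Delta^{(j)}-\Delta'_*\|^2_{\mathcal V}=\langle S_j,\mathcal V^{-1}(S_j)\rangle$. Expand the square and use \eqref{eq:epsilon_filtration} with the tower property over the filtration $\mathcal F_{s-1}$.
\begin{itemize}
\item Cross terms from different steps vanish by the martingale property.
\item Cross terms within one step but different directions $i$ vanish because they come from independent fresh copies.
\end{itemize}
Applying the variance normalization $\omega_m\operatorname{Var}(\varepsilon)\le1$ then gives
\[
\mathbb E\!\left[\|\widehat\Delta^{(j)}-\Delta'_*\|^2_{\mathcal V}\,\middle|\,\mathcal H_m\right]\le\sum_{s,i}\omega_m\langle O_{s,i},\mathcal V^{-1}O_{s,i}\rangle=\operatorname{Tr}\!\left(\mathcal V^{-1}(\mathcal V-\mu_{m-1}\mathcal I)\right)\le d_{\mathrm{tan}}=2(d-1).
\]
By Markov's inequality, with $r^2:=4d_{\mathrm{tan}}$, each block is ``good'', meaning $\|\widehat\Delta^{(j)}-\Delta'_*\|_{\mathcal V}\le r$, with conditional probability at least $3/4$.

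\textbf{Step 4: median selection and failure probability.} Suppose more than $N/2$ blocks are good. Each good $j$ is within $2r$ of more than half of the other estimators by the triangle inequality, so its median distance satisfies $y_j\le 2r$. Therefore $y_{j^*}\le 2r$. This means $\widehat\Delta^{(j^*)}$ is within $2r$ of more than half of the estimators. Any two sets of more than half the indices intersect, so this set contains at least one good estimator. By the triangle inequality,
\[
\|\widehat\Delta^{(j^*)}-\Delta'_*\|_{\mathcal V}\le 3r,\qquad\text{hence}\qquad \|\widehat\Delta^{(j^*)}-\Delta'_*\|_{\mathcal V}^2\le 9r^2=36d_{\mathrm{tan}}=72(d-1).
\]
It remains to bound the probability that at least $N/2$ blocks are bad. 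The bad-block indicators are conditionally independent with mean at most $1/4$. Hoeffding's inequality bounds this probability by $\exp(-2N(1/4)^2)=\exp(-N/8)$. For $N=2\lceil12\log(T_{\mathrm{total}}/\delta)\rceil$ we have $N/8\ge3\log(T_{\mathrm{total}}/\delta)$, so the failure probability is at most $(\delta/T_{\mathrm{total}})^3$.

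The main obstacle is Step 1. Conditional independence of the blocks requires the design, and hence the actions, to be the same across streams and independent of the within-epoch outcomes. This is exactly what isotropic growth and the fixed basis provide. Without it, $\mathcal V$ would be random and the Markov bound in Step 3 would need a self-normalized argument instead.
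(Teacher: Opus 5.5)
Your proposal is correct and follows the paper's proof essentially step for step: the same decomposition $\widehat\Delta^{(j)}-\Delta'_*=(\mathcal V^{\mathrm{tan}}_{T_m})^{-1}(S^{(j)})$, the same second-moment bound $d_{\mathrm{tan}}=2(d-1)$ in the design norm, Markov at radius $8(d-1)$, Hoeffding giving $\exp(-N/8)$, and the distance-median triangle argument yielding $(3\sqrt{8(d-1)})^2=72(d-1)$. The paper proves the second-moment bound with a test vector $G=(\mathcal V^{\mathrm{tan}}_{T_m})^{-1}H$ and an eigenbasis expansion, where you write it compactly as $\operatorname{Tr}\bigl((\mathcal V^{\mathrm{tan}}_{T_m})^{-1}(\mathcal V^{\mathrm{tan}}_{T_m}-\mu_{m-1}\mathcal I)\bigr)$; your Step~1 also makes explicit a fact the paper uses only tacitly, namely that $O_{s,i}$ and $\mathcal V^{\mathrm{tan}}_{T_m}$ are $\mathcal H_m$-measurable because the action rule never consults within-epoch outcomes, and this is what justifies pulling the design out of the conditional expectation and treating the $N$ blocks as conditionally independent.
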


\begin{proof}
    Recall that we fixed the base state $C_m$, and the combined outcomes from the measurements~\eqref{eq:mom_linear_model} which are $Y_{s,i}^{(j)} = \langle \Delta_*^{(m)}, O_{s,i} \rangle + \epsilon_{s,i}^{(j)}$. Substituting them into the $j$-th estimator~\eqref{eq:mom_local_est} gives
\begin{equation}\label{eq:error_tangent_par}
    \widehat{\Delta}^{(j)}
    =
    \Delta_*^{(m)}
    -
    B_m
    +
    Z^{(j)},
\end{equation}
where
\begin{equation}
    Z^{(j)}
    :=
    \left(\mathcal V_{T_m}^{\mathrm{tan}}\right)^{-1} \left(
    \sum_{s=1}^{T_m}
    \sum_{i=1}^{d_{\mathrm{tan}}}
    \omega_m
    \varepsilon_{m,s,i}^{(j)}
    O_{m,s,i} \right),
    \qquad
    B_m
    :=
    \mu_{m-1}
    \left(\mathcal V_{T_m}^{\mathrm{tan}}\right)^{-1}
    (\Delta_*^{(m)} ).
\end{equation}
  Since $Z^{(j)}$ contains the random part of the estimator through the noise variables
$\varepsilon_{m,s,i}^{(j)}$, we first bound its conditional second moment, $
    \mathbb E\!\left[
        \|Z^{(j)}\|_{\mathcal V_{T_m}^{\mathrm{tan}}}^2
        \,\middle|\,
        \mathcal H_m
    \right]$.
This will then be converted into a probabilistic bound by Markov's inequality.

    Recall from~\eqref{eq:epsilon_filtration} the measurements are unbiased given the pre-measurement filtration, we have $\mathbb E
    [
        \varepsilon^{(j)}_{m,s,i}
        \mid
        \mathcal F_{m,s-1}
    ]
    =
    0.$
Hence, conditionally on the history $\mathcal H_m$, the term $Z^{(j)}$ has mean
zero i.e 
\begin{align}
    \EX \big[ Z^{(j)} \big| \mathcal{H}_m \big] = 0 .
\end{align}
Let us define the following shorthand notation 
\begin{align}\label{eq:Sj}
     S^{(j)}
    :=
    \sum_{s=1}^{T_m}
    \sum_{i=1}^{d_{\mathrm{tan}}}
    \omega_m
    \varepsilon^{(j)}_{m,s,i}
    O_{m,s,i},
\end{align}
such that $Z^{(j)} = ({\mathcal V_{T_m}^{\mathrm{tan}}})^{-1} ( S^{(j)} )$.
We start by  fixing a tangent vector $G\in T_{C_m}\mathcal M$ and computing the second moment $\mathbb E\!\left[
    \langle G,S^{(j)}\rangle^2
    \,\middle|\,
    \mathcal H_m
\right]$.
We use again~\eqref{eq:epsilon_filtration} that is $\
    \mathbb E[
        \varepsilon^{(j)}_{m,s,i}
        \mid
        \mathcal F_{m,s-1}
    ]
    =
    0$,
together with the fact that $O_{m,s,i}$ is $\mathcal F_{m,s-1}$-measurable to get
\begin{align}
    \mathbb E\!\left[
        \omega_m
        \varepsilon^{(j)}_{m,s,i}
        O_{m,s,i}
        \middle|
        \mathcal F_{m,s-1}
    \right]
    =
    0.
\end{align}
Using~\eqref{eq:Sj} we have
\begin{align}\label{eq:inner_GS}
    \langle G,S^{(j)}\rangle
    =
    \sum_{s=1}^{T_m}
    \sum_{i=1}^{d_{\mathrm{tan}}}
    \omega_m
    \varepsilon^{(j)}_{m,s,i}
    \langle G,O_{m,s,i}\rangle .
\end{align}
We need to compute the square of the above quantity and take expectation. For that first, we compute per separate all the terms and check that all the cross term vanish.

Consider two distinct terms. If $s<s'$, then, using the tower property and~\eqref{eq:epsilon_filtration} we have
\begin{align}
\begin{aligned}
&\mathbb E\!\left[
    \varepsilon^{(j)}_{m,s,i}
    \varepsilon^{(j)}_{m,s',i'}
    \langle G,O_{m,s,i}\rangle
    \langle G,O_{m,s',i'}\rangle
    \,\middle|\,
    \mathcal H_m
\right] \\
&\qquad =
\mathbb E\!\left[
    \varepsilon^{(j)}_{m,s,i}
    \langle G,O_{m,s,i}\rangle
    \langle G,O_{m,s',i'}\rangle
    \mathbb E\!\left[
        \varepsilon^{(j)}_{m,s',i'}
        \middle|
        \mathcal F_{m,s'-1}
    \right]
    \,\middle|\,
    \mathcal H_m
\right]
=
0 .
\end{aligned}
\end{align}
The case $s'>s$ is symmetric. If $s=s'$ and $i\neq i'$, then the
fresh-copy measurements in different tangent directions are conditionally
independent given $\mathcal F_{m,s-1}$, and both have conditional mean zero~\eqref{eq:epsilon_filtration}.
Therefore
\begin{equation}
    \mathbb E\!\left[
        \varepsilon^{(j)}_{m,s,i}
        \varepsilon^{(j)}_{m,s,i'}
        \middle|
        \mathcal F_{m,s-1}
    \right]
    =
    0 .
\end{equation}
Thus only the diagonal terms remain, and hence we square~\eqref{eq:inner_GS} and take expectation to get
\begin{align}\label{eq:gs_expectation}
\begin{aligned}
\mathbb E\!\left[
    \langle G,S^{(j)}\rangle^2
    \,\middle|\,
    \mathcal H_m
\right]
&=
\mathbb E\!\left[
    \sum_{s=1}^{T_m}
    \sum_{i=1}^{d_{\mathrm{tan}}}
    \omega_m^2
    \left(\varepsilon^{(j)}_{m,s,i}\right)^2
    \langle G,O_{m,s,i}\rangle^2
    \,\middle|\,
    \mathcal H_m
\right] \\
&=
\mathbb E\!\left[
    \sum_{s=1}^{T_m}
    \sum_{i=1}^{d_{\mathrm{tan}}}
    \omega_m^2
    \operatorname{Var}\!\left(
        \varepsilon^{(j)}_{m,s,i}
        \middle|
        \mathcal F_{m,s-1}
    \right)
    \langle G,O_{m,s,i}\rangle^2
    \,\middle|\,
    \mathcal H_m
\right]. 
\end{aligned}
\end{align}
Recall the assumed variance normalization, $
    \omega_m
    \operatorname{Var}\!\left(
        \varepsilon^{(j)}_{m,s,i}
        \middle|
        \mathcal F_{m,s-1}
    \right)
    \le
    1$. For every $G\in T_{C_m}\mathcal M$, using~\eqref{eq:gs_expectation} and recalling the definition of the design superoperator in~\eqref{eq:final_mom_design}, we find that 
\begin{align}\label{eq:final_gsbound}
\begin{aligned}
\mathbb E\!\left[
    \langle G,S^{(j)}\rangle^2
    \,\middle|\,
    \mathcal H_m
\right]
&\le
\sum_{s=1}^{T_m}
\sum_{i=1}^{d_{\mathrm{tan}}}
\omega_m
\langle G,O_{m,s,i}\rangle^2                                      \\
&=
\left\langle
    G,
    \left(
        \mathcal V_{T_m}^{\mathrm{tan}}-\mu_{m-1}\mathcal I_{T_{C_m}\mathcal M}
    \right)G
\right\rangle                                                       \\
&\le
\left\langle
    G,\mathcal V_{T_m}^{\mathrm{tan}} G
\right\rangle .
\end{aligned}
\end{align}
We can now compute the second moment of $Z^{(j)}$. We obtain, for every $H\in T_{C_m}\mathcal M$,
\begin{align}\label{eq:inverse_inequality}
\begin{aligned}
\mathbb E\!\left[
    \langle H,Z^{(j)}\rangle^2
    \,\middle|\,
    \mathcal H_m
\right]
&=
\mathbb E\!\left[
    \left\langle
        H,\mathcal (\mathcal{V}_{T_m}^{\mathrm{tan}})^{-1}S^{(j)}
    \right\rangle^2
    \,\middle|\,
    \mathcal H_m
\right]                                                   \\
&=
\mathbb E\!\left[
    \left\langle
        \mathcal (\mathcal{V}_{T_m}^{\mathrm{tan}})^{-1}H,S^{(j)}
    \right\rangle^2
    \,\middle|\,
    \mathcal H_m
\right]                                                   \\
&\le
\left\langle
    \mathcal (\mathcal{V}_{T_m}^{\mathrm{tan}})^{-1}H,
    \mathcal \mathcal{V}_{T_m}^{\mathrm{tan}}\mathcal (\mathcal{V}_{T_m}^{\mathrm{tan}})^{-1}H
\right\rangle                                             \\
&=
\left\langle
    H,\mathcal ( \mathcal{V}_{T_m}^{\mathrm{tan}})^{-1}H
\right\rangle ,
\end{aligned}
\end{align}
where we used the bound
$\mathbb E[\langle G,S^{(j)}\rangle^2\mid\mathcal H_m]\le
\langle G,\mathcal{V}_{T_m}^{\mathrm{tan}} G\rangle$ from~\eqref{eq:final_gsbound} with $G=\mathcal (\mathcal{V}_{T_m}^{\mathrm{tan}})^{-1}H $.

Now we can turn the above into a bound on the full  norm $\| \cdot \|_{\mathcal{V}_{T_m}^{\mathrm{tan}}}$ by expanding $Z^{(j)}$ in
an eigenbasis of $\mathcal \mathcal{V}_{T_m}^{\mathrm{tan}}$.
Let $\{e_r\}_{r=1}^{d_{\mathrm{tan}}}$ be a Frobenius-orthonormal eigenbasis of the tangent-space operator $\mathcal{V}_{T_m}^{\mathrm{tan}}$, meaning that $\mathcal{V}_{T_m}^{\mathrm{tan}} e_r=\lambda_r e_r$ with $\lambda_r>0$ for all $r$. Then
\begin{align}
\begin{aligned}
\mathbb E\!\left[
    \|Z^{(j)}\|_{\mathcal{V}_{T_m}^{\mathrm{tan}}}^2
    \,\middle|\,
    \mathcal H_m
\right]
&=
\mathbb E\!\left[
    \left\langle Z^{(j)},\mathcal{V}_{T_m}^{\mathrm{tan}} Z^{(j)}\right\rangle
    \,\middle|\,
    \mathcal H_m
\right]                                                        \\
&=
\sum_{r=1}^{d_{\mathrm{tan}}}
\lambda_r
\mathbb E\!\left[
    \langle e_r,Z^{(j)}\rangle^2
    \,\middle|\,
    \mathcal H_m
\right]                                                        \\
&\le
\sum_{r=1}^{d_{\mathrm{tan}}}
\lambda_r
\left\langle e_r,\mathcal (\mathcal{V}_{T_m}^{\mathrm{tan}})^{-1}e_r\right\rangle                \\
&=
\sum_{r=1}^{d_{\mathrm{tan}}}
\lambda_r\frac{1}{\lambda_r}
=
d_{\mathrm{tan}}
=
2(d-1),
\end{aligned}
\end{align}
where we have applied~\eqref{eq:inverse_inequality} in the inequality with $H = e_r\in T_{C_m}$ and $Z^{(j)} = \sum_{r=1}^{d_{\mathrm{tan}}}\langle e_r , Z^{(j)} \rangle e_r$. Recall also that $(\mathcal{V}_{T_m}^{\mathrm{tan}})^{-1}$ denotes the inverse of $\mathcal{V}_{T_m}^{\mathrm{tan}}$ as a positive definite self-adjoint operator on $T_{C_m}\mathcal M$ with respect to the Frobenius inner product; hence $\mathcal (\mathcal{V}_{T_m}^{\mathrm{tan}})^{-1}e_r=\lambda_r^{-1}e_r$ and $\langle e_r,\mathcal (\mathcal{V}_{T_m}^{\mathrm{tan}})^{-1}e_r\rangle=1/\lambda_r$.

Having bounded the second moment of $Z^{(j)}$ we are ready to translate it into a probabilistic bound. Applying Markov's inequality conditionally on $\mathcal H_m$, we get
\begin{align}
\begin{aligned}
\Pr\left(
    \|Z^{(j)}\|_{\mathcal{V}_{T_m}^{\mathrm{tan}}}^2
    \ge
    8(d-1)
    \,\middle|\,
    \mathcal H_m
\right)
&\le
\frac{
    \mathbb E\left[
        \|Z^{(j)}\|_{\mathcal{V}_{T_m}^{\mathrm{tan}}}^2
        \mid
        \mathcal H_m
    \right]
}{
    8(d-1)
}                                                    \\
&\le
\frac{2(d-1)}{8(d-1)}
=
\frac14.
\end{aligned}
\end{align}
Equivalently, with conditional probability at least $3/4$ we have
\begin{align}\label{eq:34prob}
  \mathrm{Pr}\left(  \|Z^{(j)}\|^2_{\mathcal{V}_{T_m}^{\mathrm{tan}}}
    \le
    8(d-1) \,\middle|\,
        \mathcal H_m \right) \geq \frac{3}{4}.
\end{align}
    Now that we have bounded the separate estimators we will use the median of means technique to get a better concentration bound.

   Define the good-index set
\begin{align}
    \mathcal G
    :=
    \left\{
        j\in[N]:
        \left\|
            \widehat{\Delta}^{(j)}-\Delta'_*
        \right\|_{\mathcal{V}_{T_m}^{\mathrm{tan}}}
        \le
      \sqrt{8(d-1)}
    \right\}.
\end{align}
Since
\begin{align}
    \widehat{\Delta}^{(j)}-\Delta'_*
    =
    Z^{(j)},
\end{align}
by~\eqref{eq:34prob} each index is conditionally good with probability at least $3/4$. Moreover,
the $N$ parallel streams use independent fresh-copy measurements. Hence,
conditionally on $\mathcal H_m$, the indicators
\begin{align}
    I_j := \mathds 1\{j\in\mathcal G\}
\end{align}
are independent and satisfy
\begin{align}
    \mathbb E[I_j\mid\mathcal H_m]\ge \frac34.
\end{align}
Hoeffding's inequality for independent bounded variables gives
\begin{align}
\begin{aligned}
\Pr\left(
    |\mathcal G|
    \le
    \frac N2
    \,\middle|\,
    \mathcal H_m
\right)
&=
\Pr\left(
    \sum_{j=1}^N I_j
    \le
    \frac N2
    \,\middle|\,
    \mathcal H_m
\right)                                      \\
&\le
\exp(-N/8).
\end{aligned}
\end{align}
Therefore, conditionally on $\mathcal H_m$, with probability at least
$1-\exp(-N/8)$, more than half of the estimators are good. Equivalently,
\begin{align}\label{eq:good_event_mom}
\Pr\left(
    |\mathcal G|>\frac N2
    \,\middle|\,
    \mathcal H_m
\right)
\ge
1-\exp(-N/8),
\end{align}
where
\begin{align}
\mathcal G
=
\left\{
j\in[N]:
\left\|
\widehat{\Delta}^{(j)}-\Delta'_*
\right\|_{\mathcal{V}^{\mathrm{tan}}_{T_m}}
\le
\sqrt{8(d-1)}
\right\}.
\end{align}
    
    Conditioned on this high-probability event, we analyze the distance-median selector defined in~\eqref{eq:mom_selector}. To bound the distance between the selected estimator and the true biased target, we proceed through three logical deductions relying on the triangle inequality.

First, we bound the median distance for any estimator belonging to the good set. For any two good estimators $g_1, g_2 \in \mathcal{G}$, the triangle inequality ensures their distance is bounded by the sum of their individual errors, which gives
\begin{align}
\| \hat{\Delta}^{(g_1)} - \hat{\Delta}^{(g_2)}\|_{\mathcal{V}_{T_m}^{\text{tan}}} \leq \|\hat{\Delta}^{(g_1)} - \Delta'_* \|_{\mathcal{V}_{T_m}^{\text{tan}}}+\| \hat{\Delta}^{(g_2)} - \Delta'_* \|_{\mathcal{V}_{T_m}^{\text{tan}}}\leq 2 \sqrt{8(d-1)}. 
\end{align}
Because by~\eqref{eq:good_event_mom} more than half of all the estimators are good, the median distance from any one good estimator $g$ to the rest must be bounded by this maximum distance to another good estimator. Recall that this median distance for estimator $g$ is defined as $\text{median}_{l \neq g} \| \hat{\Delta}^{(g)} - \hat{\Delta}^{(l)} \|$. Therefore, we establish that
\begin{align}
\text{median}_{l \neq g} \| \hat{\Delta}^{(g)} - \hat{\Delta}^{(l)} \|_{\mathcal{V}_{T_m}^{\text{tan}}} \leq 2 \sqrt{8(d-1)}.
\end{align}

Next, we consider the specific estimator chosen by our algorithm, denoted as $j^*$. By definition, the algorithm selects the index $j^*$ that minimizes this median distance across all the estimators. Since the median distance for our selected estimator must be less than or equal to the median distance of any good estimator, we substitute the explicit norms directly to find
\begin{align}
\text{median}_{l \neq j^*} \| \hat{\Delta}^{(j^*)} - \hat{\Delta}^{(l)} \|_{\mathcal{V}_{T_m}^{\text{tan}}} \leq \text{median}_{l \neq g} \| \hat{\Delta}^{(g)} - \hat{\Delta}^{(l)} \|_{\mathcal{V}_{T_m}^{\text{tan}}} \leq 2 \sqrt{8(d-1)}.
\end{align}

We now bridge the gap between our selected estimator and the target parameter $\Delta'_*$. Let $\mathcal{S}_{j^*}$ denote the subset of estimators that fall within the median distance from $j^*$ which is
\begin{align}
\mathcal{S}_{j^{*}} := \left\{ l \in [N] : \| \hat{\Delta}^{(j^*)} - \hat{\Delta}^{(l)} \|_{\mathcal{V}_{T_m}^{\text{tan}}} \leq \text{median}_{k \neq j^*} \| \hat{\Delta}^{(j^*)} - \hat{\Delta}^{(k)} \|_{\mathcal{V}_{T_m}^{\text{tan}}} \right\}.
\end{align}
By definition of the median over the \(N-1\) distances with \(\ell\neq j^*\),
the set \(S_{j^*}\) contains at least \(N/2\) indices. On the event
\(|\mathcal G|>N/2\), we therefore have
\begin{align}
    |S_{j^*}|+|\mathcal G| > N .
\end{align}
Hence $S_{j^*}\cap \mathcal G\neq\emptyset$.

This guarantees the existence of at least one estimator $g^*$ such that $g^* \in \mathcal{S}_{j^*}$ and $g^* \in \mathcal{G}$. By the definition of the set $\mathcal{S}_{j^*}$, the fact that $g^*$ is an element of this set directly implies that its distance from $j^*$ satisfies
\begin{align}
\| \hat{\Delta}^{(j^*)} - \hat{\Delta}^{(g^*)} \|_{\mathcal{V}_{T_m}^{\text{tan}}} \leq 2 \sqrt{8(d-1)}.
\end{align}

Recall from~\eqref{eq:good_event_mom} that we can bound the distance in the good respect to the biased target $\Delta'_* :=\Delta_*^{(m)} - B_m$. Then applying triangle inequalities against  $j^*$ we get
\begin{align}
\| \hat{\Delta}^{(j^*)} - \Delta'_* \|_{\mathcal{V}_{T_m}^{\text{tan}}} &\leq \| \hat{\Delta}^{(j^*)} - \hat{\Delta}^{(g^*)} \|_{\mathcal{V}_{T_m}^{\text{tan}}} + \| \hat{\Delta}^{(g^*)} - \Delta'_* \|_{\mathcal{V}_{T_m}^{\text{tan}}} \nonumber \\
&\leq 2 \sqrt{8(d-1)} + \sqrt{8(d-1)} \nonumber \\
&= 3 \sqrt{8(d-1)}.
\end{align}

Squaring this final radius yields the purely statistical confidence bound $(3 \sqrt{8(d-1)})^2 = 72(d-1)$, which completes the proof.

\end{proof}

\section{Qudit algorithm}\label{sec:qudit_algorithm}

We now have all the required tools to state the epoch-based algorithm for qudit pure-state tomography with minimal regret. The algorithm freezes the base state $C_m$ during each epoch, performs all linear estimation in the tangent space $T_{C_m}\mathcal{M}$, and carries only the scalar precision $\mu_m$ to the next epoch. This prevents mixing tangent vectors defined at different base points. 

We use two indices for the tangent design superoperator. The index $m$ labels the epoch, while~$s$ labels the step inside that epoch. Thus,
\begin{equation}
    \mathcal{V}^{\mathrm{tan}}_{m,s}
    :
    T_{C_m}\mathcal{M}
    \to
    T_{C_m}\mathcal{M}
\end{equation}
denotes the tangent design superoperator after $s$ updates in epoch $m$. At the start of the epoch we hot-start it as
\begin{equation}
    \mathcal{V}^{\mathrm{tan}}_{m,0}
    =
    \mu_{m-1}
    \mathcal{I}_{T_{C_m}\mathcal{M}}.
\end{equation}
Then, for $s\geq 1$, the superoperator $\mathcal{V}^{\mathrm{tan}}_{m,s-1}$ is the design superoperator available before choosing the actions at step $s$. We denote its smallest eigenvalue by
\begin{equation}
    \lambda_{m,s-1}
    :=
    \lambda_{\min}
    \left(
    \mathcal{V}^{\mathrm{tan}}_{m,s-1}
    \right).
\end{equation}
At the end of the epoch, the precision passed to the next epoch is
\begin{equation}
    \mu_m
    :=
    \lambda_{\min}
    \left(
    \mathcal{V}^{\mathrm{tan}}_{m,T_m}
    \right).
\end{equation}
This notation emphasizes that the design superoperator is always defined in the tangent space of the current base state $C_m$, and that only the scalar precision $\mu_m$ is transferred to the next epoch.

For the warm-up phase, we only need a constant-accuracy estimate of the
unknown pure state. We use any standard tomography subroutine based on
rank-one two-outcome measurements which, with probability at least
$1-\delta_w$, returns a pure state $C_1\in\mathcal S_d^*$ satisfying
\begin{align}\label{eq:warmup_condition_qudit}
    \|\rho-C_1\|_F^2 \le \frac14 .
\end{align}
For example, this can be obtained by linear inversion or projected
least-squares tomography with rank-one measurement directions drawn from
a complex projective design, followed by projection onto
$\mathcal S_d^*$; see, e.g., the projected least-squares tomography
guarantees for 2-design/uniform rank-one measurements in
\cite{guctua2020fast} and the rank-one design recovery results of
\cite{kueng2017low}. Since this step is used only to reach a fixed
constant accuracy, we denote its sample cost by
\begin{equation}\label{eq:warmup_sample_cost_qudit}
    T_0
    =
    \mathcal{O}\!\left(
    d^2\log(1/\delta_{\mathrm{w}})
    \right).
\end{equation}
We denote this event by
\begin{equation}\label{eq:event_G0_qudit}
    \mathcal{G}_0
    :=
    \left\{
    \|\rho-C_1\|_F^2
    \leq
    \frac{1}{4}
    \right\}.
\end{equation}

We start by defining all the constants used during the algorithm and that will be used through the formal proof of the regret analysis and its auxiliary lemmas.
Set
\begin{align}\label{eq:qudit_alg_parameters}
d_{\mathrm{tan}}&:=2(d-1),\qquad
N:= 2\left \lceil 12\log(T_{\mathrm{total}}/\delta)\right\rceil,\qquad
c_0^2:=\sin^2(1), \nonumber\\
\beta_{\mathrm{stat}}&:=72(d-1),\qquad
\beta_{\max}:=4\beta_{\mathrm{stat}},\qquad
L_r:= 6, \nonumber\\
\beta_{\mathrm{var}}&:=L_r^2\beta_{\max}+1,
\qquad
\alpha:=\left\lceil \frac{8L_r^4\beta_{\mathrm{var}}}{c_0^2}\right\rceil,
\qquad
\mu_0:=4L_r^2\beta_{\max}.
\end{align}
At epoch $m$, the quantities used for the measurements are
\begin{align}\label{eq:qudit_alg_actions}
A^{\pm}_{m,s,i}&:=\operatorname{Retract}_{C_m}\!\left(
\pm \frac{1}{\sqrt{\lambda_{m,s-1}}}v_{m,i}
\right), \nonumber\\
O_{m,s,i}&:=\frac{1}{\sqrt{2}}
\sin\!\left(\sqrt{\frac{2}{\lambda_{m,s-1}}}\right)
v_{m,i},
\qquad
 \mathcal U_{m,s,i}^{\mathrm{tan}}(X)
    :=
    \frac{1}{2}
    \sin^2\left(\sqrt{\frac{2}{\lambda_{m,s-1}}}\right)
    \langle v_{m,i},X\rangle v_{m,i}.
\end{align}
where $\{v_{m,i}\}_{i=1}^{d_{\mathrm{tan}}}$ is an orthonormal eigenbasis of $\mathcal{V}^{\mathrm{tan}}_{m,0}$ in $T_{C_m}\mathcal{M}$. For each block $j\in[N]$, the tangent least-squares estimator at the end of epoch $m$ is
\begin{align}\label{eq:qudit_alg_lse}
\widehat{\Delta}^{(j)}_m
:=
\left(\mathcal{V}^{\mathrm{tan}}_{m,T_m}\right)^{-1}
\left( \sum_{s=1}^{T_m}\sum_{i=1}^{d_{\mathrm{tan}}}
\omega_m Y^{(j)}_{m,s,i}O_{m,s,i} \right).
\end{align}
The median-of-means aggregation is
\begin{align}\label{eq:qudit_alg_mom}
y_j
:=
\operatorname{median}_{\ell\in[N],\,\ell\neq j}
\left\|
\widehat{\Delta}^{(j)}_m-\widehat{\Delta}^{(\ell)}_m
\right\|_{\mathcal{V}^{\mathrm{tan}}_{m,T_m}},
\qquad
j_m^*:=\argmin_{j\in[N]}y_j,
\qquad
\widehat{\Delta}^{\mathrm{MoM}}_m:=\widehat{\Delta}^{(j_m^*)}_m.
\end{align}
The full algorithm can be found in Algorithm~\ref{alg:qudit_psmaqb}.

\begin{algorithm}[t]
\caption{Qudit PSMAQB}
\label{alg:qudit_psmaqb}
\footnotesize
\DontPrintSemicolon
\SetKwInput{Require}{Require}
\SetKwInput{Return}{Return}

\Require{dimension $d$, total budget $T_{\mathrm{total}}$, failure parameters $\delta,\delta_{\mathrm{w}}$, and number of epochs $M$}

Run the warm-up tomography step described above and obtain $C_1\in\mathcal{S}_d^*$ satisfying~\eqref{eq:warmup_condition_qudit} on the event $\mathcal{G}_0$\;
Set the constants as in~\eqref{eq:qudit_alg_parameters}\;

\For{$m=1,\ldots,M$}{
Set $T_m\gets\lceil\alpha\mu_{m-1}\rceil$, $\mathcal{V}^{\mathrm{tan}}_{m,0}\gets\mu_{m-1}\mathcal{I}_{T_{C_m}\mathcal{M}}$, and $\omega_m\gets\mu_{m-1}/\beta_{\mathrm{var}}$\;
Fix an orthonormal eigenbasis  $\{v_{m,i}\}_{i=1}^{d_{\mathrm{tan}}}$ of $\mathcal{V}^{\mathrm{tan}}_{m,0}$ in $T_{C_m}\mathcal{M}$\;
\For{$s=1,\ldots,T_m$}{

Construct $A^{\pm}_{m,s,i}$, $O_{m,s,i}$, and $ \mathcal U_{m,s,i}^{\mathrm{tan}}$ as in~\eqref{eq:qudit_alg_actions}, for all $i\in[d_{\mathrm{tan}}]$\;

Measure $N$ independent copies with each pair $A^+_{m,s,i},A^-_{m,s,i}$ and set $Y^{(j)}_{m,s,i}\gets (X^+_{m,s,i,j}-X^-_{m,s,i,j})/2$\;

Update $\mathcal{V}^{\mathrm{tan}}_{m,s}\gets \mathcal{V}^{\mathrm{tan}}_{m,s-1}+\omega_m\sum_{i=1}^{d_{\mathrm{tan}}} \mathcal U_{m,s,i}^{\mathrm{tan}}$\;
}

Compute $\widehat{\Delta}^{(j)}_m$ for all $j\in[N]$ using~\eqref{eq:qudit_alg_lse}\;

Compute $\widehat{\Delta}^{\mathrm{MoM}}_m$ using~\eqref{eq:qudit_alg_mom}\;

Set $\mu_m\gets\lambda_{\min}(\mathcal{V}^{\mathrm{tan}}_{m,T_m})$\;

\eIf{$\widehat{\Delta}^{\mathrm{MoM}}_m=0$}{
Set $C_{m+1}\gets C_m$\;
}{
Set $\widehat{\mathcal{V}}_m\gets\widehat{\Delta}^{\mathrm{MoM}}_m/\|\widehat{\Delta}^{\mathrm{MoM}}_m\|_F$ and
\[
\widehat{\gamma}_m\gets
\frac{1}{2}\arcsin\!\left(
\min\left\{1,\sqrt{2}\|\widehat{\Delta}^{\mathrm{MoM}}_m\|_F\right\}
\right).
\]
Set $C_{m+1}\gets\operatorname{Retract}_{C_m}(\sqrt{2}\widehat{\gamma}_m\widehat{\mathcal{V}}_m)$\;
}

Output the current estimate $C_{m+1}$\;
}

\Return{$C_{M+1}$}
\end{algorithm}

The analytic challenge of the epoch-based construction is to connect confidence bounds across tangent spaces based at different points of the manifold. At the end of epoch $m-1$, the algorithm updates the base state from $C_{m-1}$ to $C_m$. Hence the local coordinate system changes from $T_{C_{m-1}}\mathcal{M}$ to $T_{C_m}\mathcal{M}$. If all information from the previous epoch were discarded, the algorithm would repeatedly restart the exploration phase and the regret bound would become suboptimal. On the other hand, if one tried to reuse the previous observations directly, tangent vectors defined at $C_{m-1}$ would have to be projected into the new tangent space at $C_m$. This would introduce a residual term in the linear model, and this residual would accumulate into a linear contribution to the regret.
The hot start avoids both problems. At the beginning of a new epoch, we discard the raw observations and the previous tangent vectors, but we keep the statistical precision accumulated in the previous epoch. More precisely, if
\begin{align}
\mu_{m-1}:=
\lambda_{\min}\!\left(\mathcal{V}^{\mathrm{tan}}_{m-1,T_{m-1}}\right),
\end{align}
then the new tangent design superoperator is initialized as
\begin{align}
\mathcal{V}^{\mathrm{tan}}_{m,0}:=
\mu_{m-1}\mathcal{I}_{T_{C_m}\mathcal{M}}.
\end{align}
Thus, the algorithm starts epoch $m$ in the new tangent space $T_{C_m}\mathcal{M}$, but with the precision inherited from epoch $m-1$.

At the end of epoch $m-1$, the tangent-space MoM estimator gives an estimate
\begin{align}
\widehat{\Delta}^{(m-1)}_{\mathrm{MoM}}
\in T_{C_{m-1}}\mathcal{M}
\end{align}
of the local target parameter
\begin{align}
\Delta_*^{(m-1)}
=
\mathcal{P}_{T_{C_{m-1}}}(\rho-C_{m-1}).
\end{align}
The concentration bound from Section~\ref{sec:mom_qudit} controls this estimator up to the regularization bias
\begin{align}
B_{m-1}:=
\mu_{m-2}
\left(\mathcal{V}^{\mathrm{tan}}_{m-1,T_{m-1}}\right)^{-1}
(\Delta_*^{(m-1)}).
\end{align}
In particular, by the triangle inequality, with probability at least $1-\delta$,
\begin{equation}\label{eq:mom_bound_expanded}
\left\|
\widehat{\Delta}^{(m-1)}_{\mathrm{MoM}}
-
\Delta_*^{(m-1)}
\right\|_{\mathcal{V}^{\mathrm{tan}}_{m-1,T_{m-1}}}
\le
\sqrt{\beta_{\mathrm{stat}}}
+
\|B_{m-1}\|_{\mathcal{V}^{\mathrm{tan}}_{m-1,T_{m-1}}}.
\end{equation}
We define the total squared confidence radius at epoch $m-1$ as
\begin{equation}\label{eq:beta_m_definition}
\beta_{m-1}
:=
\left(
\sqrt{\beta_{\mathrm{stat}}}
+
\|B_{m-1}\|_{\mathcal{V}^{\mathrm{tan}}_{m-1,T_{m-1}}}
\right)^2 .
\end{equation}
Since
$\mathcal{V}^{\mathrm{tan}}_{m-1,T_{m-1}}
\succeq
\mu_{m-1}\mathcal{I}_{T_{C_{m-1}}\mathcal{M}}$,
the weighted confidence bound implies
\begin{equation}\label{eq:unweighted_tangent_error}
\left\|
\widehat{\Delta}^{(m-1)}_{\mathrm{MoM}}
-
\Delta_*^{(m-1)}
\right\|_F^2
\le \frac{1}{\mu_{m-1}} \left\|
\widehat{\Delta}^{(m-1)}_{\mathrm{MoM}}
-
\Delta_*^{(m-1)}
\right\|^2_{\mathcal{V}^{\mathrm{tan}}_{m-1,T_{m-1}}}
\le
\frac{\beta_{m-1}}{\mu_{m-1}}.
\end{equation}

It remains to show that the confidence radius does not grow indefinitely across epochs. We prove this in the next section. On the global success event, the update to $C_m$ stays in the region where the inverse retraction is stable, the hot start ensures the variance normalization required by the MoM estimator, and our choice of epoch length reduces the regularization bias. These ingredients imply the uniform bound
\begin{align}
\beta_m\leq \beta_{\max}
\end{align}
for all epochs.

\subsection{Uniform control of the confidence radius}

In this section we prove that the confidence radius of the qudit algorithm remains uniformly bounded over the epochs. This is the main point needed to make the hot start argument work. Indeed, at the beginning of each epoch the tangent space changes, so we need to show that the new base state is still close enough to the true state and that the next tangent-space estimator can be controlled with the same confidence radius. This is the step in Algorithm~\ref{alg:qudit_psmaqb} where, after computing $\widehat{\Delta}^{\mathrm{MoM}}_m$, we update the precision $\mu_m$ and retract from $C_m$ to $C_{m+1}$.

We will prove this by induction over the epochs. Let $\mathcal{G}_0$ be the event that the warm-up phase succeeds, namely that the initial state $C_1$ satisfies
\begin{align}
\|\rho-C_1\|_F^2\leq \frac{1}{4}.
\end{align}
For $m\geq 1$, we define $\mathcal{G}_m$ as the event that the tangent-space MoM estimator at the end of epoch $m$ satisfies the concentration bound of Theorem~\ref{thm:mom_bound}, namely
\begin{equation}\label{eq:event_Gm_qudit}
    \mathcal{G}_m
    :=
    \left\{
    \left\|
    \widehat{\Delta}^{(m)}_{\mathrm{MoM}}
    -
    \left(
    \Delta_*^{(m)}-B_m
    \right)
    \right\|^2_{\mathcal{V}^{\mathrm{tan}}_{m,T_m}}
    \leq
    \beta_{\mathrm{stat}}
    \right\}.
\end{equation}
We also define the cumulative success event
\begin{align}\label{eq:event_Em_qudit}
\mathcal{E}_m:=\bigcap_{k=0}^m \mathcal{G}_k .
\end{align}
The goal is to show that, on $\mathcal{E}_m$, the base state remains in the local region where the inverse retraction is stable and the confidence radius satisfies $\beta_m\leq \beta_{\max} $ for all epochs $m$.

We start with two elementary geometric facts. The first one relates the ambient distance between the current base state and the true state to the norm of the corresponding tangent projection. The second one shows that, if the tangent estimator is accurate enough, then the retraction update produces a new base state whose error is controlled by the tangent estimation error.

\begin{lemma} \label{lem:secant_tangent}
    For any pure states $C\in\mathcal{S}_d^*$ and $\rho\in\mathcal{S}_d^*$, let $x = \|\rho - C\|_F^2$. The  target parameter $\Delta_* = \mathcal{P}_{T_C}(\rho - C) \in T_C\mathcal{M}$ defined in~\eqref{eq:tangent_parameter} satisfies 
    \begin{equation}
        \|\Delta_*\|_F^2 = x \left( 1 - \frac{x}{2} \right).
    \end{equation}
\end{lemma}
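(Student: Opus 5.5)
We prove the identity by a direct computation in state-vector coordinates adapted to $C$. Write $C=\ket{\psi_c}\!\bra{\psi_c}$ and $\rho=\ket{\psi}\!\bra{\psi}$, and decompose
\begin{equation*}
\ket{\psi}=a\ket{\psi_c}+\ket{\chi},\qquad \braket{\psi_c}{\chi}=0,
\end{equation*}
so that $|a|^2+\|\chi\|^2=1$. The global phase of $\ket{\psi}$ can be chosen so that $a\ge 0$, although only $|a|$ will matter.

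\textbf{Step 1: express $x$ through $|a|^2$.} Since both states are pure, the computation preceding~\eqref{eq:regret_def} gives
\begin{equation*}
x=\|\rho-C\|_F^2=2\bigl(1-\operatorname{Tr}(\rho C)\bigr)=2\bigl(1-|a|^2\bigr).
\end{equation*}
Hence $|a|^2=1-x/2$ and $\|\chi\|^2=1-|a|^2=x/2$.

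\textbf{Step 2: compute the tangent projection.} By Lemma~\ref{lem:projector_properties} we have $\mathcal P_{T_C}(C)=0$, so
\begin{equation*}
\Delta_*=\mathcal P_{T_C}(\rho)=C\rho(\mathbb I-C)+(\mathbb I-C)\rho C.
\end{equation*}
Use $C\ket{\psi}=a\ket{\psi_c}$ and $(\mathbb I-C)\ket{\psi}=\ket{\chi}$. This gives
\begin{equation*}
\Delta_*=a\ket{\psi_c}\!\bra{\chi}+\bar a\ket{\chi}\!\bra{\psi_c}.
\end{equation*}
This is exactly the tangent form~\eqref{eq:qudit_tangent} with $\ket{\phi}=\sqrt2\,\bar a\ket{\chi}$.

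\textbf{Step 3: take the Frobenius norm.} The two rank-one terms are Frobenius-orthogonal, because $\operatorname{Tr}\bigl((\ket{\psi_c}\!\bra{\chi})^\dagger\ket{\chi}\!\bra{\psi_c}\bigr)=\braket{\chi}{\psi_c}\braket{\chi}{\psi_c}=0$ by orthogonality of $\ket{\psi_c}$ and $\ket{\chi}$. Each term has squared norm $|a|^2\|\chi\|^2$, so
\begin{equation*}
\|\Delta_*\|_F^2=2|a|^2\|\chi\|^2=2\Bigl(1-\frac x2\Bigr)\frac x2=x\Bigl(1-\frac x2\Bigr).
\end{equation*}
This is the claimed identity.

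There is no real obstacle here; the only care needed is in the Frobenius-norm computation of Step 3. In particular, the cross terms must vanish by orthogonality rather than being dropped, and the factor of $2$ relating infidelity to Frobenius distance must be tracked correctly. The degenerate case $\ket{\chi}=0$ is covered by the same formulas, giving $x=0$ and $\Delta_*=0$.
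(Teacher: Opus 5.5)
Your proof is correct and follows essentially the same route as the paper. The paper writes $\ket{\psi}=\cos\gamma\,\ket{\psi_c}+\sin\gamma\,\ket{\phi}$ with $\ket{\phi}\perp\ket{\psi_c}$, obtains $x=2\sin^2\gamma$ and $\Delta_*=\cos\gamma\sin\gamma\,(\ket{\psi_c}\!\bra{\phi}+\ket{\phi}\!\bra{\psi_c})$, and reads off $\|\Delta_*\|_F^2=2\cos^2\gamma\sin^2\gamma$, which is your computation with $a=\cos\gamma$ and $\ket{\chi}=\sin\gamma\,\ket{\phi}$; keeping $a$ complex and $\ket{\chi}$ unnormalized only removes the phase convention and handles $\rho=C$ without choosing an arbitrary $\ket{\phi}$ (the trace in your Step 3 is actually $\langle\psi_c|\chi\rangle^2$ rather than its conjugate, which is immaterial since it vanishes).
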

\begin{proof}
    Parameterize $\rho = \dyad{\psi}$ and $C = \dyad{\psi_c}$, where the true state is separated from the base state by a geodesic angle $\gamma$, such that $\ket{\psi} = \cos\gamma\ket{\psi_c} + \sin\gamma\ket{\phi}$ with $\braket{\phi}{\psi_c} = 0$. The total squared error is $x = \|\rho - C\|_F^2 = 2(1-\cos^2\gamma) = 2\sin^2\gamma$. 
    The tangent projection~\eqref{eq:projector_def} evaluates to $\Delta_* = C(\rho-C)(\mathbb{I}-C) + (\mathbb{I}-C)(\rho-C)C$. Substituting the parameterization yields $\Delta_* = \cos\gamma\sin\gamma(\dyad{\psi_c}{\phi} + \dyad{\phi}{\psi_c})$. Its squared Frobenius norm evaluates to $\|\Delta_*\|_F^2 = 2\cos^2\gamma\sin^2\gamma = 2\sin^2\gamma(1-\sin^2\gamma)$. Substituting $\sin^2\gamma = x/2$  yields $\|\Delta_*\|_F^2 = 2(x/2)(1 - x/2) = x(1 - x/2)$.
\end{proof}

\begin{lemma}\label{lem:inverse_retract}
    Let $C\in\mathcal{S}_d^*$ and let $\rho\in\mathcal{S}_d^*$ be such that $
        \|\rho-C\|_F^2\leq \frac{1}{2}$.
    Let $\Delta_*:=\mathcal{P}_{T_C}(\rho-C)\in T_C\mathcal{M}$ defined in~\eqref{eq:tangent_parameter}.
    Given a tangent estimator $\hat{\Delta}\in T_C\mathcal{M}$, define $
        \hat{{V}}:=\frac{\hat{\Delta}}{\|\hat{\Delta}\|_F}$ whenever $\hat{\Delta}\neq 0$, and use the convention that $C_{\mathrm{new}}=C$ if $\hat{\Delta}=0$. We define the angle
    \begin{equation}
        \hat{\gamma}(\hat{\Delta})
        :=
        \frac{1}{2}
        \arcsin\!\left(
        \min\left\{1,\sqrt{2}\|\hat{\Delta}\|_F\right\}
        \right).
    \end{equation}
    The update of the base state is
    \begin{equation}\label{eq:anchor_update_exact}
        C_{\mathrm{new}}
        =
        \operatorname{Retract}_{C}
        \left(
        \sqrt{2}\,\hat{\gamma}(\hat{\Delta})\hat{V}
        \right).
    \end{equation}
    If $\hat{\Delta}=\Delta_*$, then $C_{\mathrm{new}}=\rho$. Moreover, for any estimator satisfying $\|\hat{\Delta}\|_F\leq \sqrt{\frac{3}{8}},$
    we have
    \begin{equation}
        \|\rho-C_{\mathrm{new}}\|_F^2
        \leq
        L_r^2
        \|\hat{\Delta}-\Delta_*\|_F^2
        \qquad
        \textrm{with} \quad L_r=6.
    \end{equation}
\end{lemma}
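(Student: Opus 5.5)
The plan is to reduce everything to an explicit computation in the two-dimensional plane spanned by $\ket{\psi_c}$ and one orthogonal direction, exactly as in the proof of Lemma~\ref{lem:secant_tangent}, and then to obtain the Lipschitz estimate from a mean-value bound on a smooth map between state vectors.

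\textbf{Step 1: set-up.} Write $C=\dyad{\psi_c}$. Every tangent vector $\Delta\in T_C\mathcal M$ has the form $\Delta=\frac{1}{\sqrt2}(\dyad{\chi}{\psi_c}+\dyad{\psi_c}{\chi})$ with $\braket{\psi_c}{\chi}=0$. This correspondence $\Delta\leftrightarrow\chi$ is real-linear and isometric: $\|\Delta\|_F=\|\chi\|$. In particular $\|\hat\Delta-\Delta_*\|_F=\|\hat\chi-\chi_*\|$.

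\textbf{Step 2: exactness when $\hat\Delta=\Delta_*$.} Choose the global phase of $\ket\psi$ so that $\ket\psi=\cos\gamma\ket{\psi_c}+\sin\gamma\ket{\phi}$ with $\gamma\in[0,\pi/2]$ and $\ket\phi\perp\ket{\psi_c}$. From the proof of Lemma~\ref{lem:secant_tangent}:
\begin{itemize}
\item the hypothesis $\|\rho-C\|_F^2=2\sin^2\gamma\le\frac12$ gives $\gamma\le\pi/6$;
\item $\Delta_*$ corresponds to $\chi_*=\sqrt2\cos\gamma\sin\gamma\,\ket\phi$, so $\sqrt2\|\Delta_*\|_F=\sin 2\gamma\le\sin(\pi/3)<1$.
\end{itemize}
The minimum in $\hat\gamma$ is therefore inactive. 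Since $2\gamma\le\pi/3$ lies in the principal branch, $\hat\gamma(\Delta_*)=\gamma$. Also $\hat V$ corresponds to $\ket\phi$, so by \eqref{eq:qudit_retract} the retraction with step $\tau=\sqrt2\gamma$ produces the projector onto $\cos\gamma\ket{\psi_c}+\sin\gamma\ket\phi=\ket\psi$. Hence $C_{\mathrm{new}}=\rho$. The same computation shows $\|\Delta_*\|_F\le\sqrt{3/8}$, so both $\Delta_*$ and $\hat\Delta$ lie in the ball $\mathcal B:=\{\|\chi\|\le\sqrt{3/8}\}$.

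\textbf{Step 3: the update as a map on vectors.} On $\mathcal B$ we have $\sqrt2 r\le\sqrt3/2<1$, so the minimum is again inactive. Define
\begin{align}
F(\chi):=\cos g(r)\ket{\psi_c}+\frac{\sin g(r)}{r}\ket\chi,\qquad r=\|\chi\|,\quad g(r)=\tfrac12\arcsin(\sqrt2 r),
\end{align}
with $F(0)=\ket{\psi_c}$. The retraction formula gives $C_{\mathrm{new}}=\dyad{F(\hat\chi)}$, and Step 2 gives $\dyad{F(\chi_*)}=\rho$ with the same phase convention. For unit vectors $u,w$,
\begin{align}
\|\dyad u-\dyad w\|_F^2=2(1-|\braket uw|^2)\le 2\|u-w\|^2 .
\end{align}
It therefore suffices to show that $F$ is $L$-Lipschitz on the convex ball $\mathcal B$ with $L^2\le L_r^2/2$.

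\textbf{Step 4: the Lipschitz bound (main obstacle).} This is the one step needing care, because $F$ involves $\sin g(r)/r$ near $r=0$ and $\arcsin$ near its endpoint. I would bound the Jacobian of $F$ on $\mathcal B$ by splitting into radial and angular directions, which are orthogonal.
\begin{itemize}
\item \emph{Radial direction.} $F$ moves along a unit-speed great circle parametrized by $g$, so the derivative norm is $g'(r)=\frac{1}{\sqrt2\sqrt{1-2r^2}}\le\sqrt2$ for $2r^2\le 3/4$.
\item \emph{Angular directions.} The derivative norm is $\sin g(r)/r\le g(r)/r$. Since $g$ is convex with $g(0)=0$, $g(r)/r\le g'(r)\le\sqrt2$.
\end{itemize}
Hence the operator norm of the Jacobian is at most $\sqrt2$, smoothness at $r=0$ being clear from the power series of $\sin g(r)/r$. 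The mean-value inequality on the convex set $\mathcal B$ then gives
\begin{align}
\|F(\hat\chi)-F(\chi_*)\|\le\sqrt2\,\|\hat\chi-\chi_*\| .
\end{align}
Combining with Step 3,
\begin{align}
\|\rho-C_{\mathrm{new}}\|_F^2\le 4\|\hat\Delta-\Delta_*\|_F^2\le L_r^2\|\hat\Delta-\Delta_*\|_F^2 ,
\end{align}
with considerable slack relative to $L_r=6$. The degenerate case $\hat\Delta=0$ is covered by continuity of $F$ at $0$, which matches the convention $C_{\mathrm{new}}=C$.
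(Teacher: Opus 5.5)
Your proof is correct, but it reaches the Lipschitz bound by a genuinely different route from the paper.

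\textbf{The paper's route.} The paper stays at the level of projectors. It rewrites the update in the closed form $C_{\mathrm{new}}=WC+\frac{2}{1+W}\hat\Delta^2+\hat\Delta$, with $W=\sqrt{1-2\|\hat\Delta\|_F^2}$. It then differentiates this matrix-valued map directly in $\hat\Delta$ and bounds the four resulting terms separately by the triangle inequality. After a monotonicity check this gives a derivative bound of about $5.18\le 6$.

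\textbf{Your route.} You lift to unit vectors through the phase-fixed map $F(\chi)$. Using the geodesic structure, you split the Jacobian into a radial part of norm $g'(r)$ and an angular part of norm $\sin g(r)/r$. This makes the operator norm exactly $\max\{g'(r),\sin g(r)/r\}=g'(r)\le\sqrt2$. You then pay a factor $\sqrt2$ to return to projectors via $\||u\rangle\!\langle u|-|w\rangle\!\langle w|\|_F\le\sqrt2\|u-w\|$. The result is the constant $2$ instead of roughly $5.18$. So $L_r=2$ would already suffice, which would shrink the numerical constants $\beta_{\mathrm{var}}$, $\alpha$, $\mu_0$ downstream, though not the $d^3\log^2 T$ scaling.

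\textbf{What each buys.} The paper's closed form is manifestly smooth at $\hat\Delta=0$ and needs no phase convention or lift. In your route, the regularity of $\sin g(r)/r$ at $r=0$ has to be argued. The cleanest way is to note that $\cos g=\sqrt{(1+W)/2}$ and $\sin g(r)/r=1/\sqrt{1+W}$, with the same $W$ as the paper.

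\textbf{Two small points to make explicit.}
\begin{enumerate}
\item In Step 4, the operator-norm claim needs the \emph{images} of the radial and angular directions to be orthogonal, not just the input directions. They are: the radial image $g'(r)\bigl(-\sin g\,\ket{\psi_c}+\cos g\,\chi/r\bigr)$ lies in the real span of $\ket{\psi_c}$ and $\chi/r$. The angular directions, including the phase direction $i\chi/r$, map to vectors real-orthogonal to both. Even the crude bound $\sqrt{g'(r)^2+(\sin g(r)/r)^2}\le 2$ would still give $\sqrt8<6$.
\item In Step 2, the case $\Delta_*=0$, where $\hat V$ is undefined, should be dispatched by the convention $C_{\mathrm{new}}=C=\rho$.
\end{enumerate}
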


\begin{proof}
    We first check that the update is exact when the tangent estimator is equal to the true tangent projection. Write
    \begin{align}
        C=\dyad{\psi_c},
    \end{align}
    and choose the phase of $\ket{\psi}$ such that
    \begin{align}
    \ket{\psi}=\cos(\gamma)\ket{\psi_c}+\sin(\gamma)\ket{\phi_*},
    \end{align}
    where $\braket{\phi_*}{\psi_c}=0$. Since
     \begin{align}
        \|\rho-C\|_F^2=2\sin^2(\gamma)\leq \frac{1}{2},
    \end{align}
    we have $\gamma\leq \pi/6$. In particular, $2\gamma\leq \pi/3$, so the inverse of the sine is taken on the correct branch.

    If $\Delta_*=0$, then Lemma~\ref{lem:secant_tangent} implies $\rho=C$ under the assumption $\|\rho-C\|_F^2\leq 1/2$, and the claim is trivial. Otherwise, the true tangent projection is
     \begin{align}
        \Delta_*
        =
        \cos(\gamma)\sin(\gamma)
        \left(
        \ketbra{\phi_*}{\psi_c}+\ketbra{\psi_c}{\phi_*}
        \right)
        =
        \frac{1}{\sqrt{2}}\sin(2\gamma)V_*,
    \end{align}
    where
     \begin{align}
        V_*=\frac{1}{\sqrt{2}}
        \left(
        \ketbra{\phi_*}{\psi_c}+\ketbra{\psi_c}{\phi_*}
        \right)
    \end{align}
    is the unit tangent direction. Hence
     \begin{align}
        \|\Delta_*\|_F=\frac{1}{\sqrt{2}}\sin(2\gamma).
    \end{align}
    Therefore, when $\hat{\Delta}=\Delta_*$, the angle recovered by the algorithm is
    \begin{align}
        \hat{\gamma}
        =
        \frac{1}{2}\arcsin\!\left(\sqrt{2}\|\Delta_*\|_F\right)
        =
        \frac{1}{2}\arcsin(\sin(2\gamma))
        =
        \gamma.
    \end{align}
    Substituting the step size $\tau=\sqrt{2}\gamma$ into the retraction formula~\eqref{eq:qudit_retract} gives
     \begin{align}
        \operatorname{Retract}_{C}(\sqrt{2}\gamma V_*)
        =
        \dyad{\psi}
        =
        \rho.
    \end{align}

    It remains to prove the stability bound. Since the true tangent vector also satisfies
    \begin{align}
        \|\Delta_*\|_F^2
        =
        \|\rho-C\|_F^2
        \left(
        1-\frac{\|\rho-C\|_F^2}{2}
        \right)
        \leq
        \frac{3}{8}
    \end{align}
    by Lemma~\ref{lem:secant_tangent}, it is enough to show that the map
    \begin{align}
        \hat{\Delta}\mapsto
        \operatorname{Retract}_{C}
        \left(
        \sqrt{2}\,\hat{\gamma}(\hat{\Delta})\hat{V}
        \right)
    \end{align}
    is $L_r$-Lipschitz on the domain $\|\hat{\Delta}\|_F\leq \sqrt{3/8}$.

    We assume first that $\hat{\Delta}\neq 0$ and write
    \begin{align}
        x:=\|\hat{\Delta}\|_F.
    \end{align}
    Since $x\leq \sqrt{3/8}$, we have $
\sqrt{2}x\leq \frac{\sqrt{3}}{2}<1 $. Therefore the minimum in the definition of $\hat{\gamma}$ is attained by $\sqrt{2}x$, and $\hat{\gamma}
=
\frac{1}{2}\arcsin(\sqrt{2}x).$ Equivalently, $
\sin(2\hat{\gamma})=\sqrt{2}x.$ Define
    \begin{align}
        W:=\cos(2\hat{\gamma})=\sqrt{1-2x^2}.
    \end{align}
    The half-angle identities give
    \begin{align}
        \cos^2(\hat{\gamma})
        &=
        \frac{1+\cos(2\hat{\gamma})}{2}
        =
        \frac{1+W}{2},\\
        \sin^2(\hat{\gamma})
        &=
        \frac{1-\cos(2\hat{\gamma})}{2}
        =
        \frac{1-W}{2}.
    \end{align}
    Substituting these identities and $\hat{V}=\hat{\Delta}/x$ into the retraction formula of $C_{\mathrm{new}}$~\eqref{eq:anchor_update_exact} gives
    \begin{align}
        C_{\mathrm{new}}
        &=
        \cos^2(\hat{\gamma})C
        +
        \sin^2(\hat{\gamma})\dyad{\phi}
        +
        \frac{1}{\sqrt{2}}\sin(2\hat{\gamma})\hat{V}
        \nonumber\\
        &=
        \frac{1+W}{2}C
        +
        \frac{1-W}{2}\dyad{\phi}
        +
        \frac{1}{\sqrt{2}}(\sqrt{2}x)\frac{\hat{\Delta}}{x}
        \nonumber\\
        &=
        \frac{1+W}{2}C
        +
        \frac{1-W}{2}\dyad{\phi}
        +
        \hat{\Delta}.
    \end{align}
    Since $\hat{\Delta}\in T_C\mathcal{M}$, we can write
    \begin{align}
        \hat{\Delta}
        =
        \frac{x}{\sqrt{2}}
        \left(
        \ketbra{\phi}{\psi_c}+\ketbra{\psi_c}{\phi}
        \right) , \quad \text{therefore} \quad   \hat{\Delta}^2
        =
        \frac{x^2}{2}
        \left(
        \dyad{\phi}+C
        \right),
    \end{align}
    and hence
    \begin{align}
        \dyad{\phi}
        =
        \frac{2}{x^2}\hat{\Delta}^2-C.
    \end{align}
    Substituting this into the expression for $C_{\mathrm{new}}$ gives
    \begin{align}
        C_{\mathrm{new}}
        &=
        \frac{1+W}{2}C
        +
        \frac{1-W}{2}
        \left(
        \frac{2}{x^2}\hat{\Delta}^2-C
        \right)
        +
        \hat{\Delta}
        \nonumber\\
        &=
        \left(
        \frac{1+W}{2}
        -
        \frac{1-W}{2}
        \right)C
        +
        \frac{1-W}{x^2}\hat{\Delta}^2
        +
        \hat{\Delta}
        \nonumber\\
        &=
        WC
        +
        \frac{1-W}{x^2}\hat{\Delta}^2
        +
        \hat{\Delta}.
    \end{align}
    Using
    \begin{align}
        1-W^2=1-(1-2x^2)=2x^2,
    \end{align}
    we also have
    \begin{align}
        \frac{1-W}{x^2}
        &=
        \frac{(1-W)(1+W)}{x^2(1+W)}
        =
        \frac{1-W^2}{x^2(1+W)}
        =
        \frac{2x^2}{x^2(1+W)}
        =
        \frac{2}{1+W}.
    \end{align}
    Thus the update can be written as
    \begin{equation}\label{eq:algebraic_retract}
        C_{\mathrm{new}}
        =
        WC
        +
        \frac{2}{1+W}\hat{\Delta}^2
        +
        \hat{\Delta}.
    \end{equation}
    This formula also extends continuously to the case $\hat{\Delta}=0$.

    We now bound the derivative of this map. Let $H\in T_C\mathcal{M}$ be a tangent perturbation. Since $
        x^2=\langle \hat{\Delta},\hat{\Delta}\rangle,$  we have
    \begin{align}
        d(x^2)=2\langle \hat{\Delta},H\rangle.
    \end{align}
    Applying the chain rule to $W=(1-2x^2)^{1/2}$ gives
    \begin{align}
        dW
        &=
        \frac{1}{2\sqrt{1-2x^2}}(-2)d(x^2)
        =
        -\frac{2}{W}\langle \hat{\Delta},H\rangle,\\
        d\left(\frac{2}{1+W}\right)
        &=
        -\frac{2}{(1+W)^2}dW
        =
        \frac{4}{W(1+W)^2}
        \langle \hat{\Delta},H\rangle.
    \end{align}
    Differentiating~\eqref{eq:algebraic_retract} then gives
    \begin{align}\label{eq:dcnew}
        dC_{\mathrm{new}}
        &=
        dW\,C
        +
        d\left(\frac{2}{1+W}\right)\hat{\Delta}^2
        +
        \frac{2}{1+W}d(\hat{\Delta}^2)
        +
        d\hat{\Delta}
        \nonumber\\
        &=
        -\frac{2}{W}
        \langle \hat{\Delta},H\rangle C
        +
        \frac{4}{W(1+W)^2}
        \langle \hat{\Delta},H\rangle \hat{\Delta}^2
        +
        \frac{2}{1+W}
        (\hat{\Delta}H+H\hat{\Delta})
        +
        H.
    \end{align}

    We bound the Frobenius norm of each term. First, by Cauchy-Schwarz,
    \begin{align}
        |\langle \hat{\Delta},H\rangle|
        \leq
        \|\hat{\Delta}\|_F\|H\|_F
        =
        x\|H\|_F.
    \end{align}
    Since $C$ is a rank-$1$ projector, $\|C\|_F=1$, and the first term in~\eqref{eq:dcnew} is bounded by
    \begin{align}
        \frac{2x}{W}\|H\|_F.
    \end{align}
    For the second term, we use
    \begin{align}
        \|\hat{\Delta}^2\|_F
        =
        \left\|
        \frac{x^2}{2}(\dyad{\phi}+C)
        \right\|_F
        =
        \frac{x^2}{\sqrt{2}},
    \end{align}
    since $\dyad{\phi}$ and $C$ are orthogonal rank-$1$ projectors. Hence the quadratic term in~\eqref{eq:dcnew} is bounded by
    \begin{align}
        \frac{4}{W(1+W)^2}
        (x\|H\|_F)
        \frac{x^2}{\sqrt{2}}
        =
        \frac{2\sqrt{2}x^3}{W(1+W)^2}
        \|H\|_F.
    \end{align}
    For the third term, we use $\|AB\|_F\leq \|A\|_2\|B\|_F$. The non-zero eigenvalues of $\hat{\Delta}$ are $\pm x/\sqrt{2}$, so $\|\hat{\Delta}\|_2=\frac{x}{\sqrt{2}}$. Therefore,
    \begin{align}
        \|\hat{\Delta}H+H\hat{\Delta}\|_F
        \leq
        2\|\hat{\Delta}\|_2\|H\|_F
        =
        \sqrt{2}x\|H\|_F,
    \end{align}
    and the third term in~\eqref{eq:dcnew} is bounded by
    \begin{align}
        \frac{2\sqrt{2}x}{1+W}\|H\|_F.
    \end{align}
    The last term contributes $\|H\|_F$. Combining all bounds gives
    \begin{equation}
        \frac{\|dC_{\mathrm{new}}\|_F}{\|H\|_F}
        \leq
        L(x)
        :=
        1
        +
        \frac{2x}{W}
        +
        \frac{2\sqrt{2}x^3}{W(1+W)^2}
        +
        \frac{2\sqrt{2}x}{1+W}.
    \end{equation}
    On the interval $x\in[0,\sqrt{3/8}]$, we have $W(x)=\sqrt{1-2x^2}\geq 1/2$. A direct derivative check shows that $L(x)$ is increasing on this interval. Therefore,
    \begin{align}
        L(x)
        &\leq
        L\left(\sqrt{\frac{3}{8}}\right)
        \nonumber\\
        &=
        1
        +
        \frac{2\sqrt{3/8}}{1/2}
        +
        \frac{2\sqrt{2}(3/8)^{3/2}}{(1/2)(3/2)^2}
        +
        \frac{2\sqrt{2}\sqrt{3/8}}{3/2}
        \nonumber\\
        &\approx
        5.181
        \leq
        6.
    \end{align}
    Thus the update map is $L_r$-Lipschitz on the domain $\|\hat{\Delta}\|_F\leq \sqrt{3/8}$, with $L_r=6$. Since the map sends $\Delta_*$ to $\rho$, we finally get
    \begin{align}
        \|\rho-C_{\mathrm{new}}\|_F
        \leq
        L_r\|\hat{\Delta}-\Delta_*\|_F.
    \end{align}
    Squaring both sides gives the claimed bound.
\end{proof}

The previous lemma shows that the base-state update is stable as long as the tangent estimator stays in the local domain i.e
\begin{equation}
    \|\widehat{\Delta}\|_F\leq \sqrt{\frac{3}{8}}.
\end{equation}
Thus, before applying the inverse retraction inside the induction, we need to check that the MoM estimator produced by the algorithm indeed remains in this domain. This follows from the two inductive bounds that we want to propagate: the base state is already close to the true state, and the confidence radius has not grown beyond $\beta_{\max}$.

\begin{lemma}\label{lem:domain_invariant}
    Assume that, at the end of epoch $m-1$, the induction gives
    \begin{equation}
        \|\rho-C_{m-1}\|_F^2\leq \frac{1}{4},
        \qquad
        \beta_{m-1}\leq \beta_{\max},
    \end{equation}
    where $\beta_{m-1},\beta_{\max}$ are defined in~\eqref{eq:beta_m_definition} and \eqref{eq:qudit_alg_parameters}, respectively. Conditioned on the success event $\mathcal{G}_{m-1}$~\eqref{eq:event_Gm_qudit}, the MoM estimator satisfies
    \begin{equation}
        \big\|\widehat{\Delta}_{\mathrm{MoM}}^{(m-1)} \big
        \|_F
        <
        \sqrt{\frac{3}{8}}.
    \end{equation}
\end{lemma}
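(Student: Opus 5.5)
We bound $\|\widehat{\Delta}^{(m-1)}_{\mathrm{MoM}}\|_F$ by the triangle inequality around the true tangent target:
\begin{align}
\big\|\widehat{\Delta}^{(m-1)}_{\mathrm{MoM}}\big\|_F
\le
\big\|\Delta_*^{(m-1)}\big\|_F
+
\big\|\widehat{\Delta}^{(m-1)}_{\mathrm{MoM}}-\Delta_*^{(m-1)}\big\|_F .
\end{align}
We then control the two terms separately.

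For the first term, set $x=\|\rho-C_{m-1}\|_F^2\le 1/4$ and apply Lemma~\ref{lem:secant_tangent}. This gives $\|\Delta_*^{(m-1)}\|_F^2=x(1-x/2)$, which is increasing on $[0,1]$, so
\begin{align}
\|\Delta_*^{(m-1)}\|_F^2 \le \frac14\cdot\frac78=\frac{7}{32},
\qquad
\|\Delta_*^{(m-1)}\|_F\le\sqrt{7/32}\approx 0.4677 .
\end{align}

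For the second term, on $\mathcal G_{m-1}$ the triangle inequality gives \eqref{eq:mom_bound_expanded}, namely $\|\widehat{\Delta}-\Delta_*\|_{\mathcal V}\le\sqrt{\beta_{\mathrm{stat}}}+\|B_{m-1}\|_{\mathcal V}=\sqrt{\beta_{m-1}}$. Since $\mathcal V^{\mathrm{tan}}_{m-1,T_{m-1}}\succeq\mu_{m-1}\mathcal I$, the unweighted bound \eqref{eq:unweighted_tangent_error} gives
\begin{align}
\|\widehat{\Delta}-\Delta_*\|_F^2\le \frac{\beta_{m-1}}{\mu_{m-1}}\le\frac{\beta_{\max}}{\mu_{m-1}} .
\end{align}
It remains to lower bound $\mu_{m-1}$. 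The precision is nondecreasing across epochs: $\mu_{m-1}=\lambda_{T_{m-1}}\ge\lambda_0=\mu_{m-2}$ by Theorem~\ref{thm:isotropic_growth}, whose hypothesis $\mu\ge2$ holds inductively because $\mu_0=4L_r^2\beta_{\max}\gg 2$. Hence $\mu_{m-1}\ge\mu_0=4L_r^2\beta_{\max}=144\beta_{\max}$, and therefore
\begin{align}
\|\widehat{\Delta}-\Delta_*\|_F^2\le \frac{1}{4L_r^2}=\frac{1}{144},
\qquad
\|\widehat{\Delta}-\Delta_*\|_F\le \frac{1}{12}\approx 0.0833 .
\end{align}

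Adding the two contributions gives $\|\widehat{\Delta}^{(m-1)}_{\mathrm{MoM}}\|_F\le 0.4677+0.0833\approx 0.551<\sqrt{3/8}\approx0.612$, which is the claim. When $m-1=1$ the same argument applies with $\mu_0$ directly, using the warm-up bound $\|\rho-C_1\|_F^2\le 1/4$ from $\mathcal G_0$.

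The main point needing care is the monotonicity $\mu_{m-1}\ge\mu_0$, since the hot-start sets $\mu_{m-1}=\lambda_{\min}(\mathcal V^{\mathrm{tan}}_{m-1,T_{m-1}})$. I would justify it by noting that each update adds a positive semidefinite term, so $\mathcal V^{\mathrm{tan}}_{m-1,T_{m-1}}\succeq\mu_{m-2}\mathcal I$, and then inducting down to $\mu_0$. The only other check is that the numerical margin ($0.551$ against $0.612$) is comfortable, which it is.
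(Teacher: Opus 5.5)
Your proposal is correct and follows the paper's proof essentially step for step. You use the triangle inequality around $\Delta_*^{(m-1)}$, Lemma~\ref{lem:secant_tangent} for the $\sqrt{7/32}$ bound, and the $\mathcal{G}_{m-1}$ confidence bound converted to Frobenius norm via $\mu_{m-1}\ge\mu_0=4L_r^2\beta_{\max}$ to get $1/12$; your explicit justifications that $x(1-x/2)$ is increasing and that $\mu_{m-1}\ge\mu_0$ (positive semidefinite updates) fill in steps the paper leaves implicit.
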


\begin{proof}
    We only need to separate the size of the estimator into the size of the true tangent projection and the estimation error. By the triangle inequality,
    \begin{equation}
        \|\widehat{\Delta}_{\mathrm{MoM}}^{(m-1)}\|_F
        \leq
        \|\Delta_*^{(m-1)}\|_F
        +
        \|\widehat{\Delta}_{\mathrm{MoM}}^{(m-1)}
        -
        \Delta_*^{(m-1)}\|_F .
    \end{equation}
    Let
    \begin{equation}
        x:=\|\rho-C_{m-1}\|_F^2 .
    \end{equation}
    By assumption, $x\leq 1/4$. Using Lemma~\ref{lem:secant_tangent}, the norm of the true tangent projection is
    \begin{align}
        \|\Delta_*^{(m-1)}\|_F^2
        =
        x\left(1-\frac{x}{2}\right) 
        \leq
        \frac{1}{4}
        \left(1-\frac{1}{8}\right) 
        =
        \frac{7}{32}.
    \end{align}
    Hence
    \begin{equation}
        \|\Delta_*^{(m-1)}\|_F
        \leq
        \sqrt{\frac{7}{32}}.
    \end{equation}

    We now bound the estimation error. On the event $\mathcal{G}_{m-1}$~\eqref{eq:event_Gm_qudit}, together with the definition of $\beta_{m-1}$~\eqref{eq:beta_m_definition} gives
    \begin{equation}
        \|\widehat{\Delta}_{\mathrm{MoM}}^{(m-1)}
        -
        \Delta_*^{(m-1)}\|^2_{\mathcal{V}^{\mathrm{tan}}_{m-1,T_{m-1}}}
        \leq
        \beta_{m-1}.
    \end{equation}
    Since the precision of epoch $m-1$ is $\mu_{m-1} =
        \lambda_{\min}\!\left(
        \mathcal{V}^{\mathrm{tan}}_{m-1,T_{m-1}}
        \right)$,
    we have $\mathcal{V}^{\mathrm{tan}}_{m-1,T_{m-1}}
        \succeq
        \mu_{m-1}\mathcal{I}_{T_{C_{m-1}}\mathcal{M}}$. Therefore,
    \begin{equation}
        \|\widehat{\Delta}_{\mathrm{MoM}}^{(m-1)}
        -
        \Delta_*^{(m-1)}\|_F^2
        \leq
        \frac{\beta_{m-1}}{\mu_{m-1}}.
    \end{equation}
    Using the inductive bound $\beta_{m-1}\leq \beta_{\max}$ and the initialization established in~\eqref{eq:qudit_alg_parameters} $\mu_{m-1}\geq \mu_0=4L_r^2\beta_{\max}$, we get
    \begin{align}
        \|\widehat{\Delta}_{\mathrm{MoM}}^{(m-1)}
        -
        \Delta_*^{(m-1)}\|_F
        &\leq
        \sqrt{
        \frac{\beta_{\max}}{4L_r^2\beta_{\max}}
        } =
        \frac{1}{2L_r}.
    \end{align}
    Combining the two bounds gives
    \begin{align}
        \|\widehat{\Delta}_{\mathrm{MoM}}^{(m-1)}\|_F
        &\leq
        \sqrt{\frac{7}{32}}
        +
        \frac{1}{2L_r}.
    \end{align}
    For the value $L_r=6$, this becomes
    \begin{align}
        \|\widehat{\Delta}_{\mathrm{MoM}}^{(m-1)}\|_F
        \leq
        \sqrt{\frac{7}{32}}
        +
        \frac{1}{12} 
        \leq
        0.56.
    \end{align}
    Since $ 0.56 < \sqrt{\frac{3}{8}}$, the MoM estimator lies in the domain required by Lemma~\ref{lem:inverse_retract}.
\end{proof}

The last lemma ensures that the retraction update is well-defined and stable under the induction hypothesis. The next step is to check that, once the new base state $C_m$ has been produced, the variance of the new difference rewards is small enough for the weighted MoM concentration bound to apply in the next epoch. This is where the particular choice
\begin{equation}
    \omega_m=\frac{\mu_{m-1}}{\beta_{\mathrm{var}}},
\end{equation}
is used.

\begin{theorem}\label{thm:variance_val}
    Fix an epoch $m$ and assume that the tangent design superoperator is hot-started as $\mathcal{V}^{\mathrm{tan}}_{m,0}
        =
        \mu_{m-1}\mathcal{I}_{T_{C_m}\mathcal{M}}$. Assume also that the new base state satisfies $
        \|\rho-C_m\|_F^2
        \leq
        \frac{L_r^2\beta_{\max}}{\mu_{m-1}}$.
    Then, for every step $s\in[T_m]$, tangent direction $i\in[d_{\mathrm{tan}}]$, and repetition
$j\in[N]$,
\begin{align}
    \omega_m
    \operatorname{Var}
    \left(
        \varepsilon^{(j)}_{m,s,i}
        \mid
        \mathcal F_{m,s-1}
    \right)
    \le 1 .
    \end{align}
\end{theorem}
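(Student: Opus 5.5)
Because each outcome is Bernoulli, the plan is to bound the conditional variance by the infidelity of the measured projectors, and then control that infidelity with the closeness assumption on $C_m$ together with the step size $\tau=1/\sqrt{\lambda_{m,s-1}}$. Conditionally on $\mathcal F_{m,s-1}$, the outcomes $X^\pm_{m,s,i,j}$ are Bernoulli with means $q^\pm:=\operatorname{Tr}(\rho A^\pm_{m,s,i})$. By the conditional-independence identity stated before the theorem,
\begin{align}
\operatorname{Var}\!\left(\varepsilon^{(j)}_{m,s,i}\mid\mathcal F_{m,s-1}\right)
=\tfrac14\left[q^+(1-q^+)+q^-(1-q^-)\right]
\le \tfrac14\left[(1-q^+)+(1-q^-)\right]
=\tfrac18\left(\|\rho-A^+_{m,s,i}\|_F^2+\|\rho-A^-_{m,s,i}\|_F^2\right).
\end{align}
So it suffices to show that $\|\rho-A^\pm_{m,s,i}\|_F^2\le 4\beta_{\mathrm{var}}/\mu_{m-1}$, because then $\omega_m\operatorname{Var}\le \frac{\mu_{m-1}}{\beta_{\mathrm{var}}}\cdot\frac{8\beta_{\mathrm{var}}}{8\mu_{m-1}}=1$.

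Next, I split each distance by the triangle inequality as $\|\rho-A^\pm\|_F\le\|\rho-C_m\|_F+\|C_m-A^\pm\|_F$. The first term satisfies $\|\rho-C_m\|_F^2\le L_r^2\beta_{\max}/\mu_{m-1}$ by hypothesis. For the second term I use the retraction formula. With $\tau=1/\sqrt{\lambda_{m,s-1}}$, the geodesic representation gives
\begin{align}
\|C_m-A^\pm\|_F^2=2\sin^2(\tau/\sqrt2)\le\tau^2=\frac{1}{\lambda_{m,s-1}}\le\frac{1}{\mu_{m-1}},
\end{align}
where the last step uses the monotonicity $\lambda_{m,s-1}\ge\lambda_{m,0}=\mu_{m-1}$ from Theorem~\ref{thm:isotropic_growth}. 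Since $\mu_{m-1}\ge\mu_0\ge 2$, Theorem~\ref{thm:isotropic_growth} applies.

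Combining the two terms with $(a+b)^2\le 2a^2+2b^2$ gives
\begin{align}
\|\rho-A^\pm\|_F^2\le\frac{2(L_r^2\beta_{\max}+1)}{\mu_{m-1}}=\frac{2\beta_{\mathrm{var}}}{\mu_{m-1}},
\end{align}
using the definition $\beta_{\mathrm{var}}=L_r^2\beta_{\max}+1$. This is within the required $4\beta_{\mathrm{var}}/\mu_{m-1}$ with a factor of two to spare. Substituting into the variance bound yields $\omega_m\operatorname{Var}\le \tfrac{\mu_{m-1}}{\beta_{\mathrm{var}}}\cdot\tfrac18\cdot\tfrac{4\beta_{\mathrm{var}}}{\mu_{m-1}}=\tfrac12\le1$.

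The only step needing some care is computing $\|C_m-A^\pm\|_F$. Both are pure states whose vectors have overlap $\cos(\tau/\sqrt2)$ along the geodesic $\ket{\gamma_V(\pm\tau)}$. Therefore $1-\operatorname{Tr}(C_mA^\pm)=\sin^2(\tau/\sqrt2)$, and the bound $\sin^2 y\le y^2$ closes the argument.
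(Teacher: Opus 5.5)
Your proposal is correct and follows essentially the same route as the paper: you bound the Bernoulli variance by the infidelity, split $\|\rho-A^\pm\|_F^2\le 2\|\rho-C_m\|_F^2+2\|C_m-A^\pm\|_F^2$, use the retraction bound $2\sin^2(\tau/\sqrt2)\le 1/\lambda_{m,s-1}\le 1/\mu_{m-1}$, and apply $\beta_{\mathrm{var}}=L_r^2\beta_{\max}+1$ to reach $\omega_m\operatorname{Var}\le 1/2$. The one small difference is that you justify $\lambda_{m,s-1}\ge\mu_{m-1}$ through Theorem~\ref{thm:isotropic_growth}, which requires $\mu_{m-1}\ge 2$ (not among this theorem's hypotheses), whereas the paper gets it directly from the updates being positive semidefinite, which holds unconditionally.
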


\begin{proof}
    We fix a step $s$ and a tangent direction $i$ inside epoch $m$. Recall from~\eqref{eq:reward_tangent} that the difference reward is defined from the two binary outcomes as
    \begin{equation}
        Y_{m,s,i}
        =
        \frac{1}{2}
        \left(
        X^+_{m,s,i}
        -
        X^-_{m,s,i}
        \right).
    \end{equation}
    And the noise term from~\eqref{eq:mom_noise_def} is
    \begin{equation}
        \varepsilon_{m,s,i}
        =
        \frac{1}{2}
        \left(
        \eta^+_{m,s,i}
        -
        \eta^-_{m,s,i}
        \right),
    \end{equation}
    where
    \begin{equation}
        \eta^\pm_{m,s,i}
        =
        X^\pm_{m,s,i}
        -
        \operatorname{Tr}
        \left(
        \rho A^\pm_{m,s,i}
        \right).
    \end{equation}
    The two measurements are independent conditioned on the past, so
    \begin{align}
        \operatorname{Var}
        \left(
        \varepsilon_{m,s,i}\mid \mathcal{F}_{m,s-1}
        \right)
        &=
        \frac{1}{4}
        \left(
        \operatorname{Var}
        \left(
        X^+_{m,s,i}\mid \mathcal{F}_{m,s-1}
        \right)
        +
        \operatorname{Var}
        \left(
        X^-_{m,s,i}\mid \mathcal{F}_{m,s-1}
        \right)
        \right).
    \end{align}
    For a Bernoulli random variable $X$ with mean $p$, we have that  $\mathrm{Var}(X) = p(1-p)\leq 1-p$. In our case the probability is
    \begin{equation}
        p^\pm_{m,s,i}
        =
        \operatorname{Tr}
        \left(
        \rho A^\pm_{m,s,i}
        \right).
    \end{equation}
    Since both $\rho$ and $A^\pm_{m,s,i}$ are pure states,
    \begin{equation}
        1-p^\pm_{m,s,i}
        =
        \frac{1}{2}
        \|\rho-A^\pm_{m,s,i}\|_F^2 .
    \end{equation}
    Therefore combining the above we have
    \begin{align}
        \operatorname{Var}
        \left(
        \varepsilon_{m,s,i}\mid \mathcal{F}_{m,s-1}
        \right)
        &\leq
        \frac{1}{8}
        \left(
        \|\rho-A^+_{m,s,i}\|_F^2
        +
        \|\rho-A^-_{m,s,i}\|_F^2
        \right).
    \end{align}
    We now separate the distance to the action into the distance to the base state and the distance from the base state to the action. Using
    \begin{equation}
        \|X-Y\|_F^2
        \leq
        2\|X-C_m\|_F^2
        +
        2\|C_m-Y\|_F^2,
    \end{equation}
    we obtain
    \begin{align}
        \operatorname{Var}
        \left(
        \varepsilon_{m,s,i}\mid \mathcal{F}_{m,s-1}
        \right)
        &\leq
        \frac{1}{8}
        \left(
        2\|\rho-C_m\|_F^2
        +
        2\|C_m-A^+_{m,s,i}\|_F^2
        \right.\\
        &\hspace{3.5cm}
        \left.
        +
        2\|\rho-C_m\|_F^2
        +
        2\|C_m-A^-_{m,s,i}\|_F^2
        \right).
    \end{align}
    The two actions are symmetric around $C_m$, hence
    \begin{equation}
        \|C_m-A^+_{m,s,i}\|_F^2
        =
        \|C_m-A^-_{m,s,i}\|_F^2 .
    \end{equation}
    Thus
    \begin{align}
        \operatorname{Var}
        \left(
        \varepsilon_{m,s,i}\mid \mathcal{F}_{m,s-1}
        \right)
        &\leq
        \frac{1}{2}\|\rho-C_m\|_F^2
        +
        \frac{1}{2}\|C_m-A^\pm_{m,s,i}\|_F^2 .
    \end{align}
    Recall that by assumption,
    \begin{equation}
        \|\rho-C_m\|_F^2
        \leq
        \frac{L_r^2\beta_{\max}}{\mu_{m-1}} ,
    \end{equation}
    and also by construction the actions are obtained by retracting from $C_m$ with step size $
        \tau_{m,s}
        =
        \frac{1}{\sqrt{\lambda_{m,s-1}}}$,
    where
    $
        \lambda_{m,s-1}
        =
        \lambda_{\min}
        \left(
        \mathcal{V}^{\mathrm{tan}}_{m,s-1}
        \right).
    $
    Hence by our choice of action~\eqref{eq:action_from_v} together with the retraction formula~\eqref{eq:qudit_retract} and recalling that we fixed the base state $C_m$ we have
    \begin{align}
        \|C_m-A^\pm_{m,s,i}\|_F^2
        =
        2\sin^2
        \left(
        \frac{\tau_{m,s}}{\sqrt{2}}
        \right)
        \leq
        \tau_{m,s}^2
        =
        \frac{1}{\lambda_{m,s-1}} ,
    \end{align}
    where we used $\sin (x) \leq x$. Since the design superoperator is hot-started with $\mathcal{V}^{\mathrm{tan}}_{m,0}
        =
        \mu_{m-1}\mathcal{I}_{T_{C_m}\mathcal{M}}$,
    and the updates are positive semidefinite, the eigenvalue is monotone in $s$. Therefore,
    \begin{equation}
        \lambda_{m,s-1}
        \geq
        \mu_{m-1}.
    \end{equation}
    Combining the previous bounds gives
    \begin{align}
        \operatorname{Var}
        \left(
        \varepsilon_{m,s,i}\mid \mathcal{F}_{m,s-1}
        \right)
        &\leq
        \frac{L_r^2\beta_{\max}}{2\mu_{m-1}}
        +
        \frac{1}{2\lambda_{m,s-1}} .
    \end{align}
    Multiplying by the epoch weight gives
    \begin{align}
        \omega_m
        \operatorname{Var}
        \left(
        \varepsilon_{m,s,i}\mid \mathcal{F}_{m,s-1}
        \right)
        &\leq
        \frac{\mu_{m-1}}{\beta_{\mathrm{var}}}
        \left(
        \frac{L_r^2\beta_{\max}}{2\mu_{m-1}}
        +
        \frac{1}{2\lambda_{m,s-1}}
        \right)\\
        &=
        \frac{L_r^2\beta_{\max}}{2\beta_{\mathrm{var}}}
        +
        \frac{1}{2\beta_{\mathrm{var}}}
        \frac{\mu_{m-1}}{\lambda_{m,s-1}}\\
        &\leq
        \frac{L_r^2\beta_{\max}+1}{2\beta_{\mathrm{var}}}.
    \end{align}
    Using the definition $\beta_{\mathrm{var}}=L_r^2\beta_{\max}+1$, we get
        $\omega_m
        \operatorname{Var}
        \left(
        \varepsilon_{m,s,i}\mid \mathcal{F}_{m,s-1}
        \right)
        \leq
        \frac{1}{2}
        \leq
        1.$
\end{proof}

We have now proved the local ingredients needed to continue the algorithm from one epoch to the next. Lemma~\ref{lem:inverse_retract} shows that the base-state update is stable whenever the tangent estimator stays in the local domain. Lemma~\ref{lem:domain_invariant} shows that the MoM estimator indeed stays in this domain under the induction hypothesis. Theorem~\ref{thm:variance_val} shows that, once the next base state is produced, the variance normalization required by the MoM estimator remains valid in the next tangent space.

What remains is to close the induction. For this, we still need to show that the regularization bias introduced by the hot start does not accumulate across epochs. This is the only remaining term in the confidence radius $\beta_m$. The next subsection proves that the epoch schedule makes this bias contract, and then combines all the previous estimates into the final inductive statement.

\subsection{Closing the epoch induction}\label{sec:closing_epoch_induction}

We now close the induction over the epochs. From the previous subsection, we know that if the base state at the start of epoch $m$ is sufficiently close to $\rho$, then the variance normalization required by the MoM estimator is valid. It remains to control the regularization bias introduced by the hot start. The first step is to translate the ambient closeness of $C_m$ to $\rho$ into a bound on the size of the local target parameter $\Delta_*^{(m)}$ in the new tangent-space norm.

\begin{lemma}\label{lem:coordinate_transfer}
    At the start of epoch $m$, assume that $
        \|\rho-C_m\|_F^2
        \leq
        \frac{L_r^2\beta_{\max}}{\mu_{m-1}},$
    and that the tangent design superoperator is hot-started as $
        \mathcal{V}^{\mathrm{tan}}_{m,0}
        =
        \mu_{m-1}\mathcal{I}_{T_{C_m}\mathcal{M}}.$
    Then the target parameter $
        \Delta_*^{(m)}
        =
        \mathcal{P}_{T_{C_m}}(\rho-C_m)$,
    satisfies
    \begin{equation}
        \|\Delta_*^{(m)}\|_{\mathcal{V}^{\mathrm{tan}}_{m,0}}^2
        \leq
        L_r^2\beta_{\max}.
    \end{equation}
\end{lemma}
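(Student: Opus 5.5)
The plan is to exploit the fact that the hot-start design is a scalar multiple of the identity, so the weighted norm collapses to a rescaled Frobenius norm. Since $\mathcal{V}^{\mathrm{tan}}_{m,0}=\mu_{m-1}\mathcal{I}_{T_{C_m}\mathcal{M}}$ and $\Delta_*^{(m)}\in T_{C_m}\mathcal{M}$, the definition of the weighted norm gives
\begin{equation*}
\|\Delta_*^{(m)}\|_{\mathcal{V}^{\mathrm{tan}}_{m,0}}^2
=\left\langle \Delta_*^{(m)},\mu_{m-1}\Delta_*^{(m)}\right\rangle
=\mu_{m-1}\|\Delta_*^{(m)}\|_F^2 .
\end{equation*}
The claim therefore reduces to the purely geometric bound $\|\Delta_*^{(m)}\|_F^2\le \|\rho-C_m\|_F^2$.

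For this reduction I would invoke Lemma~\ref{lem:secant_tangent} with $C=C_m$. Setting $x=\|\rho-C_m\|_F^2$, it gives $\|\Delta_*^{(m)}\|_F^2=x(1-x/2)$. Since $x\in[0,2]$ for any two pure states, we have $1-x/2\in[0,1]$, and hence $\|\Delta_*^{(m)}\|_F^2\le x$.

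A second route, which avoids the explicit formula, uses Lemma~\ref{lem:projector_properties}. Because $\mathcal{P}_{T_{C_m}}$ is a Frobenius-orthogonal projector, it is a contraction, so
\begin{equation*}
\|\mathcal{P}_{T_{C_m}}(\rho-C_m)\|_F\le\|\rho-C_m\|_F .
\end{equation*}
Either argument suffices.

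Combining the two steps with the hypothesis $x\le L_r^2\beta_{\max}/\mu_{m-1}$ gives
\begin{equation*}
\|\Delta_*^{(m)}\|_{\mathcal{V}^{\mathrm{tan}}_{m,0}}^2\le \mu_{m-1}\cdot\frac{L_r^2\beta_{\max}}{\mu_{m-1}}=L_r^2\beta_{\max},
\end{equation*}
which is the stated bound.

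I do not expect a real obstacle here. The one point to check is that the weighted norm is taken on the new tangent space $T_{C_m}\mathcal{M}$, where both $\Delta_*^{(m)}$ and the hot-started operator live. Because the hot start carries forward only the scalar $\mu_{m-1}$, no tangent vectors are transported between tangent spaces, and the identity $\|X\|^2_{\mu\mathcal{I}}=\mu\|X\|_F^2$ applies directly.
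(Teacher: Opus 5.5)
Your proof is correct and follows the paper's argument. The paper also uses that $\mathcal{P}_{T_{C_m}}$ is a Frobenius-orthogonal projector (Lemma~\ref{lem:projector_properties}), hence non-expansive, and then collapses the hot-started norm to $\mu_{m-1}\|\Delta_*^{(m)}\|_F^2$; your alternative route via Lemma~\ref{lem:secant_tangent} is also valid, since $x\in[0,2]$ gives $x(1-x/2)\le x$.
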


\begin{proof}
    The map $\mathcal{P}_{T_{C_m}}$ defined in~\eqref{eq:projector_def} is an orthogonal projector with respect to the Frobenius inner product by Lemma~\ref{lem:projector_properties}. Hence it is non-expansive, and using the assumption $
        \|\rho-C_m\|_F^2
        \leq
        \frac{L_r^2\beta_{\max}}{\mu_{m-1}},$  we have
    \begin{align}
        \|\Delta_*^{(m)}\|_F^2
        &=
        \|\mathcal{P}_{T_{C_m}}(\rho-C_m)\|_F^2 \\
        &\leq
        \|\rho-C_m\|_F^2 \\
        &\leq
        \frac{L_r^2\beta_{\max}}{\mu_{m-1}}.
    \end{align}
    Since the design superoperator is initialized as $\mathcal{V}_{m,0}^{\mathrm{tan}} = \mu_{m-1}\mathcal{I}_{T_{C_m}\mathcal{M}}$, we have
    \begin{align}
        \|\Delta_*^{(m)}\|_{\mathcal{V}^{\mathrm{tan}}_{m,0}}^2
        &=
        \left\langle
        \Delta_*^{(m)},
        \mathcal{V}^{\mathrm{tan}}_{m,0}(\Delta_*^{(m)})
        \right\rangle \\
        &=
        \mu_{m-1}
        \|\Delta_*^{(m)}\|_F^2 \\
        &\leq
        L_r^2\beta_{\max}.
    \end{align}
    This proves the claim.
\end{proof}

This lemma shows that the target parameter does not become large when we move to the new tangent space. Thus the hot start gives a controlled initial norm for $\Delta_*^{(m)}$. The only possible loss in the confidence radius now comes from the regularization bias
\begin{equation}
    B_m
    =
    \mu_{m-1}
    \left(
    \mathcal{V}^{\mathrm{tan}}_{m,T_m}
    \right)^{-1}
    \Delta_*^{(m)}.
\end{equation}
The next result shows that the epoch length is chosen  so that this bias contracts before the next update.
\begin{theorem}\label{thm:bias_contraction}
    Fix an epoch $m$ and assume the bound of Lemma~\ref{lem:coordinate_transfer} that is $\|\Delta_*^{(m)}\|_{\mathcal{V}^{\mathrm{tan}}_{m,0}}^2
        \leq
        L_r^2\beta_{\max}$. Let
    \begin{equation}
        B_m
        :=
        \mu_{m-1}
        \left(
        \mathcal{V}^{\mathrm{tan}}_{m,T_m}
        \right)^{-1}
        \Delta_*^{(m)}
    \end{equation}
    be the regularization bias at the end of the epoch. If $
        T_m
        =
        \lceil \alpha\mu_{m-1}\rceil,
        \alpha
        =
        \left\lceil
        \frac{8L_r^4\beta_{\mathrm{var}}}{c_0^2}
        \right\rceil, $ and $\omega = \mu_{m-1}/\beta_{\mathrm{var}}$,
    then
    \begin{equation}
        \|B_m\|^2_{\mathcal{V}^{\mathrm{tan}}_{m,T_m}}
        \leq
        \frac{\beta_{\max}}{4}.
    \end{equation}
    Consequently, the confidence radius satisfies $
        \beta_m\leq \beta_{\max}$ with $\beta_{m}$ defined as in~\eqref{eq:beta_m_definition}.
\end{theorem}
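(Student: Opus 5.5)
Since $\mathcal V^{\mathrm{tan}}_{m,T_m}=\lambda_{T_m}\mathcal I_{T_{C_m}\mathcal M}$ is isotropic (Theorem~\ref{thm:isotropic_growth}), the bias reduces to a scalar multiple of the target, $B_m=(\mu_{m-1}/\lambda_{T_m})\Delta_*^{(m)}$. The whole proof therefore comes down to comparing $\lambda_{T_m}$ with $\mu_{m-1}$.

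We first compute the weighted norm:
\begin{align}
\|B_m\|^2_{\mathcal V^{\mathrm{tan}}_{m,T_m}}
=\frac{\mu_{m-1}^2}{\lambda_{T_m}^2}\,\lambda_{T_m}\|\Delta_*^{(m)}\|_F^2
=\frac{\mu_{m-1}}{\lambda_{T_m}}\,\|\Delta_*^{(m)}\|^2_{\mathcal V^{\mathrm{tan}}_{m,0}}
\le \frac{\mu_{m-1}}{\lambda_{T_m}}\,L_r^2\beta_{\max},
\end{align}
where we used $\|\Delta_*^{(m)}\|^2_{\mathcal V^{\mathrm{tan}}_{m,0}}=\mu_{m-1}\|\Delta_*^{(m)}\|_F^2$ and the assumed bound from Lemma~\ref{lem:coordinate_transfer}.

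Next we need a lower bound on $\lambda_{T_m}$. Theorem~\ref{thm:isotropic_growth} applies because $\mu_{m-1}\ge\mu_0\ge 2$, where monotonicity of $\mu$ across epochs follows from the recursion. With $\omega_m=\mu_{m-1}/\beta_{\mathrm{var}}$ and $T_m\ge\alpha\mu_{m-1}$, it gives
\begin{align}
\lambda_{T_m}^2\ge \mu_{m-1}^2+2c_0^2\frac{\mu_{m-1}}{\beta_{\mathrm{var}}}\,\alpha\mu_{m-1}
\ge \mu_{m-1}^2\Big(1+\frac{2c_0^2\alpha}{\beta_{\mathrm{var}}}\Big)\ge \mu_{m-1}^2\,(1+16L_r^4),
\end{align}
since $\alpha\ge 8L_r^4\beta_{\mathrm{var}}/c_0^2$. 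Hence $\mu_{m-1}/\lambda_{T_m}\le (1+16L_r^4)^{-1/2}\le 1/(4L_r^2)$. Plugging this in yields
\[
\|B_m\|^2_{\mathcal V^{\mathrm{tan}}_{m,T_m}}\le \beta_{\max}/4,
\]
which is the first claim.

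For the consequence, recall the definition~\eqref{eq:beta_m_definition}:
\[
\beta_m=\Big(\sqrt{\beta_{\mathrm{stat}}}+\|B_m\|_{\mathcal V^{\mathrm{tan}}_{m,T_m}}\Big)^2\le\Big(\sqrt{\beta_{\mathrm{stat}}}+\tfrac12\sqrt{\beta_{\max}}\Big)^2.
\]
With $\beta_{\max}=4\beta_{\mathrm{stat}}$ we have $\sqrt{\beta_{\max}}=2\sqrt{\beta_{\mathrm{stat}}}$, so the right-hand side equals $(2\sqrt{\beta_{\mathrm{stat}}})^2=4\beta_{\mathrm{stat}}=\beta_{\max}$.

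Most of this is bookkeeping. The one point needing care is the precondition $\mu_{m-1}\ge 2$ of Theorem~\ref{thm:isotropic_growth}, together with the fact that $\mu_{m-1}=\lambda_{\min}$ at the end of the previous epoch is at least $\mu_0$ (eigenvalues only grow). Both follow by a short induction from $\mu_m=\lambda_{T_m}\ge\mu_{m-1}$.
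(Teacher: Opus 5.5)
Your proposal is correct and follows the paper's proof essentially step for step. You bound $\|B_m\|^2_{\mathcal V^{\mathrm{tan}}_{m,T_m}}$ by $L_r^2(\mu_{m-1}/\mu_m)\beta_{\max}$, use Theorem~\ref{thm:isotropic_growth} with $T_m\ge\alpha\mu_{m-1}$ to get $\mu_m\ge 4L_r^2\mu_{m-1}$, and conclude via $\beta_{\max}=4\beta_{\mathrm{stat}}$. The differences are only cosmetic: you use isotropy to compute the weighted norm exactly, whereas the paper uses $(\mathcal V^{\mathrm{tan}}_{m,T_m})^{-1}\preceq \mu_m^{-1}\mathcal I_{T_{C_m}\mathcal M}$, and you explicitly check the precondition $\mu_{m-1}\ge\mu_0\ge 2$ of Theorem~\ref{thm:isotropic_growth}, which the paper leaves implicit.
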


\begin{proof}
    We first bound the bias in the norm induced by the final tangent design superoperator. By definition,
    \begin{align}
        \|B_m\|^2_{\mathcal{V}^{\mathrm{tan}}_{m,T_m}}
        &=
        \mu_{m-1}^2
        \left\langle
        \left(
        \mathcal{V}^{\mathrm{tan}}_{m,T_m}
        \right)^{-1}
        \Delta_*^{(m)},
        \Delta_*^{(m)}
        \right\rangle .
    \end{align}
    Since  $
        \mu_m
        :=
        \lambda_{\min}
        \left(
        \mathcal{V}^{\mathrm{tan}}_{m,T_m}
        \right) $,
    we have
    \begin{equation}
        \left(
        \mathcal{V}^{\mathrm{tan}}_{m,T_m}
        \right)^{-1}
        \preceq
        \frac{1}{\mu_m}
        \mathcal{I}_{T_{C_m}\mathcal{M}} .
    \end{equation}
    Therefore,
    \begin{align}
        \|B_m\|^2_{\mathcal{V}^{\mathrm{tan}}_{m,T_m}}
        &\leq
        \frac{\mu_{m-1}^2}{\mu_m}
        \|\Delta_*^{(m)}\|_F^2 .
    \end{align}
    Using Lemma~\ref{lem:coordinate_transfer}, this gives
    \begin{align}
        \|B_m\|^2_{\mathcal{V}^{\mathrm{tan}}_{m,T_m}}
        \leq
        \frac{\mu_{m-1}^2}{\mu_m}
        \left(
        \frac{L_r^2\beta_{\max}}{\mu_{m-1}}
        \right)
        =
        L_r^2
        \left(
        \frac{\mu_{m-1}}{\mu_m}
        \right)
        \beta_{\max}.
    \end{align}

    It remains to lower bound the ratio $\mu_m/\mu_{m-1}$. By Theorem~\ref{thm:isotropic_growth},
    \begin{equation}
        \mu_m^2
        \geq
        \mu_{m-1}^2
        +
        2c_0^2\omega_m T_m .
    \end{equation}
    Since  $
        \omega_m
        =
        \frac{\mu_{m-1}}{\beta_{\mathrm{var}}} $,
    we get
    \begin{equation}
        \mu_m^2
        \geq
        \mu_{m-1}^2
        +
        \frac{2c_0^2}{\beta_{\mathrm{var}}}
        \mu_{m-1}T_m .
    \end{equation}
    By the choice of the epoch length,
    \begin{equation}
        T_m
        \geq
        \frac{8L_r^4\beta_{\mathrm{var}}}{c_0^2}
        \mu_{m-1}.
    \end{equation}
    Substituting this into the previous bound gives
    \begin{align}
        \mu_m^2
        &\geq
        \mu_{m-1}^2
        +
        16L_r^4\mu_{m-1}^2
        \geq
        16L_r^4\mu_{m-1}^2 .
    \end{align}
    Hence
    \begin{equation}
        \mu_m
        \geq
        4L_r^2\mu_{m-1}.
    \end{equation}
    Plugging this into the bias bound yields
    \begin{align}
        \|B_m\|^2_{\mathcal{V}^{\mathrm{tan}}_{m,T_m}}
        &\leq
        L_r^2
        \left(
        \frac{1}{4L_r^2}
        \right)
        \beta_{\max}
        =
        \frac{\beta_{\max}}{4}.
    \end{align}

    We now use the definition of the total confidence radius. Since
    \begin{equation}
        \beta_m
        =
        \left(
        \sqrt{\beta_{\mathrm{stat}}}
        +
        \|B_m\|_{\mathcal{V}^{\mathrm{tan}}_{m,T_m}}
        \right)^2,
    \end{equation}
    the previous bound gives
    \begin{align}
        \sqrt{\beta_m}
        &\leq
        \sqrt{\beta_{\mathrm{stat}}}
        +
        \frac{1}{2}\sqrt{\beta_{\max}} .
    \end{align}
    Finally, using $
        \beta_{\max}
        =
        4\beta_{\mathrm{stat}}$,
    we obtain
    \begin{align}
        \sqrt{\beta_m}
        &\leq
        \sqrt{\beta_{\mathrm{stat}}}
        +
        \frac{1}{2}
        \left(
        2\sqrt{\beta_{\mathrm{stat}}}
        \right)
        =
        2\sqrt{\beta_{\mathrm{stat}}}
        =
        \sqrt{\beta_{\max}}.
    \end{align}
    Thus $\beta_m\leq \beta_{\max}$.
\end{proof}

We can now assemble the induction step. The point is that each estimate proved above controls one possible way in which the algorithm could leave the good region. Lemma~\ref{lem:inverse_retract} and Lemma~\ref{lem:domain_invariant} control the update of the base state, Theorem~\ref{thm:variance_val} validates the variance condition needed for the next MoM estimate, and Theorem~\ref{thm:bias_contraction} prevents the confidence radius from growing across epochs. The following proposition collects these implications in the form used later in the regret analysis.

\begin{proposition}\label{prop:inductive_invariant}
    Let $m\geq 1$ and assume that the cumulative success event
    \begin{equation}
        \mathcal{E}_{m-1}
        =
        \bigcap_{k=0}^{m-1}\mathcal{G}_k
    \end{equation}
    holds where $\mathcal{G}_k$ is defined in~\eqref{eq:event_Gm_qudit}. Then, at the start of epoch $m$, the base state satisfies
    \begin{equation}
        \|\rho-C_m\|_F^2
        \leq
        \frac{L_r^2\beta_{\max}}{\mu_{m-1}}
        \leq
        \frac{1}{4}.
    \end{equation}
    Moreover, for every step $s\in[T_m]$ and every tangent direction $i\in[d_{\mathrm{tan}}]$, and repetition $j\in [N]$, the variance normalization condition
    \begin{equation}
        \omega_m
        \operatorname{Var}
        \left(
        \varepsilon_{m,s,i}
        \mid
        \mathcal{F}_{m,s-1}
        \right)
        \leq
        1
    \end{equation}
    holds. Consequently, the MoM concentration event at epoch $m$ satisfies
    \begin{equation}
        \Pr
        \left(
        \mathcal{G}_m
        \mid
        \mathcal{E}_{m-1}
        \right)
        \geq
        1-\delta.
    \end{equation}
    Finally, conditioned on $\mathcal{E}_m$, the confidence radius satisfies
    \begin{equation}
        \beta_m
        \leq
        \beta_{\max}.
    \end{equation}
\end{proposition}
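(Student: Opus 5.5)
The plan is to prove the four claims in order, by induction on $m$, carrying the invariants $\|\rho-C_k\|_F^2\le 1/4$ and $\beta_k\le\beta_{\max}$ for all $k\le m-1$ on $\mathcal E_{m-1}$. Each claim follows from an earlier result.

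\emph{Base case $m=1$.} Here $\mathcal E_0=\mathcal G_0$, and the warm-up gives $\|\rho-C_1\|_F^2\le 1/4$. Since $\mu_0=4L_r^2\beta_{\max}$, we have $L_r^2\beta_{\max}/\mu_0=1/4$, so the first display holds with equality on the right. There is no earlier radius to control.

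\emph{Inductive step $m\ge2$.} We work on $\mathcal E_{m-1}\subseteq\mathcal E_{m-2}$ and use the hypotheses $\|\rho-C_{m-1}\|_F^2\le 1/4$ and $\beta_{m-1}\le\beta_{\max}$. On $\mathcal G_{m-1}$, Lemma~\ref{lem:domain_invariant} gives $\|\widehat\Delta^{(m-1)}_{\mathrm{MoM}}\|_F<\sqrt{3/8}$. Also $\|\rho-C_{m-1}\|_F^2\le 1/4\le 1/2$, so Lemma~\ref{lem:inverse_retract} applies and yields
\[
\|\rho-C_m\|_F^2\le L_r^2\|\widehat\Delta^{(m-1)}_{\mathrm{MoM}}-\Delta^{(m-1)}_*\|_F^2 .
\]
By \eqref{eq:unweighted_tangent_error} together with $\beta_{m-1}\le\beta_{\max}$, the right-hand side is at most $L_r^2\beta_{\max}/\mu_{m-1}$.

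The second inequality $\le 1/4$ requires $\mu_{m-1}\ge\mu_0$. This follows from monotonicity of the precision: by Theorem~\ref{thm:isotropic_growth} (or the proof of Theorem~\ref{thm:bias_contraction}), $\mu_k\ge 4L_r^2\mu_{k-1}\ge\mu_{k-1}$. We also need the hypothesis $\mu_{k-1}\ge2$ of that theorem, which holds because $\mu_0=4\cdot36\cdot 288(d-1)\ge 2$. The ratio estimate itself is deterministic, since the eigenvalue recursion does not depend on outcomes.

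\emph{Variance normalization.} With the bound on $\|\rho-C_m\|_F^2$ and the hot start $\mathcal V^{\mathrm{tan}}_{m,0}=\mu_{m-1}\mathcal I$ fixed in the algorithm, Theorem~\ref{thm:variance_val} gives the variance normalization for all $s,i,j$ directly.

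\emph{Probability of $\mathcal G_m$.} The event $\mathcal E_{m-1}$ is $\mathcal H_m$-measurable. On it, the hypothesis of Theorem~\ref{thm:mom_bound} holds almost surely, so $\Pr(\mathcal G_m\mid\mathcal H_m)\ge 1-(\delta/T_{\mathrm{total}})^3\ge 1-\delta$ pointwise on $\mathcal E_{m-1}$. Averaging over $\mathcal H_m$ restricted to $\mathcal E_{m-1}$ (tower property) gives $\Pr(\mathcal G_m\mid\mathcal E_{m-1})\ge1-\delta$.

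\emph{Radius bound.} Lemma~\ref{lem:coordinate_transfer} applies using the first claim, and Theorem~\ref{thm:bias_contraction} then gives $\beta_m\le\beta_{\max}$. This bound is deterministic given $\mathcal E_{m-1}$, so it certainly holds on $\mathcal E_m$. This closes the induction.

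The main obstacle is the measure-theoretic step. One must verify that the conditioning in Theorem~\ref{thm:mom_bound}, which is on $\mathcal H_m$ with the variance assumption holding almost surely, is legitimately applied only on $\mathcal E_{m-1}$, where the assumption is known to hold. The clean way is to apply Theorem~\ref{thm:mom_bound} on the $\mathcal H_m$-measurable set $\mathcal E_{m-1}$, where the conditional-variance bound holds for every history in it, and then integrate. A smaller point is bookkeeping: $\beta_m$ also involves $B_m$, which depends on $\mu_{m-1}$ and $C_m$, so one should note explicitly that the bias bound uses only quantities available at the start of epoch $m$.
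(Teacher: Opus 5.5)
Your proposal is correct and follows the paper's proof essentially step for step. The paper uses the same induction: the warm-up bound together with $\mu_0=4L_r^2\beta_{\max}$ for $m=1$, and Lemma~\ref{lem:domain_invariant}, Lemma~\ref{lem:inverse_retract} and \eqref{eq:unweighted_tangent_error} for $m\ge 2$, followed by Theorem~\ref{thm:variance_val}, Theorem~\ref{thm:mom_bound} with the tower property over the $\mathcal H_m$-measurable event $\mathcal E_{m-1}$, and Theorem~\ref{thm:bias_contraction} via Lemma~\ref{lem:coordinate_transfer}. Your extra remarks fill in details the paper leaves implicit without changing the argument: justifying $\mu_{m-1}\ge\mu_0$ and $\mu_{k-1}\ge 2$ through the deterministic precision recursion, checking the $\le 1/2$ hypothesis of Lemma~\ref{lem:inverse_retract}, and noting that $\beta_m\le\beta_{\max}$ already holds on $\mathcal E_{m-1}$.
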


\begin{proof}
    We prove the statement by induction over the epochs.

    We start with the first epoch. On the event $\mathcal{G}_0$~\eqref{eq:event_G0_qudit}, the warm-up phase gives $
        \|\rho-C_1\|_F^2
        \leq
        \frac{1}{4}.$
    By the initialization of the algorithm~\eqref{eq:qudit_alg_parameters}, $
        \mu_0
        =
        4L_r^2\beta_{\max}$.
    Hence
    \begin{equation}
        \frac{L_r^2\beta_{\max}}{\mu_0}
        =
        \frac{1}{4},
    \end{equation}
    and therefore
    \begin{equation}
        \|\rho-C_1\|_F^2
        \leq
        \frac{L_r^2\beta_{\max}}{\mu_0}
        =
        \frac{1}{4}.
    \end{equation}
    Thus, the required base-state bound holds at the start of epoch $1$.

    Since the base-state bound holds, Theorem~\ref{thm:variance_val} applies to epoch $1$. Therefore, for all steps $s\in[T_1]$ and all tangent directions $i\in[d_{\mathrm{tan}}]$,
    \begin{equation}
        \omega_1
        \operatorname{Var}
        \left(
        \varepsilon_{1,s,i}
        \mid
        \mathcal{F}_{1,s-1}
        \right)
        \leq
        1.
    \end{equation}
    This is the variance condition needed to apply the tangent-space MoM concentration bound. Thus Theorem~\ref{thm:mom_bound} gives
    \begin{equation}
        \Pr
        \left(
        \mathcal{G}_1
        \mid
        \mathcal{G}_0
        \right)
        \geq
        1-\delta.
    \end{equation}
    Finally, conditioned on $\mathcal{E}_1=\mathcal{G}_0\cap\mathcal{G}_1$, Theorem~\ref{thm:bias_contraction} gives
    \begin{equation}
        \beta_1
        \leq
        \beta_{\max}.
    \end{equation}
    This proves the induction statement for the first epoch.

    We now assume that the statement has been proved up to epoch $m-1$. In particular, on the event $\mathcal{E}_{m-1}$ we have
    \begin{equation}
        \beta_{m-1}
        \leq
        \beta_{\max} , \quad \text{and} \quad
        \|\rho-C_{m-1}\|_F^2
        \leq
        \frac{1}{4}.
    \end{equation}
    Since $\mathcal{E}_{m-1}$ contains $\mathcal{G}_{m-1}$, Lemma~\ref{lem:domain_invariant} applies to the estimator produced at the end of epoch $m-1$. Hence
    \begin{equation}
        \|\widehat{\Delta}_{\mathrm{MoM}}^{(m-1)}\|_F
        <
        \sqrt{\frac{3}{8}}.
    \end{equation}
    Therefore the inverse-retraction stability bound from Lemma~\ref{lem:inverse_retract} can be applied to the update from $C_{m-1}$ to $C_m$. We obtain
    \begin{equation}
        \|\rho-C_m\|_F^2
        \leq
        L_r^2
        \left\|
        \widehat{\Delta}_{\mathrm{MoM}}^{(m-1)}
        -
        \Delta_*^{(m-1)}
        \right\|_F^2 .
    \end{equation}
    On $\mathcal{E}_{m-1}$, the definition of $\beta_{m-1}$ and the precision $\mu_{m-1}$ give
    \begin{equation}
        \left\|
        \widehat{\Delta}_{\mathrm{MoM}}^{(m-1)}
        -
        \Delta_*^{(m-1)}
        \right\|_F^2
        \leq
        \frac{\beta_{m-1}}{\mu_{m-1}}.
    \end{equation}
    Combining the last two inequalities and using $\beta_{m-1}\leq\beta_{\max}$ yields
    \begin{align}
        \|\rho-C_m\|_F^2
        &\leq
        \frac{L_r^2\beta_{m-1}}{\mu_{m-1}} 
        \leq
        \frac{L_r^2\beta_{\max}}{\mu_{m-1}} .
    \end{align}
    Since $
        \mu_{m-1}
        \geq
        \mu_0
        =
        4L_r^2\beta_{\max}$,
    we also have
    \begin{equation}
        \|\rho-C_m\|_F^2
        \leq
        \frac{1}{4}.
    \end{equation}
    This proves the base-state bound at the start of epoch $m$.

    With this bound in hand, Theorem~\ref{thm:variance_val} applies to epoch $m$. Hence, for every step $s\in[T_m]$ and every tangent direction $i\in[d_{\mathrm{tan}}]$,
    \begin{equation}
        \omega_m
        \operatorname{Var}
        \left(
        \varepsilon_{m,s,i}
        \mid
        \mathcal{F}_{m,s-1}
        \right)
        \leq
        1.
    \end{equation}
    Therefore, on $\mathcal E_{m-1}$, the assumptions of
Theorem~\ref{thm:mom_bound} are satisfied during epoch $m$. Since
$\mathcal E_{m-1}\in\mathcal H_m$, the tower property of expectation gives
\begin{align}
\begin{aligned}
\Pr\left(
    \mathcal G_m
    \mid
    \mathcal E_{m-1}
\right)
&=
\mathbb E\left[
    \Pr\left(
        \mathcal G_m
        \mid
        \mathcal H_m
    \right)
    \,\middle|\,
    \mathcal E_{m-1}
\right]  \\
&\ge
1-\exp(-N/8).
\end{aligned}
\end{align}
By the choice
\begin{align}
N=2\left\lceil 12\log(T_{\mathrm{total}}/\delta)\right\rceil,
\end{align}
we have
\begin{align}
\exp(-N/8)
\le
\left(\frac{\delta}{T_{\mathrm{total}}}\right)^3
\le
\delta,
\end{align}
and hence
\begin{align}
\Pr\left(
    \mathcal G_m
    \mid
    \mathcal E_{m-1}
\right)
\ge
1-\delta.
\end{align}
    Finally, conditioned on $\mathcal{E}_m=\mathcal{E}_{m-1}\cap\mathcal{G}_m$, the bias contraction result of Theorem~\ref{thm:bias_contraction} gives
    \begin{equation}
        \beta_m
        \leq
        \beta_{\max}.
    \end{equation}
    This completes the induction.
\end{proof}

\subsection{Regret analysis for the qudit algorithm}\label{sec:qudit_regret_analysis}

We can now prove the regret bound for the qudit algorithm. The previous section shows that, on the global success event, the base states stay close to the unknown state, the variance normalization needed for the MoM estimator remains valid, and the confidence radius is uniformly bounded by $\beta_{\max}$. The regret analysis then follows by summing the per-round infidelity of the actions chosen inside each epoch.

Let $T_0$ denote the number of samples used in the warm-up phase. Recall that
\begin{equation}
    T_0
    =
    \mathcal{O}\!\left(d^2\log(1/\delta_{\mathrm{w}})\right).
\end{equation}

For simplicity, we state the theorem for a budget containing $M$ complete epochs, so that
\begin{equation}
    T_{\mathrm{total}}
    =
    T_0
    +
    2Nd_{\mathrm{tan}}\sum_{m=1}^{M}T_m,
\end{equation}
where the factor $2$ comes from the two symmetric rank-one projectors $A^+_{m,s,i}$ and $A^-_{m,s,i}$. If the budget stops inside an epoch, the same proof applies by replacing the last epoch length by the number of steps actually played.

\begin{theorem}\label{thm:final_regret}
    Let $d\geq 2$ and consider a $d$-dimensional PSMAQB with action set $\mathcal{A}=\mathcal{S}_d^*$ and pure-state environment $\rho\in\mathcal{S}_d^*$. Run Algorithm~\ref{alg:qudit_psmaqb} with
    \begin{equation}
        \delta
        =
        \delta_{\mathrm{w}}
        =
        \frac{1}{T_{\mathrm{total}}^2}.
    \end{equation}
    Then the expected regret satisfies
    \begin{equation}
        \mathbb{E}_{\rho}\!\left[
        \operatorname{Regret}(T_{\mathrm{total}})
        \right]
        =
        \mathcal{O}\!\left(
        d^3\log^2(T_{\mathrm{total}})
        \right).
    \end{equation}
    Moreover, let $\widehat{\rho}_t$ denote the current base-state estimate of the algorithm at time $t$. Then, for $t> T_0$, $t\in[T_{\mathrm{total}}]$,
    \begin{equation}
        \mathbb{E}_{\rho}
        \left[
        1-F(\rho,\widehat{\rho}_t)
        \right]
        =
        \mathcal{O}\!\left(
        \min\left\{
        1,
        \frac{d^3\log(T_{\mathrm{total}})}{t}
        \right\}
        \right).
    \end{equation}
\end{theorem}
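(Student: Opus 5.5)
The plan is to split the expectation according to the global success event $\mathcal E_M=\bigcap_{k=0}^M\mathcal G_k$. First, a union bound controls the failure probability. The warm-up fails with probability at most $\delta_{\mathrm w}$, and by Proposition~\ref{prop:inductive_invariant} each epoch fails conditionally with probability at most $\exp(-N/8)\le(\delta/T_{\mathrm{total}})^3$. Since $M\le T_{\mathrm{total}}$,
\begin{align}
\Pr(\mathcal E_M^c)\le \delta_{\mathrm w}+M(\delta/T_{\mathrm{total}})^3=\mathcal O(T_{\mathrm{total}}^{-2}).
\end{align}
On $\mathcal E_M^c$ the regret is at most $T_{\mathrm{total}}$, so this event contributes $\mathcal O(1/T_{\mathrm{total}})$. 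The warm-up rounds contribute at most $T_0=\mathcal O(d^2\log T_{\mathrm{total}})$, which is negligible.

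On $\mathcal E_M$, I bound the per-round regret in epoch $m$. Using $\|\rho-A\|_F^2\le 2\|\rho-C_m\|_F^2+2\|C_m-A\|_F^2$, the base-state bound $\|\rho-C_m\|_F^2\le L_r^2\beta_{\max}/\mu_{m-1}$ from Proposition~\ref{prop:inductive_invariant}, and $\|C_m-A^\pm_{m,s,i}\|_F^2\le 1/\lambda_{m,s-1}\le 1/\mu_{m-1}$ (as in Theorem~\ref{thm:variance_val}), each action costs $\mathcal O(\beta_{\max}/\mu_{m-1})=\mathcal O(d/\mu_{m-1})$. The epoch has $2Nd_{\mathrm{tan}}T_m$ actions with $T_m=\lceil\alpha\mu_{m-1}\rceil$, so its regret is
\begin{align}
\mathcal O\!\left(N d\,\alpha\mu_{m-1}\cdot d/\mu_{m-1}\right)=\mathcal O(N d^2\alpha).
\end{align}
The key point is that $\mu_{m-1}$ cancels. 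Since $\beta_{\mathrm{var}}=\mathcal O(d)$, we have $\alpha=\mathcal O(d)$, and each epoch costs $\mathcal O(d^3\log T_{\mathrm{total}})$. A more careful bound using the growing $\lambda_{m,s-1}$ within the epoch is unnecessary.

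Next I count epochs. Theorem~\ref{thm:bias_contraction} gives $\mu_m\ge 4L_r^2\mu_{m-1}\ge 144\,\mu_{m-1}$, so $\mu_{m-1}\ge 144^{m-1}\mu_0$. Since $T_{\mathrm{total}}\ge T_{M}\ge\alpha\mu_{M-1}$, we get $M=\mathcal O(\log T_{\mathrm{total}})$. Multiplying by the per-epoch cost gives $\mathcal O(d^3\log^2T_{\mathrm{total}})$. The truncated last epoch is handled by the same per-step bound.

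For the infidelity statement, a time $t>T_0$ lies in some epoch $m$, during which the current estimate is $C_m$. On $\mathcal E_M$, $1-F=\tfrac12\|\rho-C_m\|_F^2\le L_r^2\beta_{\max}/(2\mu_{m-1})$. It remains to relate $\mu_{m-1}$ to $t$. This is the step I expect to need the most care. By Theorem~\ref{thm:isotropic_growth}, $\mu_k^2\le\mu_{k-1}^2+3\omega_kT_k$, which gives $\mu_k=\mathcal O(\sqrt{\alpha/\beta_{\mathrm{var}}}\,\mu_{k-1})$, so consecutive epochs are within a constant ratio. Hence $t\le T_0+2Nd_{\mathrm{tan}}\sum_{k\le m}T_k=\mathcal O(T_0+N d\alpha\mu_{m-1})$, because the geometric sum is dominated by its last term and $\mu_m$ is at most a constant multiple of $\mu_{m-1}$. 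Inverting, $1/\mu_{m-1}=\mathcal O(Nd\alpha/t)$ for $t\gtrsim T_0$, and therefore $1-F=\mathcal O(d\cdot Nd\alpha/t)=\mathcal O(d^3\log T_{\mathrm{total}}/t)$. The failure event adds $\mathcal O(T_{\mathrm{total}}^{-2})\le\mathcal O(1/t)$, and the trivial bound $1-F\le 1$ supplies the minimum with $1$.

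The main obstacle is bookkeeping rather than anything conceptual: checking that the constant ratio $\mu_m/\mu_{m-1}$ (which is polynomial in $d$, through $\alpha/\beta_{\mathrm{var}}$) does not add extra factors of $d$ to the infidelity bound. If it does, I would instead use the within-epoch bound $\lambda_{m,s}^2\ge\mu_{m-1}^2+2c_0^2\omega_m s$ together with the exact per-epoch sample count.
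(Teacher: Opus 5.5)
Your proposal is correct and follows the paper's proof essentially step for step. Both arguments split on $\mathcal E_M$, bound each action by $\mathcal O(\beta_{\max}/\mu_{m-1})$ so that $\mu_{m-1}$ cancels against $T_m$, use geometric growth of $\mu_m$ to get $M=\mathcal O(\log T_{\mathrm{total}})$, and invert the geometric sum in $t$ for the infidelity bound. Your cruder bound $1/\lambda_{m,s-1}\le 1/\mu_{m-1}$ is harmless, since the $L_r^2\beta_{\max}/\mu_{m-1}$ term already dominates the paper's finer integral estimate.

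The obstacle you flag is not a real one. The sum $\sum_{k\le m}T_k$ involves only $\mu_0,\dots,\mu_{m-1}$, so no upper bound on $\mu_m/\mu_{m-1}$ is needed at all. In any case $\alpha/\beta_{\mathrm{var}}\le 8L_r^4/c_0^2+1$ is a universal constant, not a power of $d$.
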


\begin{proof}
    We first control the probability of the global success event. Let
    \begin{equation}
        \mathcal{E}_M
        =
        \bigcap_{m=0}^{M}\mathcal{G}_m.
    \end{equation}
    By Proposition~\ref{prop:inductive_invariant}, for every $m\geq 1$,
    \begin{equation}
        \Pr\!\left(
        \mathcal{G}_m\mid \mathcal{E}_{m-1}
        \right)
        \geq
        1-\delta.
    \end{equation}
    Also, the warm-up succeeds with probability at least $1-\delta_{\mathrm{w}}$, that is,
    \begin{equation}
        \Pr(\mathcal{G}_0)
        \geq
        1-\delta_{\mathrm{w}}.
    \end{equation}
    Therefore, by the chain rule,
    \begin{align}
        \Pr(\mathcal{E}_M)
        &=
        \Pr(\mathcal{G}_0)
        \prod_{m=1}^{M}
        \Pr\!\left(
        \mathcal{G}_m\mid \mathcal{E}_{m-1}
        \right)
        \geq
        (1-\delta_{\mathrm{w}})(1-\delta)^M .
    \end{align}
    Using Bernoulli's inequality $(1-\delta)^M \geq 1-M\delta$, and the choice
    \begin{equation}
        \delta
        =
        \delta_{\mathrm{w}}
        =
        \frac{1}{T_{\mathrm{total}}^2},
    \end{equation}
    we get
    \begin{align}
        \Pr(\mathcal{E}_M^c)
        &\leq
        \delta_{\mathrm{w}}
        +
        M\delta
        \leq
        \frac{M+1}{T_{\mathrm{total}}^2}.
    \end{align}
    Since $M=\mathcal{O}(\log T_{\mathrm{total}})$, and in particular $M+1\leq T_{\mathrm{total}}$ for the non-trivial range of parameters, we will use
    \begin{equation}\label{eq:global_failure_prob}
        \Pr(\mathcal{E}_M^c)
        \leq
        \frac{1}{T_{\mathrm{total}}}.
    \end{equation}

    We now bound the regret. The regret of the warm-up phase is at most its number of samples, since the instantaneous regret is bounded by $1$. Hence
    \begin{equation}
        \operatorname{Regret}_{\mathrm{warm}}
        \leq
        T_0
        =
        \mathcal{O}\!\left(
        d^2\log(T_{\mathrm{total}})
        \right).
    \end{equation}
    On the failure event $\mathcal{E}_M^c$, we use the trivial bound
    \begin{equation}
        \operatorname{Regret}(T_{\mathrm{total}})
        \leq
        T_{\mathrm{total}}.
    \end{equation}
    Thus, by~\eqref{eq:global_failure_prob},
    \begin{equation}
        \mathbb{E}_{\rho}
        \left[
        \operatorname{Regret}(T_{\mathrm{total}}) 
         \mathbbm{1}_{\mathcal{E}_M^c}
        \right]
        \leq
        1.
    \end{equation}

    It remains to bound the regret on $\mathcal{E}_M$. For pure states,
    \begin{equation}
        1-\operatorname{Tr}(\rho A)
        =
        \frac{1}{2}\|\rho-A\|_F^2.
    \end{equation}
    Therefore the regret accumulated during the epochs can be written as
    \begin{align}
        \operatorname{Regret}_{\mathrm{epochs}}
        &=
        \frac{1}{2}
        \sum_{m=1}^{M}
        \sum_{s=1}^{T_m}
        \sum_{j=1}^{N}
        \sum_{i=1}^{d_{\mathrm{tan}}}
        \left(
        \|\rho-A^+_{m,s,i}\|_F^2
        +
        \|\rho-A^-_{m,s,i}\|_F^2
        \right).
    \end{align}
    On $\mathcal{E}_M$, Proposition~\ref{prop:inductive_invariant} gives
    \begin{equation}
        \|\rho-C_m\|_F^2
        \leq
        \frac{L_r^2\beta_{\max}}{\mu_{m-1}}.
    \end{equation}
    Moreover, the actions are obtained by retracting (recall~\eqref{eq:qudit_retract}) from $C_m$~\eqref{eq:action_from_v} with step size
    \begin{equation}
        \tau_{m,s}
        =
        \frac{1}{\sqrt{\lambda_{m,s-1}}},
    \end{equation}
    and hence
    \begin{align}
        \|C_m-A^\pm_{m,s,i}\|_F^2
        &=
        2\sin^2
        \left(
        \frac{\tau_{m,s}}{\sqrt{2}}
        \right)
        \leq
        \tau_{m,s}^2
        =
        \frac{1}{\lambda_{m,s-1}}.
    \end{align}
    Using
    \begin{equation}
        \|\rho-A^\pm_{m,s,i}\|_F^2
        \leq
        2\|\rho-C_m\|_F^2
        +
        2\|C_m-A^\pm_{m,s,i}\|_F^2,
    \end{equation}
     the symmetry of the two actions and combining the above bounds, we obtain
    \begin{align}
        \operatorname{Regret}_{\mathrm{epochs}}
        &\leq
        Nd_{\mathrm{tan}}
        \sum_{m=1}^{M}
        \sum_{s=1}^{T_m}
        \left(
        2\|\rho-C_m\|_F^2
        +
        2\|C_m-A^\pm_{m,s,i}\|_F^2
        \right)\\
        &\leq
        2Nd_{\mathrm{tan}}
        \sum_{m=1}^{M}
        \sum_{s=1}^{T_m}
        \left(
        \frac{L_r^2\beta_{\max}}{\mu_{m-1}}
        +
        \frac{1}{\lambda_{m,s-1}}
        \right).
    \end{align}

    We bound the two terms inside the last sum separately. The first term is constant during epoch $m$. Since
    \begin{equation}
        T_m
        =
        \lceil \alpha\mu_{m-1}\rceil
        \leq
        \alpha\mu_{m-1}+1,
    \end{equation}
    and $\mu_{m-1}\geq 1$, we have
    \begin{align}
        \sum_{s=1}^{T_m}
        \frac{L_r^2\beta_{\max}}{\mu_{m-1}}
        =
        \frac{L_r^2\beta_{\max}T_m}{\mu_{m-1}}
        \leq
        L_r^2\beta_{\max}(\alpha+1).
    \end{align}

    For the second term, Theorem~\ref{thm:isotropic_growth} gives
    \begin{equation}
        \lambda_{m,s-1}
        \geq
        \sqrt{
        \mu_{m-1}^2
        +
        2c_0^2\omega_m(s-1)
        }.
    \end{equation}
    Hence
    \begin{align}
        \sum_{s=1}^{T_m}
        \frac{1}{\lambda_{m,s-1}}
        &\leq
        \frac{1}{\mu_{m-1}}
        +
        \int_0^{T_m}
        \frac{dx}{
        \sqrt{
        \mu_{m-1}^2
        +
        2c_0^2\omega_m x
        }
        }\\
        &=
        \frac{1}{\mu_{m-1}}
        +
        \frac{1}{c_0^2\omega_m}
        \left(
        \sqrt{
        \mu_{m-1}^2
        +
        2c_0^2\omega_m T_m
        }
        -
        \mu_{m-1}
        \right).
    \end{align}
    By the lower bound in Theorem~\ref{thm:isotropic_growth},
    \begin{equation}
        \sqrt{
        \mu_{m-1}^2
        +
        2c_0^2\omega_m T_m
        }
        \leq
        \mu_m.
    \end{equation}
    Using
    \begin{equation}
        \omega_m
        =
        \frac{\mu_{m-1}}{\beta_{\mathrm{var}}},
    \end{equation}
    we get
    \begin{align}
        \sum_{s=1}^{T_m}
        \frac{1}{\lambda_{m,s-1}}
        &\leq
        \frac{1}{\mu_{m-1}}
        +
        \frac{\beta_{\mathrm{var}}}{c_0^2}
        \left(
        \frac{\mu_m}{\mu_{m-1}}-1
        \right).
    \end{align}
    It remains to control the ratio $\mu_m/\mu_{m-1}$. The upper bound in Theorem~\ref{thm:isotropic_growth} gives
    \begin{equation}
        \mu_m^2
        \leq
        \mu_{m-1}^2
        +
        3\omega_mT_m.
    \end{equation}
    Since $T_m\leq \alpha\mu_{m-1}+1$, this implies
    \begin{align}
        \frac{\mu_m^2}{\mu_{m-1}^2}
        &\leq
        1
        +
        \frac{3\mu_{m-1}}{\beta_{\mathrm{var}}\mu_{m-1}^2}
        (\alpha\mu_{m-1}+1)\\
        &=
        1
        +
        \frac{3\alpha}{\beta_{\mathrm{var}}}
        +
        \frac{3}{\beta_{\mathrm{var}}\mu_{m-1}}.
    \end{align}
    Since $\mu_{m-1}\geq\mu_0$, define
    \begin{equation}
        \kappa
        :=
        \sqrt{
        1
        +
        \frac{3\alpha}{\beta_{\mathrm{var}}}
        +
        \frac{3}{\beta_{\mathrm{var}}\mu_0}
        }.
    \end{equation}
    Then
    \begin{equation}
        \frac{\mu_m}{\mu_{m-1}}
        \leq
        \kappa
    \end{equation}
    for all epochs. Hence
    \begin{align}
        \sum_{s=1}^{T_m}
        \frac{1}{\lambda_{m,s-1}}
        &\leq
        \frac{1}{\mu_0}
        +
        \frac{\beta_{\mathrm{var}}}{c_0^2}
        (\kappa-1).
    \end{align}
    The important point is that this bound does not depend on $T_m$.

    Combining the two estimates, the regret accumulated in a single epoch on $\mathcal{E}_M$ is bounded by
    \begin{align}
        2Nd_{\mathrm{tan}}
        \sum_{s=1}^{T_m}
        \left(
        \frac{L_r^2\beta_{\max}}{\mu_{m-1}}
        +
        \frac{1}{\lambda_{m,s-1}}
        \right)
        &\leq
        2Nd_{\mathrm{tan}}
        \left(
        L_r^2\beta_{\max}(\alpha+1)
        +
        \frac{1}{\mu_0}
        +
        \frac{\beta_{\mathrm{var}}}{c_0^2}
        (\kappa-1)
        \right).
    \end{align}
    Therefore
    \begin{align}
        \operatorname{Regret}_{\mathrm{epochs}}
        &\leq
        2Nd_{\mathrm{tan}}M
        \left(
        L_r^2\beta_{\max}(\alpha+1)
        +
        \frac{1}{\mu_0}
        +
        \frac{\beta_{\mathrm{var}}}{c_0^2}
        (\kappa-1)
        \right)
    \end{align}
    on the event $\mathcal{E}_M$.

    We now count the number of epochs. From the lower bound in Theorem~\ref{thm:isotropic_growth} and the choice of $T_m$, we have
    \begin{align}
        \mu_m^2
        &\geq
        \mu_{m-1}^2
        +
        16L_r^4\mu_{m-1}^2
        =
        (1+16L_r^4)\mu_{m-1}^2.
    \end{align}
    Thus
    \begin{equation}
        \mu_m
        \geq
        q\mu_{m-1},
        \qquad
        q:=
        \sqrt{1+16L_r^4}>1.
    \end{equation}
    Hence the epoch lengths grow geometrically. Since $T_M\leq T_{\mathrm{total}}$, this gives
    \begin{equation}
        M
        =
        \mathcal{O}\!\left(
        \log(T_{\mathrm{total}})
        \right).
    \end{equation}

    Finally, we use the dimensional dependence of the parameters:
    \begin{equation}
        N
        =
        \mathcal{O}\!\left(
        \log(T_{\mathrm{total}})
        \right),
        \qquad
        d_{\mathrm{tan}}
        =
        2(d-1)
        =
        \mathcal{O}(d),
    \end{equation}
    and
    \begin{equation}
        \beta_{\max}
        =
        \mathcal{O}(d),
        \qquad
        \beta_{\mathrm{var}}
        =
        \mathcal{O}(d),
        \qquad
        \alpha
        =
        \mathcal{O}(d).
    \end{equation}
    Moreover, $\kappa=\mathcal{O}(1)$ because $\alpha/\beta_{\mathrm{var}}=\mathcal{O}(1)$ and $\mu_0\geq 1$. Therefore,
    \begin{align}
        \operatorname{Regret}_{\mathrm{epochs}}
        &=
        \mathcal{O}\!\left(
        d^3\log^2(T_{\mathrm{total}})
        \right)
    \end{align}
    on $\mathcal{E}_M$. Adding the warm-up contribution and the failure-event contribution gives
    \begin{align}
        \mathbb{E}_{\rho}
        \left[
        \operatorname{Regret}(T_{\mathrm{total}})
        \right]
        &\leq
        \mathcal{O}\!\left(
        d^2\log(T_{\mathrm{total}})
        \right)
        +
        \mathcal{O}\!\left(
        d^3\log^2(T_{\mathrm{total}})
        \right)
        +
        \mathcal{O}(1)\\
        &=
        \mathcal{O}\!\left(
        d^3\log^2(T_{\mathrm{total}})
        \right).
    \end{align}

    We finish with the online estimation guarantee. During epoch $m$, the algorithm can output the current base state
    \begin{equation}
        \widehat{\rho}_t
        =
        C_m.
    \end{equation}
    On $\mathcal{E}_M$, Proposition~\ref{prop:inductive_invariant} gives
    \begin{align}
        1-F(\rho,\widehat{\rho}_t)
        &=
        1-\operatorname{Tr}(\rho C_m)
        =
        \frac{1}{2}\|\rho-C_m\|_F^2
        \leq
        \frac{L_r^2\beta_{\max}}{2\mu_{m-1}}.
    \end{align}
    We now relate $\mu_{m-1}$ to the global time $t$. Let
    \begin{equation}
        \mathcal{T}_m
        :=
        T_0
        +
        2Nd_{\mathrm{tan}}\sum_{\ell=1}^{m}T_\ell
    \end{equation}
    be the number of samples used up to the end of epoch $m$. If $t\in(\mathcal{T}_{m-1},\mathcal{T}_m]$, then $t\leq \mathcal{T}_m$. Since $T_\ell\leq(\alpha+1)\mu_{\ell-1}$ and $\mu_\ell\geq q\mu_{\ell-1}$, we have
    \begin{align}
        \sum_{\ell=1}^{m}T_\ell
        &\leq
        (\alpha+1)
        \sum_{\ell=1}^{m}
        \mu_{\ell-1}
        \leq
        (\alpha+1)
        \left(
        \frac{q}{q-1}
        \right)
        \mu_{m-1}.
    \end{align}
    Hence
    \begin{equation}
        t
        \leq
        T_0
        +
        2Nd_{\mathrm{tan}}(\alpha+1)
        \left(
        \frac{q}{q-1}
        \right)
        \mu_{m-1}.
    \end{equation}
    If $t\leq 2T_0$, then the trivial bound
    \begin{equation}
        1-F(\rho,\widehat{\rho}_t)
        \leq
        1
    \end{equation}
    is enough. If $t>2T_0$, then the previous equation implies
    \begin{equation}
        \mu_{m-1}
        \geq
        \frac{t}{
        4Nd_{\mathrm{tan}}(\alpha+1)
        }
        \left(
        \frac{q-1}{q}
        \right),
    \end{equation}
    where we absorbed the factor coming from $T_0\leq t/2$. Therefore, on $\mathcal{E}_M$,
    \begin{align}
        1-F(\rho,\widehat{\rho}_t)
        &\leq
        \frac{
        2L_r^2\beta_{\max}Nd_{\mathrm{tan}}(\alpha+1)
        }{t}
        \left(
        \frac{q}{q-1}
        \right).
    \end{align}
    Since $N=\mathcal{O}(\log T_{\mathrm{total}})$, $d_{\mathrm{tan}}=\mathcal{O}(d)$, $\alpha=\mathcal{O}(d)$ and $\beta_{\max}=\mathcal{O}(d)$, this gives
    \begin{equation}
        1-F(\rho,\widehat{\rho}_t)
        =
        \mathcal{O}\!\left(
        \frac{d^3\log(T_{\mathrm{total}})}{t}
        \right)
    \end{equation}
    on the success event, whenever $t>2T_0$. Combining this with the trivial bound for $t\leq 2T_0$ gives the bound with the minimum.

    Finally, on the failure event we use again
    \begin{equation}
        1-F(\rho,\widehat{\rho}_t)
        \leq
        1
    \end{equation}
    and~\eqref{eq:global_failure_prob}. Since $t\leq T_{\mathrm{total}}$, the failure contribution is at most $1/T_{\mathrm{total}}\leq 1/t$, which is absorbed in the same bound. Therefore,
    \begin{equation}
        \mathbb{E}_{\rho}
        \left[
        1-F(\rho,\widehat{\rho}_t)
        \right]
        =
        \mathcal{O}\!\left(
        \min\left\{
        1,
        \frac{d^3\log(T_{\mathrm{total}})}{t}
        \right\}
        \right).
    \end{equation}
\end{proof}

\section{Discussion and open problems}\label{sec:discussion}

We have shown that the low-regret tomography phenomenon is not restricted to
qubits. Although the Bloch-sphere reduction breaks down in higher dimension, the
pure-state manifold still has enough structure to support an adaptive protocol
with polylogarithmic cumulative regret. The main idea is to replace the global
linear-bandit picture by an intrinsic local one: each epoch constructs a linear
model in the tangent space of the current base state, uses symmetric retracted
measurements to cancel curvature terms, and transfers statistical precision
across changing tangent spaces through a hot-start regularization. Together with
variance-adaptive Median-of-Means regression, this gives a robust way to exploit
the vanishing-variance structure of pure-state measurements directly on
$\mathbb{CP}^{d-1}$.

The most immediate open problem is the dimensional dependence. The present
analysis gives
\begin{align}
    \mathbb E[\operatorname{Regret}(T)]
    =
    \mathcal O(d^3\log^2 T),
\end{align}
and the corresponding online infidelity guarantee scales as
\begin{align}
    \mathbb E[1-F(\rho,\widehat\rho_t)]
    =
    \mathcal O\!\left(\frac{d^3\log T}{t}\right).
\end{align}
We do not expect the factor $d^3$ to be tight. A natural goal is to tighten
the analysis, or design a sharper protocol, so that the regret matches the
optimal dimensional dependence suggested by the regret lower bound $\Omega (d\log T)$~\cite{lumbreras2024learning}, up to
logarithmic factors. Achieving this would also improve the online infidelity
bound to the optimal dimensional scaling for pure-state tomography that for single copy adaptive measurements is $\widetilde{\Theta} (d/t)$~\cite{chen_adaptive}. In this
sense, the remaining gap in the regret bound is directly tied to whether one can
obtain fully sample-optimal adaptive tomography while keeping the cumulative
disturbance essentially negligible.

A second direction is to understand the scope of the geometric tools introduced
here. Tangent-space linearization, symmetric retractions, and hot-started local
estimators are not specific to the final regret calculation; they provide a
general way to build adaptive estimators on curved quantum statistical models
while retaining linear least-squares structure locally. We expect these ideas to
be useful in other adaptive quantum learning and tomography problems where the
parameter space is a nonlinear manifold embedded in a larger linear space. This
includes, for example, adaptive tomography for structured quantum states,
learning problems with low-rank or manifold constraints.

Finally, it remains open to identify further operational interpretations and
applications of the regret objective itself. In the pure-state setting, $1-\operatorname{Tr}(\rho A_t)$
is simultaneously the instantaneous infidelity between the unknown state and the
chosen measurement direction, one half of the squared Frobenius distance, and a
quantity controlling the expected post-measurement disturbance of the consumed
copy. In state-agnostic work extraction it also appears as the cumulative
dissipated energy caused by suboptimal control directions~\cite{lumbreras25dissipation}. An interesting
question is whether the same regret objective arises in other
information-processing tasks, for instance in adaptive verification,
state-discrimination protocols, feedback control, or online calibration of
quantum devices. Such connections would further clarify when low-regret learning
is not only a statistical objective, but also the right operational notion of
gentle adaptive quantum tomography.

\paragraph*{Acknowledgements} JL is supported by the National Research Foundation through the NRF Investigatorship on Quantum-Enhanced Agents (Grant No. NRF-NRFI09-0010), the RIE25 Japan-Singapore Joint Call on Quantum R25J4IR111, the Singapore Ministry of Education Tier 1 Grant RT4/23 and RG91/25 (S) and the National Quantum Office, hosted in A*STAR, under its Centre for Quantum Technologies Funding Initiative (S24Q2d0009). MT is supported by the NRF Investigatorship award (NRF-NRFI10-2024-0006).

\bibliographystyle{ultimate}
\bibliography{biblio_purestatebandit}

@article{lumbreras22bandit,
  title = {Multi-armed quantum bandits: {E}xploration versus exploitation when learning properties of quantum states},
  author = {Lumbreras, Josep and Haapasalo, Erkka and Tomamichel, Marco},
  journal = {{Quantum}},
  publisher = {{Verein zur F{\"{o}}rderung des Open Access Publizierens in den Quantenwissenschaften}},
  volume = {6},
  pages = {749},
  year = {2022}
}

@inproceedings{heavy_tail_linear_optimal,
author = {Shao, Han and Yu, Xiaotian and King, Irwin and Lyu, Michael R.},
title = {Almost optimal algorithms for linear stochastic bandits with heavy-tailed payoffs},
year = {2018},
publisher = {Curran Associates Inc.},
address = {Red Hook, NY, USA}, 
booktitle = {Proceedings of the 32nd International Conference on Neural Information Processing Systems},
pages = {8430–8439},
numpages = {10},
location = {Montr\'{e}al, Canada},
series = {NIPS'18}
}

@article{lumbreras2024learning,
  title={Learning pure quantum states (almost) without regret},
  author={Lumbreras, Josep and Terekhov, Mikhail and Tomamichel, Marco},
  journal={arXiv preprint arXiv:2406.18370},
  year={2024}
}

@InProceedings{pmlr-v247-lumbreras24a,
  title = 	 {Linear bandits with polylogarithmic minimax regret},
  author =       {Lumbreras, Josep and Tomamichel, Marco},
  booktitle = 	 {Proceedings of Thirty Seventh Conference on Learning Theory},
  pages = 	 {3644--3682},
  year = 	 {2024},
  volume = 	 {247},
  series = 	 {Proceedings of Machine Learning Research},
  month = 	 {30 Jun--03 Jul},
  publisher =    {PMLR}
}

@INPROCEEDINGS {chen_adaptive,
author = {S. Chen and B. Huang and J. Li and A. Liu and M. Sellke},
booktitle = {2023 IEEE 64th Annual Symposium on Foundations of Computer Science (FOCS)},
title = {When Does Adaptivity Help for Quantum State Learning?},
year = {2023},
volume = {},
issn = {},
pages = {391-404},
doi = {10.1109/FOCS57990.2023.00029},
url = {https://doi.ieeecomputersociety.org/10.1109/FOCS57990.2023.00029},
publisher = {IEEE Computer Society},
address = {Los Alamitos, CA, USA},
month = {nov}
}

@inproceedings{lin1,
  title={Stochastic Linear Optimization under Bandit Feedback.},
  author={Dani, Varsha and Hayes, Thomas P and Kakade, Sham M},
  booktitle={Proceedings of the 21st Conference on Learning Theory},
  volume={2},
  pages={3},
  year={2008}
}

@article{lin2,
author = {P. Rusmevichientong and J. N. Tsitsiklis},
title = {Linearly Parameterized Bandits},
journal = {Mathematics of Operations Research},
volume = {35},
number = {2},
pages = {395-411},
year = {2010},
doi = {10.1287/moor.1100.0446}
}

@article{kueng2017low,
  title={Low rank matrix recovery from rank one measurements},
  author={Kueng, Richard and Rauhut, Holger and Terstiege, Ulrich},
  journal={Applied and Computational Harmonic Analysis},
  volume={42},
  number={1},
  pages={88--116},
  year={2017},
  publisher={Elsevier},
  doi = {https://doi.org/10.1016/j.acha.2015.07.007}  
}

@article{guctua2020fast,
  title={Fast state tomography with optimal error bounds},
  author={Gu{\c{t}}{\u{a}}, Madalin and Kahn, Jonas and Kueng, Richard and Tropp, Joel A},
  journal={Journal of Physics A: Mathematical and Theoretical},
  volume={53},
  number={20},
  pages={204001},
  year={2020},
  publisher={IOP Publishing},
  doi = {10.1088/1751-8121/ab8111}  
}

@inproceedings{lin3,
 author = {Abbasi-Yadkori, Yasin and P\'{a}l, D\'{a}vid and Szepesv\'{a}ri, Csaba},
 booktitle = {Advances in Neural Information Processing Systems},
 pages = {},
 publisher = {Curran Associates, Inc.},
 title = {Improved Algorithms for Linear Stochastic Bandits},
 volume = {24},
 year = {2011}
}

@article{lumbreras25dissipation,
  title={Quantum state-agnostic work extraction (almost) without dissipation},
  author={Lumbreras, Josep and Huang, Ruo Cheng and Hu, Yanglin and Gu, Mile and Tomamichel, Marco},
  journal={arXiv preprint arXiv:2505.09456},
  year={2025}
}

@book{bengtsson2017geometry,
  title={Geometry of quantum states: an introduction to quantum entanglement},
  author={Bengtsson, Ingemar and {\.Z}yczkowski, Karol},
  year={2017},
  publisher={Cambridge university press}
}

@book{absil2008optimization,
  title={Optimization algorithms on matrix manifolds},
  author={Absil, P-A and Mahony, Robert and Sepulchre, Rodolphe},
  year={2008},
  publisher={Princeton University Press}
}

\end{document}